\documentclass[12pt]{article} %
\usepackage{amsmath,amssymb,amsthm}
\textwidth=14cm \textheight=20cm %
\numberwithin{equation}{section} %
\theoremstyle{plain}
   \newtheorem{thm}{\hspace{\parindent}{\sc Theorem}}[section] %
   \newtheorem{pro}[thm]{\hspace{\parindent}Proposition}
   
   \newtheorem{lem}[thm]{\hspace{\parindent}Lemma}
\theoremstyle{remark} %
   \newtheorem{rem}{\hspace{\parindent}Remark}[section] %
\newcommand{\bA}{\mathbf{A}}

\newcommand{\bq}{\mathbf{q}}
\newcommand{\bR}{\mathbb{R}}
\newcommand{\bx}{\mathbf{x}}

\newcommand{\calF}{\mathcal{F}}
\newcommand{\Cts}{\mathcal{C}}

\newcommand{\Cspace}{C^{\infty}}
\newcommand{\Czerospace}{C^{\infty}_0(\bR^d)}
\newcommand{\domain}{[0,T]\times \bR^d}

\newcommand{\kdelta}{K_{\Delta}}
\newcommand{\ksdelta}{K_{s\Delta}}
\newcommand{\limepsilon}{\lim_{\epsilon\rightarrow 0+}}
\newcommand{\qdelta}{q_{\Delta}}
\newcommand{\qts}{q^{t,s}_{x,y}}

\newcommand{\Sspace}{{\cal S}}
\newcommand{\ts}{t,s}

\pagestyle{plain} %
\def\dbar{{\mathchar'26\mkern-12mud}}
\begin{document}
\title{On the mathematical formulation of the restricted Feynman path integrals through broken line paths}
\author{Wataru Ichinose\thanks{This work was supported by JSPS KAKENHI Grant Number JP18K03361.\endgraf 2020 Mathematics Subject Classification. Primary 46T12: Secondary 81P15, 81Q05. }} %
\date{}
\maketitle %
%
%
%
\begin{abstract}
The restricted Feynman path integrals (RFPIs) have been proposed to study continuous quantum measurements in physics. The RFPIs are heuristically determined in terms of the usual probability amplitude multiplied by weight for each path, which contains information about the results and the resolution of the measuring device. In the present paper we will consider the RFPIs particularly for the position measurements and will prove rigorously that these RFPIs are well defined in the $L^{2}$ space and are the solutions to the non-self-adjoint Schr\"odinger equations. Our results in the present paper give a generalization of the results on the usual Feynman path integrals for the Schr\"odinger equations.
 Furthermore, our results are extended to quantum spin systems.
\end{abstract}
%
%
%
%

\section{Introduction}%
Let $T > 0$ be an arbitrary constant, $0 \leq t \leq T$ and 
$x = (x_1,\dotsc,x_d)\in \bR^d$. First, we consider a one-particle system with mass $m > 0$ and charge $q_c \in \bR$ moving in $\bR^d$ with  electric strength $E(t,x) = (E_1,
\dots, E_d) \in \bR^d$ and a  magnetic strength tensor $B(t,x) = (B_{jk}(t,x))_{1\leq j < k \leq d}
\in \bR^{d(d-1)/2}$. 
Let   $(V(t,x),A(t,x)) = (V,A_1,
\dots,A_d)  \in \bR^{d+1}$ be an electromagnetic potential, i.e.
\begin{align} \label{1.1}
     & E = -\frac{\partial A}{\partial t} - \frac{\partial V}{\partial 
x},\notag \\
          & B_{jk} =  \frac{\partial A_k}{\partial x_j}  -\frac{\partial 
A_j}{\partial x_k}
\quad (1 \leq j <  k \leq d),
\end{align}
where  $\partial V/\partial x = (\partial V/\partial x_1,\dots,\partial V/\partial x_d)$. 
Then the Lagrangian function and the classical action are given by
\begin{equation} \label{1.2}
      {\cal L}(t,x,\dot{x})=  \frac{m}{2}|\dot{x}|^2 + q_c\dot{x}\cdot A(t,x) - q_cV(t,x),
\   \dot{x}\in \bR^d
\end{equation}
and 
\begin{equation} \label{1.3}
   S(t,s;q) = \int_s^t {\cal L}(\theta,q(\theta),\dot{q}(\theta))d\theta,\quad \dot{q}(\theta) = \frac{dq(\theta)}{d\theta}
\end{equation}
for a path $q(\theta) \in \bR^d\ (s \leq \theta \leq t)$, respectively.
The corresponding Schr\"odinger equation is defined by
\begin{align} \label{1.4}
& i\hbar \frac{\partial u}{\partial t}(t)  = H(t)u(t)\notag\\ 
& := \left[ \frac{1}{2m}\sum_{j=1}^d
      \left(\frac{\hbar}{i}\frac{\partial}{\partial x_j} - q_cA_j(t,x)\right)^2 + q_cV(t,x)\right]u(t),
\end{align}
where $\hbar$ is the Planck constant.
Throughout this paper we always consider solutions to the Schr\"odinger equations in the sense of distribution. Let $L^2 = L^2(\bR^d)$ denote the space of all square integrable functions in
$\bR^d$ with the inner
product $(f,g) := \int f(x)g(x)^*dx$ and the norm $\Vert f\Vert$, where $g(x)^*$ denotes the complex conjugate of $g(x)$. 
\par
Consider a continuous quantum   measurement of the position of the particle in the time interval $[0,T]$. Let $\{a(t) \in \bR^d;0 \leq t \leq T\}$ be its result and $\Delta a > 0$ its resolution or error  of the measuring device. The measurement gives a change of the probability amplitude of the particle, called wave-function reduction (cf. \S 17.5 of \cite{Greiner 1993} and \S 1.4 of \cite{Sakurai}). Let $f \in L^2$ be a probability amplitude of the particle at an initial time $t=0$.  Then, if we follow Feynman's postulates I and II on p. 371 of \cite{Feynman 1948}, the probability amplitude in the continuous measurement is heuristically given by the `` sum " of $e^{i\hbar^{-1}S(t,0;q)}f(q(0))$ over a set $\Gamma(t,x;\Delta a)$ of all paths $q$ satisfying $q(t) = x$ and $|q(\theta) - a(\theta)| \leq \Delta a$ for all $\theta \in [0,t]$, i.e.
\begin{equation} \label{1.4.2}
     \int_{\Gamma(t,x;\Delta a)} e^{i\hbar^{-1}S(t,0;q)}f(q(0)){\cal D}q.
\end{equation}
An alternative Feynman path integral description has been proposed by Mensky in \S 4.2 of \cite{Mensky 1993} and \S 5.1.3 of \cite{Mensky 2000}, written formally as
\begin{equation} \label{1.4.3}
     \int_{\Gamma(t,x)} e^{i\hbar^{-1}S(t,0;q) - c\int_{0}^{t}|q(\theta)-a(\theta)|^{2}d\theta/(\Delta a)^{2}}f(q(0)){\cal D}q
\end{equation}
with a constant $c > 0$ , where $\Gamma(t,x) := \Gamma(t,x;\infty)$.
See also \S 10.5.4 of \cite{Albeverio et all}, \cite{Dorofeyev}, \S 3.2 of \cite{Feynman-Hibbs}, \S 5.1 of \cite{Mazzucchi} and \cite{Mensky 2003}.  Both of \eqref{1.4.2} and \eqref{1.4.3} are called the restricted Feynman path integrals (RFPIs).
\par
	Our purpose in the present paper is to give a rigorous meaning in the $L^{2}$ space to each of  \eqref{1.4.3} and a more general formula with a weight function $W(t,x) \in \bR$ replacing $c|x - a(t)|^{2}/(\Delta a)^{2}$.  Furthermore, we wil show that each of \eqref{1.4.3} and the  more general forumula stated above is the solution to the non-self-adjoint Schr\"odinger equation with $f$ at $t = 0$.
\par
	As far as the author knows, we have been able to give a rigorous meaning  to \eqref{1.4.3} only for $A = 0$ and $V = C|x|^{2}$ with a constant $C \in \bR$, where we can directly calculate \eqref{1.4.3} by using Gaussina integrals (cf. \S 4.4 and \S 5.4 of \cite{Mensky 1993}).
	\par
	We also note that there is another approach to continuous quantum  position measurements of the particle.  We begin by  considering a sequence of $n$ instantaneous position measurements separated by a time $\Delta t$ and then, determine the evolution of the measured system in the continuous limits $n \to \infty$ and $\Delta t \to 0$ (cf. \cite{Caves 1986, Caves 1987-1,Caves 1987-2}, Chapter 3 in \cite{Jacobs}, Chapter 2 in \cite{Mensky 1993} and Chapter 2 in \cite{Mensky 2000}). 
\par
	Let $W(t,x)$ be a weight function and define the effective Lagrangian function under the measurement by 
\begin{equation} \label{1.5}
      {\cal L}_w(t,x,\dot{x}) =  {\cal L}(t,x,\dot{x}) + i\hbar W(t,x)
\end{equation}
and  the effective classical action by
\begin{align} \label{1.6}
   S_w(t,s;q) & =   \int_s^t {\cal L}_w(\theta,q(\theta),\dot{q}(\theta))d\theta \notag \\
   & = S(t,s;q) + i\hbar\int_{s}^{t}W(\theta,q(\theta))d\theta
\end{align}
for a path $q(\theta) \in \bR^d$ as in \S 4.2 of \cite{Mensky 1993}.  In the present paper we will prove for $W = c|x-a(t)|^{2}/(\Delta a)^{2}$ and more general weight functions that the RFPIs 
\begin{equation} \label{1.7}
 K(t,0)f := \int_{\Gamma(t,x)} e^{i\hbar^{-1}S_w(t,0;q)}f(q(0)){\cal D}q
\end{equation}
are well defined in $L^{2}$ for $f \in L^{2}$. Furthermore, we will prove that $K(t,0)f$ satisfy
the non-self-adjoint Schr\"odinger equations, derived  from \eqref{1.5} through the Legendre transformation,
\begin{align} \label{1.8}
& i\hbar \frac{\partial u}{\partial t}(t)  = H_w(t)u(t)\notag \\
& := \left[ \frac{1}{2m}\sum_{j=1}^d
      \left(\frac{\hbar}{i}\frac{\partial}{\partial x_j} - q_cA_j(t,x)\right)^2 + q_cV(t,x) -i\hbar W(t,x)\right]u(t)
\end{align}
with $u(0) = f$,
which was suggested for \eqref{1.4.3} in \S 4.3.1 of \cite{Mensky 2000}.
\par
	Next, we will generalize the above results  to a one-particle spin system, where all spin components or directions may move separately in $\bR^d$ as in the Stern-Gerlach experiment (cf. Chap. 12 of \cite{Greiner 1993} and \S 1.1 of \cite{Sakurai}).  We generally suppose that  a particle has $l$ spin components (cf. p.12 in \cite{Bethe et all} and  \S 2.2 of \cite{Greiner 1989}) and we consider a  continuous position measurement for all spin components in  $[0,T]$, where $l \geq 0$ is an integer.  Although we don't know the physical meaning precisely, we will study
the effective Lagrangian function given by
\begin{equation} \label{1.9}
\mathcal{L}_{sw}(t,x,\dot{x}) = \mathcal{L}_{w}(t,x,\dot{x}) - \hbar H_s(t,x) + i\hbar W_s(t,x)
 \end{equation}
as a  generalization of \eqref{1.5},
where $\mathcal{L}_{w}(t,x,\dot{x})$ is the Lagrangian function defined by \eqref{1.5}, $H_s(t,x)$ an $l\times l$ Hermitian matrix denoting the spin term and  $W_s(t,x)$ an $l\times l$ Hermitian matrix denoting the weight term acting on the spin components.
The corresponding non-self-adjoint Schr\"odinger equation is written as
\begin{equation}  \label{1.10}
   i\hbar\frac{\partial u}{\partial t}(t) = \bigl[H_w(t)I + \hbar H_{s}(t,x)  - i\hbar W_s(t,x)\bigr]u(t), 
\end{equation}
where $H_w(t)$ is the Hamiltonian operator defined by \eqref{1.8}.  We will prove that the RFPI for \eqref{1.9} can be defined  rigorously in $(L^2(\bR^{d}))^l$ and is the solution to the equation \eqref{1.10}.  It is noted that for $W_s(t,x)$ we  assume 
\begin{equation}  \label{1.11}
  |\partial_{x}^{\alpha} w_{sij}(t,x)| \leq C_{\alpha}, \quad i, j = 1,2,\dots,l
\end{equation}
in $\domain$ for all $\alpha$, where $w_{sij}(t,x)$ denotes the $(i,j)$-component of $W_s(t,x)$.  
\par
%
	Finally, we  consider a quantum spin system consisting of $N$ particles with $l$ spin components each, under a  continuous position measurement for all spin components of all particles in $[0,T]$.
		\par
	We note that if $W(t,x) = 0$ and $W_s(t,x) = 0$ in \eqref{1.5} and \eqref{1.9}, all  results in the present paper give the same results as for the usual Feynman path integrals in \cite{Ichinose 1999, Ichinose 2007, Ichinose 2020}.
\par
     In the present paper  the RFPIs are defined by the time-slicing method in terms of piecewise free moving paths or broken line paths.  This  approach  to the Feynman path integrals are  widely used in the physics literature (cf.  \S 2.4 in \cite{Feynman-Hibbs}, \S 3.2 in \cite{Mensky 1993}, Appendix A3 in \cite{Mensky 2000}, \S 9.1 in \cite{Peskin-Schroeder} and \S 5.1 in \cite{Ryder}).
    \par
 	 We will prove the main theorems in the present paper, following the proofs in \cite{Ichinose 1997, Ichinose 1999, Ichinose 2007, Ichinose 2020}, where the usual Feynman path integrals, i.e. with $W(t,x) = 0$ and $W_s(t,x) = 0$ were studied.  More specifically, we first introduce the fundamental operator $\mathcal{C}(t,s)$ in \S 3, and then prove its stability and consistency.  Combining these results and the results in \cite{Ichinose 2019} concerning the non-self-adjoint Schr\"odinger equations \eqref{1.8} and \eqref{1.10}, we can complete the proofs of  our results.  In particular, we define the RFPIs for the spin system, following \cite{Ichinose 2007}.  We also note that in the  present paper we will use 
  the following delicate result  concerning the $L^2$-boundedness of pseudo-differential operators, which is stated as Theorem 13.13 on p. 322 in \cite{Zworski}. 
  \par
 \vspace{0.5cm}
 T{\sc heorem} 1.A.  {\it Suppose $p(x,\xi,x') \in S^0(\bR^{3d})$, i.e. 
 \begin{equation} \label{1.12}
\sup_{x,\xi,x'}|\partial_{\xi}^{\alpha}\partial_{x}^{\beta}\partial_{x'}^{\gamma}p(x,\xi,x')| \leq C_{\alpha\beta\gamma} < \infty
 \end{equation}
for all multi-indices $\alpha, \beta$ and $\gamma$, where   $\partial_{\xi}^{\alpha}$ denotes   
  $(\partial/\partial \xi_1)^{\alpha_1}
\cdots (\partial/\partial \xi_d)^{\alpha_d}$ for  $\alpha = (\alpha_1,\dots,\alpha_d)$.
  Let $P(X,\hbar D_x,X')$ be the pseudo-differential operator defined by 
 \begin{equation} \label{1.12.2}
\int e^{ix\cdot \xi}\ \dbar\xi \int e^{-ix'\cdot \xi}p(x,\hbar\xi,x')f(x')dx',\quad \dbar\xi = (2\pi)^{-d}d\xi
 \end{equation}
for $f \in \Sspace(\bR^d)$, where $x\cdot \xi = \sum_{j=1}^ d x_j\xi_j $ and $\Sspace(\bR^d)$ denotes the Schwartz space of all rapidly decreasing functions in $\bR^d$.  Then we have
 \begin{equation} \label{1.13}
\Vert P(X,\hbar D_x,X')\Vert_{L^2\to L^2} = \sup_{x,\xi,x'}|p(x,\xi,x')| + O(\hbar),
 \end{equation}
where $\Vert P\Vert_{L^2\to L^2}$ denotes the operator norm from $L^2$ into $L^2$.}
\vspace{0.5cm}
\par
The plan of the present paper is as follows.  In \S 2 all main results are  stated in Theorems 2.1 - 2.6.  In \S 3 and \S 4 we will prove the stability and the consistence of $\Cts(\ts)$ respectively.  In \S 5 Theorems 2.1 and 2.2 will be proved. In \S 6 Theorems 2.3 - 2.6 will be proved.
 \section{Main theorems}
Hereafter we suppose $\hbar = 1$ and $q_c = 1$ for  simplicity. 
We first consider \eqref{1.7}.
Let $t$ in $[0,T]$.  For an arbitrary integer $\nu \geq 1$ we take $\tau_j \in [0,t]\ (j = 1,2,\dots,\nu-1)$ satisfying $0 = \tau_0 < \tau_1 < \dots < \tau_{\nu-1} < \tau_{\nu}= t$,  set $\Delta := \{\tau_j\}_{j=1}^{\nu-1}$ and write $|\Delta|:= \max\{\tau_{j+1}- \tau_j; j = 0,1,\dots,\nu-1\}$. Let $x \in \bR^d$ be fixed.
  We take arbitrary points
$x^{(j)} \in \bR^d\ (j = 0,1,\dotsc,\nu-1)$ and determine the piecewise free moving path or the piecewise straight line $\qdelta(\theta;x^{(0)},\dotsc,x^{(\nu-1)},x) \in \bR^d \ (0 \leq \theta \leq t)$ by joining  $x^{(j)}$ at $\tau_j\ (j = 0,1, \dotsc,\nu, 
x^{(\nu)} = x)$  in order.  Let   $S_w(t, s; q)$ be the effective classical action defined by \eqref{1.6}.  Take $\chi \in \Cspace_0(\bR^d)$, i.e. an infinitely differentiable function on $\bR^d$ with compact support,  such that $\chi(0) = 1$ and  fix it through the present paper. For simplicity we suppose that $\chi$ is real-valued.  We will determine the approximation  of the RFPI expressed as \eqref{1.7}  by 
 \begin{align} \label{2.1}
\kdelta(t,0)f =  \limepsilon & \prod_{j=0}^{\nu-1}\sqrt{\frac{m}{2\pi i(\tau_{j+1} - 
\tau_{j})}}^{\ d}
        \int\cdots\int_{\bR^d} e^{iS_w(t,0;\qdelta)} \notag \\
       & \times f(x^{(0)}) \prod_{j=1}^{\nu-1}\chi(\epsilon x^{(j)})
          dx^{(0)}dx^{(1)}\cdots dx^{(\nu-1)}
 \end{align}
for $f \in \Cspace_0(\bR^d)$. 
The right-hand side of \eqref{2.1} is called an oscillatory integral and will be denoted by
 \begin{align*} 
  & \prod_{j=0}^{\nu-1}\sqrt{\frac{m}{2\pi i(\tau_{j+1} - 
\tau_{j})}}^{\ d}
      \text{Os} -  \int\cdots\int_{\bR^d} e^{iS_w(t,0;\qdelta)}  f(x^{(0)}) 
          dx^{(0)}dx^{(1)}\cdots dx^{(\nu-1)}
 \end{align*}
 (cf. p. 45 of \cite{Kumano-go}).
  \par
   For a multi-index $\alpha = (\alpha_1,\dots,\alpha_d)$ and $x \in \bR^d$ we  write $|\alpha| = 
\sum_{j=1}^d
\alpha_j$, $x^{\alpha} =  x_1^{\alpha_1}
\cdots  x_d^{\alpha_d}$ and  $<x> = \sqrt{1+|x|^2}$.
 In the present paper we often use symbols $C, C_{\alpha}, C_{\alpha\beta}$, $C_a$  and $\delta_{\alpha}$ to write down constants, though these values are different in general. 
 \par
   \vspace{0.5cm}
Throughout the present paper we assume that $\partial_x^{\alpha}E_j(t,x)\ (j = 1,2,\dots,d), \\ \partial_x^{\alpha}B_{jk}(t,x)\ (1 \leq j < k \leq d)$ and $\partial_x^{\alpha}W(t,x)$ are continuous in $\domain$ for all $\alpha$.  Then, $\partial_x^{\alpha}\partial_tB_{jk}(t,x)\ (1 \leq j < k \leq d)$ are also continuous in $\domain$ for all $\alpha$, because of Faraday's law $\partial_tB_{jk} =  - \partial E_k/\partial x_j + \partial E_j/\partial x_k$, which follows from \eqref{1.1}.
\par
{\bf Assumption 2.A.}  We assume 
\begin{equation} \label{2.2}
|\partial_x^{\alpha}E_j(t,x)| \leq C_{\alpha},\  |\alpha| \geq 1,\quad j = 1,2,\dots,d,
\end{equation}
\begin{equation} \label{2.3}
|\partial_x^{\alpha}B_{jk}(t,x)| \leq C_{\alpha}<x>^{-(1+ \delta_{\alpha})},\  |\alpha| \geq 1, \quad 1 \leq j < k \leq d
\end{equation}
in $\domain$ with constants $C_{\alpha} \geq 0$ and $\delta_{\alpha} > 0$.
\par
{\bf Assumption 2.B.}  We assume \eqref{2.2} and 
\begin{equation} \label{2.4}
|\partial_x^{\alpha}\partial_tB_{jk}(t,x)| \leq C_{\alpha}<x>^{-(1+ \delta_{\alpha})},\  |\alpha| \geq 1, \quad 1 \leq j < k \leq d
\end{equation}
in $\domain$ with constants $C_{\alpha} \geq 0$ and $\delta_{\alpha} > 0$.
\par
{\bf Assumption 2.C.} We assume that $\partial_x^{\alpha}A_j(t,x)\ (j = 1,2,\dots,d)$ and $\partial_x^{\alpha}V(t,x)$ are continuous in $\domain$ for all $\alpha$ and satisfy
\begin{equation} \label{2.5}
|\partial_x^{\alpha}A_j(t,x)| \leq C_{\alpha},\ |\alpha| \geq 1,\quad j = 1,2,\dots,d,
\end{equation}
\begin{equation} \label{2.6}
|\partial_x^{\alpha}V(t,x)| \leq C_{\alpha}<x>,\quad |\alpha| \geq 1
\end{equation}
in $\domain$ with constants $C_{\alpha} \geq 0$.
\par
{\bf Assumption 2.D.}  We assume 
\begin{equation} \label{2.7}
W(t,x) \geq - C(W),
\end{equation}
\begin{equation} \label{2.8}
|\partial_x^{\alpha}W(t,x)|^{p_{\alpha}} \leq C_{\alpha}\bigl\{1 + C(W) + W(t,x)\bigr\},\quad |\alpha| \geq 1,
\end{equation}
\begin{equation} \label{2.9}
|\partial_x^{\alpha}W(t,x)| \leq C_{\alpha}<x>,\quad |\alpha| \geq 1
\end{equation}
in $\domain$ with constants $C(W) \geq 0, C_{\alpha} \geq 0$ and $p_{\alpha} > 1$.
\par
{\it Example 2.1.} The function $c|x - a(t)|^2/(\Delta a)^2$  in \eqref{1.4.3} with a continuous path $a(t) \in \bR^{d}$ satisfies Assumption 2.D. 
\par
\begin{thm}
Suppose that Assumptions 2.A and 2.D are satisfied.  Then, there exists a constant $\rho^* > 0$ such that  the following statements hold for arbitrary potentials $(V,A)$ with continuous 
$V, \partial V/\partial x_j, \partial A_j/\partial t, \partial A_j/\partial x_k\ (j,k = 1,2,\dots,d)$ %
in $\domain$, all $\Delta$ satisfying $|\Delta| \leq \rho^*$ and all $t \in [0,T]$:
\\
(1) $K_{\Delta}(t,0)f$ defined on $f \in \Czerospace$ by \eqref{2.1} is
determined independently of the choice of $\chi$ and $K_{\Delta}(t,0)f$ can be uniquely extended to a bounded operator on $L^{2}$.
\\
(2) For all $f \in L^{2}$, as $|\Delta| \to 0$, $\kdelta(t,0)f$ converges in $L^{2}$ uniformly in $t \in [0,T]$ to an element $K(t,0)f \in L^{2}$, which we call the  RFPI of $f$. 
\\
(3) For all $f \in L^{2}$, $K(t,0)f$  belongs to   $C_{t}^{0}([0,T];L^{2})$, where $C^j_t([0,T];L^2)$ denotes the space of all $L^2$-valued, $j$-times continuously differentiable functions in $t \in [0,T]$.  In addition, $K(t,0)f$ is the unique solution  in $C^{0}_{t}([0,T];L^{2})$ to \eqref{1.8} with $u(0) = f$. 
\\
(4) Let $\psi(t,x) \in C^1(\domain)$ be a real-valued function such that $\partial_{x_j}\partial_{x_k}\psi(t,x)$ and $\partial_{t}\partial_{x_j}\psi(t,x)$ $(j,k = 1,2,\dotsc,d)$ are continuous in $[0,T] \times \bR^d$ and consider the gauge transformation 
\begin{equation}  \label{2.10}
    V' = V -\frac{\partial\psi}{\partial  t}, \quad  A'_j = A_j + \frac{\partial\psi}{\partial  x_j}\quad (j= 1,2,\dots,d).
\end{equation}
We write \eqref{2.1} for this $(V',A')$  as $K'_{\Delta}(t,0)f$.  Then we have the formula 
\begin{equation}  \label{2.11}
     K'_{\Delta}(t,0)f  = e^{i\psi (t,\cdot)}K_{\Delta}(t,0)\left(e^{-i\psi (0,\cdot)}f\right)
\end{equation}
for  all $f \in L^2$ as in the case of $W(t,x) = 0$ (cf. (6.16) in \cite{Ichinose 1999}), and we have the analogous relation between the limits $K'(t,0)f$ and $K(t,0)f$.
\end{thm}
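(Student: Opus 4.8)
The plan is to follow the time-slicing scheme of \cite{Ichinose 1997, Ichinose 1999, Ichinose 2007}, treating the weight $W$ as a perturbation that contributes a real damping factor. First I would observe that, since $\hbar = 1$, the effective action factorizes as $e^{iS_w(t,s;q)} = e^{iS(t,s;q)}e^{-\int_s^t W(\theta,q(\theta))\,d\theta}$, so that each elementary integration in \eqref{2.1} carries the usual oscillatory free-particle phase $m|x^{(j+1)}-x^{(j)}|^2/\{2(\tau_{j+1}-\tau_j)\}$ together with a damping weight that preserves positive-definiteness of the phase Hessian. Writing the one-step (fundamental) operator $\Cts(\tau_{j+1},\tau_j)$ for a single slice, I would identify $\kdelta(t,0)$ with the composition $\Cts(\tau_\nu,\tau_{\nu-1})\cdots\Cts(\tau_1,\tau_0)$, so that all four assertions reduce to statements about $\Cts(t,s)$ for small $t-s$.

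The heart of the argument, and the step I expect to be hardest, is the \emph{stability} estimate $\|\Cts(t,s)\|_{L^2\to L^2}\le e^{C(t-s)}$ uniformly in the potentials. After the change of variables $\xi = m(x-y)/(t-s)$ that diagonalizes the free phase, $\Cts(t,s)$ becomes a pseudo-differential operator of the form in Theorem 1.A, with $t-s$ playing the role of the semiclassical parameter; its symbol is, to leading order, the damping factor $e^{-(t-s)W(\cdot)+O((t-s)^2)}$ times a smooth amplitude generated by $A$ and $V$ under Assumptions 2.A and 2.C. Since \eqref{2.7} gives $W\ge -C(W)$, the supremum of the symbol is bounded by $e^{C(W)(t-s)}$, and Theorem 1.A converts this into the operator-norm bound with an $O(t-s)$ correction. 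The obstacle is that $W$ is permitted to grow, so the $S^0$-hypothesis \eqref{1.12} is not immediate: differentiating the damping factor produces powers of $\partial_x^\alpha W$. Here Assumption 2.D is exactly what is needed — estimate \eqref{2.8} with $p_\alpha>1$ forces $|\partial_x^\alpha W|^{p_\alpha}\le C_\alpha(1+C(W)+W)$, so that each $|\partial_x^\alpha W|\,e^{-(t-s)W}$, and every product thereof arising from the Leibniz rule, remains uniformly bounded, while \eqref{2.9} controls the first derivatives linearly in $\langle x\rangle$ so as to match the $A,V$ amplitude estimates. Verifying that all $\partial_\xi^\alpha\partial_x^\beta\partial_{x'}^\gamma$ derivatives of the full symbol satisfy \eqref{1.12} is the main calculation. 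Telescoping the product then yields $\|\kdelta(t,0)\|_{L^2\to L^2}\le\prod_j e^{C(\tau_{j+1}-\tau_j)}\le e^{CT}$, and the oscillatory-integral theory of \cite{Kumano-go} gives that each $\Cts(t,s)f$, hence $\kdelta(t,0)f$, is independent of the cutoff $\chi$ and extends to a bounded operator, proving assertion (1).

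For assertions (2) and (3) I would prove \emph{consistency}: that $\Cts(t,s)$ agrees with the true evolution to higher order, i.e. $\Cts(t,s) = U(t,s) + O((t-s)^{1+\delta})$ in operator norm with $\delta>0$, where $U(t,s)$ is the propagator of \eqref{1.8}. The well-posedness of the non-self-adjoint equation \eqref{1.8} in $C^0_t([0,T];L^2)$, and the existence of such a strongly continuous family $U(t,s)$, I would borrow from \cite{Ichinose 2019}; consistency itself follows from a Taylor expansion of the symbol of $\Cts(t,s)$ showing that its generator at $t=s$ equals $-iH_w(s)$, the $-W$ part reproducing the damping. Combining stability and consistency through the telescoping identity $\kdelta(t,0)-U(t,0)=\sum_k U(t,\tau_{k+1})\bigl(\Cts(\tau_{k+1},\tau_k)-U(\tau_{k+1},\tau_k)\bigr)\kdelta(\tau_k,0)$ and summing the $O(|\Delta|^\delta)$ errors yields convergence $\kdelta(t,0)f\to K(t,0)f=U(t,0)f$ in $L^2$, uniformly in $t$, so that $K(t,0)f\in C^0_t([0,T];L^2)$ is the unique solution of \eqref{1.8}.

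Finally, assertion (4) is essentially independent of $W$. Under the gauge transformation \eqref{2.10} the Lagrangian changes by $-\partial_t\psi+\dot q\cdot\partial_x\psi=\frac{d}{d\theta}\psi(\theta,q(\theta))$ along any path, while the weight term $iW$ is unchanged; hence $S_w$ acquires only the boundary contribution $\psi(t,q(t))-\psi(0,q(0))$, which factors out of the integrand in \eqref{2.1} as $e^{i\psi(t,x)}$ on the left and $e^{-i\psi(0,x^{(0)})}$ against $f$, giving \eqref{2.11}. Passing to the limit $|\Delta|\to 0$ via assertion (2) then yields the corresponding relation between $K'(t,0)$ and $K(t,0)$, exactly as in the case $W = 0$ treated in \cite{Ichinose 1999}.
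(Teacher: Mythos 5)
Your overall architecture (stability of a one-step operator $\Cts(t,s)$, consistency with the propagator of \eqref{1.8} borrowed from \cite{Ichinose 2019}, telescoping, and gauge covariance for (4)) matches the paper, and your reading of Assumption 2.D is exactly right: the paper's Lemma 3.4 proves boundedness of all derivatives of the damping factor $c_w(t,s;x,y)=\exp\bigl(-\rho\int_0^1 W\,d\theta\bigr)$ from \eqref{2.8} via H\"older's inequality and $\sup_{r\ge 0}e^{-r}(T+r)^a<\infty$, precisely as you describe. But there is a genuine gap at the heart of your stability argument: you propose to apply Theorem 1.A \emph{directly} to $\Cts(t,s)$ after the substitution $\xi=m(x-y)/(t-s)$. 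The operator $\Cts(t,s)$ is a Fourier integral operator whose phase contains $(x-y)\cdot\int_0^1 A\,d\theta$ and $-\rho\int_0^1 V\,d\theta$ with $V$ allowed to grow quadratically (only $|\partial_x^\alpha V|\le C_\alpha\langle x\rangle$ for $|\alpha|\ge1$ is assumed) and $A$ growing linearly; writing it in the form required by Theorem 1.A produces an amplitude whose $x$- and $x'$-derivatives grow like $\langle x\rangle$ and $|x-y|$, so the $S^0$ condition \eqref{1.12} fails and Theorem 1.A is not applicable. You identified the growth of $W$ as the obstacle but the same (in fact worse) obstruction comes from $V$, $A$ and the free quadratic phase, and it cannot be fixed by Leibniz-rule bookkeeping.

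The paper's proof circumvents this by a $T^{\dagger}T$ argument: it computes the kernel of $\Cts(t,s)^{\dagger}\chi(\epsilon\cdot)^2\Cts(t,s)$ explicitly (Lemma 3.2), in which the combined phase is $(x-y)\cdot m\Phi(t,s;x,y,z)/(t-s)$ with $\Phi$ given by \eqref{3.11}; there the problematic terms cancel and only $E$, $B_{jk}$ and first differences of $A$ survive, which is exactly what Assumption 2.A controls. Lemma 3.3 then shows that $z\mapsto\Phi$ is a diffeomorphism for $t-s\le\rho^*$ --- this is where the constant $\rho^*$ in the statement actually originates, and your proposal never produces it --- and after that change of variables the amplitude $c_w(z,x)c_w(z,y)\det(\partial z/\partial\xi)$ is genuinely in $S^0$ with supremum $1+O(t-s)$, so Theorem 1.A (with $(t-s)/m$ as the semiclassical parameter) yields $\Vert\Cts^{\dagger}\Cts\Vert\le e^{2K_0(t-s)}$ and hence \eqref{3.23}. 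Two smaller points: your consistency claim $\Cts(t,s)=U(t,s)+O((t-s)^{1+\delta})$ cannot hold in $L^2$ operator norm; the paper's error term $\sqrt{t-s}\,R(t,s)f$ in \eqref{3.31} is only controlled by $\Vert f\Vert_M$ in a weighted Sobolev norm (Proposition 3.8), so convergence is first obtained on the dense set $B^M$ and extended by stability. And since Theorem 2.1 assumes only Assumption 2.A (not 2.C), the paper first fixes a gauge satisfying 2.C via Lemma 6.1 of \cite{Ichinose 1999} and then transfers to arbitrary potentials using (4); your proof invokes 2.C without this reduction.
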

Let us introduce the weighted Sobolev spaces
\begin{align} \label{2.12}
& B^a(\mathbb{R}^d)  := \{f \in  L^2(\mathbb{R}^d);
 \|f\|_a := \|f\| + \sum_{|\alpha| =  a}\bigl(\|x^{\alpha}f\| +\|\partial_x^{\alpha}f\|\bigr) < \infty\} \notag\\
 &  (a = 1,2,\dots)
 \end{align}
as in \cite{Ichinose 1999}. We denote the dual space of $B^a$ by $B^{-a}$ and the $L^2$ space by $B^0$. 
%
\begin{thm}
Suppose that either Assumption 2.A or 2.B is satisfied.  In addition, we suppose Assumptions 2.C and 2.D. Then there exists another constant $\rho^* > 0$  such that the same statements (1) - (4) as in Theorem 2.1 hold for all $\Delta$ satisfying $|\Delta| \leq \rho^*$ and all $t \in [0,T]$.  In addition, for all $f \in B^a(\bR^d)\ (a = 1,2,\dots)$ $K_{\Delta}(t,0)f$  belongs to   $B^a$ and  as $|\Delta| \to 0$, $\kdelta(t,0)f$ converges in $B^a$ uniformly in $t \in [0,T]$ to $K(t,0)f$, which belongs to  $C^0_t([0,T];B^a)$.  
\end{thm}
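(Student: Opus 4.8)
The plan is to carry out the time-slicing analysis of \cite{Ichinose 1999, Ichinose 2007} on the weighted Sobolev scale $B^a$, treating statements (1)--(4) exactly as in Theorem~2.1 (the smoothness in Assumption~2.C only sharpens the symbol calculus) and reserving the real work for the $B^a$ conclusion. First I would introduce the fundamental (one-slice) operator $\mathcal{C}(t,s)$,
\begin{equation*}
(\mathcal{C}(t,s)f)(x) = \sqrt{\frac{m}{2\pi i(t-s)}}^{\ d}\,\text{Os}-\!\int e^{iS_w(t,s;q)}f(y)\,dy,
\end{equation*}
where $q(\theta)$ is the straight line joining $y$ at $s$ to $x$ at $t$, so that $K_{\Delta}(t,0) = \mathcal{C}(\tau_{\nu},\tau_{\nu-1})\cdots\mathcal{C}(\tau_1,\tau_0)$. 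Writing $S_w = S + i\int W$ splits off the real damping factor $\exp\bigl(-\int_s^t W(\theta,q(\theta))\,d\theta\bigr)$, which by \eqref{2.7} is at most $e^{C(W)(t-s)}$; this is the mechanism that keeps $\mathcal{C}(t,s)$ bounded and supplies the only source of norm growth at leading order.

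Next I would prove \emph{stability}, i.e. the uniform boundedness of the $\nu$-fold product on $L^2$ and on $B^a$ for $|\Delta|\le\rho^*$. After the change of variables that presents $\mathcal{C}(t,s)$ as a pseudo-differential operator with principal symbol close to $\exp(-\int W)$, the sharp estimate Theorem~1.A applies with the slice width $\tau_{j+1}-\tau_j$ in the role of the small parameter, giving $\|\mathcal{C}(\tau_{j+1},\tau_j)\|_{L^2\to L^2}\le e^{C(\tau_{j+1}-\tau_j)}$; telescoping over $j$ produces the uniform bound $e^{CT}$. The decisive feature is that Theorem~1.A yields $\sup|p| + O(\cdot)$ rather than a multiplicative constant larger than $1$, so the per-slice errors add up instead of compounding.

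For the $B^a$ bounds I would commute the operators $\partial_x^{\alpha}$ $(|\alpha|\le 2a)$ and the weight $<x>^{2a}$ through each factor $\mathcal{C}(\tau_{j+1},\tau_j)$. The resulting commutators are built from derivatives of the phase and of the damping exponent, namely from $\partial_x^{\alpha}V$, $\partial_x^{\alpha}A_j$ and $\partial_x^{\alpha}W$; Assumption~2.C bounds the first two by $C<x>$ and $C$ respectively, while 2.D bounds $\partial_x^{\alpha}W$ by $C<x>$ and, crucially, couples it to $W$ itself through \eqref{2.8}. The linear $<x>$-growth is balanced against the weight $<x>^{2a}$, and \eqref{2.8} with $p_\alpha>1$ is exactly what forces the $\partial_x^{\alpha}W$ factors to remain dominated by $1+C(W)+W$, hence to be absorbed by the damping factor and to keep the relevant symbols in the class $S^0$ on which Theorem~1.A operates. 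Each slice then contributes a factor $1+O(\tau_{j+1}-\tau_j)$ in $B^a$, and telescoping gives the uniform $B^a$ bound; statement (4) follows from the gauge identity exactly as for $W=0$, since \eqref{2.10} leaves $W$ untouched.

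Finally I would establish \emph{consistency} by estimating, in the operator norm from $B^a$ to $B^{a-1}$, the difference between $\mathcal{C}(t,s)$ and the exact one-step propagator of \eqref{1.8}; it is here that Assumptions~2.A or 2.B enter, bounding the contributions of $B_{jk}$ (and, under 2.B, of $\partial_t B_{jk}$) to the error. The Cauchy criterion then gives convergence of $K_{\Delta}(t,0)f$ in $B^a$, uniformly in $t$, as $|\Delta|\to0$, and the limit is identified with the unique $C^0_t([0,T];B^a)$ solution of \eqref{1.8} by the well-posedness results of \cite{Ichinose 2019} for the non-self-adjoint Schr\"odinger equation. I expect the main obstacle to be precisely the $B^a$ stability estimate: one must simultaneously control the $<x>$-growth of the smooth but unbounded potential $V$ and the derivative growth of the weight $W$, and the argument closes only because of the sharp constant in Theorem~1.A (keeping the weighted norms to $1+O(\tau_{j+1}-\tau_j)$ per slice) together with the coupling \eqref{2.8}. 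Were the derivatives of $W$ not dominated by $W$ in this way, the $\partial_x^{\alpha}W$ terms would escape the damping factor and the product over slices would fail to stay bounded on $B^a$.
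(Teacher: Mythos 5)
Your proposal follows essentially the same route as the paper: the one-slice operator $\mathcal{C}(t,s)$ and the factorization \eqref{3.8}, $L^2$- and $B^a$-stability obtained from the $T^{*}T$ computation, commutators with $\partial_x^{\alpha}$ and $<x>^{2a}$ (Lemma 3.9), and Theorem 1.A with the slice width as the small parameter, then the consistency identity $\{i\partial_t - H_w(t)\}\mathcal{C}(t,s)=\sqrt{t-s}\,R(t,s)$ and the well-posedness results of \cite{Ichinose 2019}; your reading of \eqref{2.8} as the mechanism that keeps $\partial_x^{\alpha}W$ dominated by the damping factor is exactly the content of Lemma 3.4. The one misattribution is where Assumptions 2.A/2.B enter: in the paper they are needed for the \emph{stability} step, to guarantee that $z\mapsto\Phi(t,s;x,y,z)$ defined by \eqref{3.11} (resp.\ \eqref{3.12}) is a global diffeomorphism with the symbol-class bounds \eqref{3.15}--\eqref{3.17} (Lemma 3.3) --- this is what legitimizes the change of variables before Theorem 1.A is invoked --- whereas the consistency estimate uses only the growth conditions on $V$, $A$ and $W$.
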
 
	Next, we consider a one-particle spin system \eqref{1.9} with $l$ spin components.    Throughout the present paper we assume that $\partial_x^{\alpha}h_{sij}(t,x)$ and $\partial_x^{\alpha}w_{sij}(t,x)$ are continuous in $\domain$ for all $\alpha$ and $i,j = 1,2,\dots,l$, where $h_{sij}$  denotes the 
$(i,j)$-component of $H_s$. 
For a continuous path $q(\theta) \in \bR^d\ (s \leq \theta \leq t)$ we define an $l \times l$ matrix $\mathcal{F}(\theta,s;q)\ (s \leq \theta \leq t)$ by the solution to 
\begin{equation} \label{2.13}
\frac{d}{d\theta}\,\mathcal{U}(\theta) = -\bigl\{iH_s(\theta,q(\theta)) + W_s(\theta,q(\theta))\bigr\}\mathcal{U}(\theta), \quad \mathcal{U}(s) = I
 \end{equation}
with the identity matrix $I$.
\par
	Let $\Delta = \{\tau_j\}_{j=1}^{\nu - 1}$ be a subdivision of $[0,t]$ and $\qdelta = \qdelta(\theta;x^{(0)},\dotsc,x^{(\nu-1)},x) \in \bR^d \ (0 \leq \theta \leq t)$ the piecewise free moving path defined in the early part of this section.  We define the probability amplitude by 
\begin{equation} \label{2.14}
\exp *iS_{sw}(t,0;\qdelta) := \bigl(\exp iS_w(t,0;\qdelta)\bigr)\mathcal{F}(t,0;\qdelta)
 \end{equation}
 as in the case of $W_s(t,x) = 0$ (cf. \S 2 in \cite{Ichinose 2007}), where $S_w(t,0;\qdelta)$ is the classical action defined by \eqref{1.6}.  For $f = {}^t(f_1,\dots,f_l) \in \Czerospace^l$ we determine the approximation of the RFPI for this system under the measurement by
%
 \begin{align} \label{2.15}
\ksdelta(t,0)f =  \limepsilon & \prod_{j=0}^{\nu-1}\sqrt{\frac{m}{2\pi i(\tau_{j+1} - 
\tau_{j})}}^{\ d}
        \int\cdots\int_{\bR^d} \bigl(\exp *iS_{sw}(t,0;\qdelta)\bigr) \notag \\
       & \times f(x^{(0)}) \prod_{j=1}^{\nu-1}\chi(\epsilon x^{(j)})
          dx^{(0)}dx^{(1)}\cdots dx^{(\nu-1)}
 \end{align}
 as in \cite{Ichinose 2007}.
The $L^2$-norm  of $f = {}^t(f_1,\dots,f_l) \in (L^2)^l$ is defined by $\Vert f \Vert := \sqrt{\sum_{j=1}^l \Vert f_j\Vert^2}$.
\begin{thm}
Besides Assumptions 2.A and 2.D we suppose that $H_s(t,x)$ and $W_s(t,x)$ satisfy 
\begin{equation}  \label{2.16}
  |\partial_{x}^{\alpha} h_{sij}(t,x)| \leq C_{\alpha},\quad i, j = 1,2,\dots,l
\end{equation}
 and \eqref{1.11} for all $\alpha$, respectively.  Let $\rho^* > 0$ be the constant determined in Theorem 2.1.
Then the same statements for $\ksdelta(t,0)f$ as for $\kdelta(t,0)f$ in Theorem 2.1 hold, where $K_s(t,0)f := \lim_{|\Delta|\to 0}\ksdelta(t,0)f \in C^0_t
([0,T]
;(L^2)^l)$ for $f \in (L^2)^l$ is the unique solution in $C^0_t([0,T];(L^2)^l)$ to \eqref{1.10}
with $u(0) = f$.  
\end{thm}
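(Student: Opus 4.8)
The plan is to reduce Theorem 2.3 to Theorem 2.1 by treating the spin factor $\calF(t,0;\qdelta)$ as a smooth, uniformly bounded matrix amplitude that dresses the scalar fundamental operator without leaving the symbol class used in \S 3. The structural fact that makes this work is that the solution of the linear matrix ODE \eqref{2.13} factorizes over the subdivision: since $\calF(\cdot,s;q)$ solves a first-order equation with $\calF(s,s;q)=I$, the flow property $\calF(t,s;q)=\calF(t,r;q)\calF(r,s;q)$ gives
\begin{equation*}
\calF(t,0;\qdelta)=\calF(\tau_\nu,\tau_{\nu-1};\qdelta)\calF(\tau_{\nu-1},\tau_{\nu-2};\qdelta)\cdots\calF(\tau_1,\tau_0;\qdelta),
\end{equation*}
and each factor $\calF(\tau_{j+1},\tau_j;\qdelta)$ depends only on $x^{(j)},x^{(j+1)}$ through the straight segment of $\qdelta$ on $[\tau_j,\tau_{j+1}]$. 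Combined with the additivity $S_w(t,0;\qdelta)=\sum_j S_w(\tau_{j+1},\tau_j;\qdelta)$ and the fact that the scalar exponentials commute with the matrices, this lets me regroup the integrand in \eqref{2.15} and write $\ksdelta(t,0)$ as an ordered product of $l\times l$ matrix-valued one-step operators $\mathcal{C}_s(\tau_{j+1},\tau_j)$, each integrating out one variable $x^{(j)}$, exactly paralleling the scalar construction.

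Next I would record the estimates for a single spin factor. By \eqref{2.16} and \eqref{1.11} (in particular the case $\alpha=0$), $H_s$ and $W_s$ together with all their $x$-derivatives are bounded on $\domain$; differentiating \eqref{2.13} in the endpoints $x^{(j)},x^{(j+1)}$ and invoking Gronwall's inequality then yields a first-order expansion
\begin{equation*}
\calF(\tau_{j+1},\tau_j;\qdelta)=I-\bigl(iH_s+W_s\bigr)(\tau_j,x^{(j)})(\tau_{j+1}-\tau_j)+R_j,
\end{equation*}
whose remainder $R_j$, with all its endpoint-derivatives, is controlled so that after the Gaussian localization $|x^{(j+1)}-x^{(j)}|\lesssim(\tau_{j+1}-\tau_j)^{1/2}$ supplied by $e^{iS}$ it contributes only $o(\tau_{j+1}-\tau_j)$ to the one-step operator. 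Moreover, since $H_s$ is Hermitian (so $iH_s$ is norm-preserving) and $\|W_s\|$ is bounded, one gets $\|\calF(\tau_{j+1},\tau_j;\qdelta)\|\le e^{C(\tau_{j+1}-\tau_j)}$. Multiplying the scalar symbol of the fundamental operator $\mathcal{C}(\tau_{j+1},\tau_j)$ from Theorem 2.1 by this bounded smooth matrix keeps the amplitude in a matrix-valued $S^0(\bR^{3d})$ with seminorms uniform in $j$, so the matrix form of Theorem 1.A applies, with $\sup|p|$ replaced by the supremum of the matrix operator norm $\|p(x,\xi,x')\|$.

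With these ingredients the two central steps are stability and consistency, both following the pattern of \cite{Ichinose 1999, Ichinose 2007}. For stability I would bound the operator norm of each $\mathcal{C}_s(\tau_{j+1},\tau_j)$ on $(L^2)^l$ by $1+C(\tau_{j+1}-\tau_j)$: its principal matrix symbol is $e^{-\int W}\,\calF(\tau_{j+1},\tau_j;\cdot)$, whose norm is at most $e^{C(W)(\tau_{j+1}-\tau_j)}e^{C(\tau_{j+1}-\tau_j)}$ by \eqref{2.7} and the Gronwall bound, while Theorem 1.A controls the remainder by $O(|\tau_{j+1}-\tau_j|)$. Taking the ordered product over the $\nu$ slices yields $\|\ksdelta(t,0)\|\le e^{CT}$ uniformly in $\Delta$, which is statement (1). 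For consistency I would check that $\mathcal{C}_s(s+\delta,s)$ agrees with the exact propagator of \eqref{1.10} to order $o(\delta)$: the generator of $\calF$ is exactly $-iH_s-W_s$, and the scalar part reproduces $H_w$ as in Theorem 2.1, so the one-step operator is consistent with $-i(H_wI+H_s-iW_s)$, the generator in \eqref{1.10}. The stability-plus-consistency (telescoping) argument then forces $\ksdelta(t,0)f$ to converge in $(L^2)^l$ uniformly in $t$, giving statement (2); the results of \cite{Ichinose 2019} on \eqref{1.10} identify the limit $K_s(t,0)f$ as the unique $C^0_t([0,T];(L^2)^l)$ solution, giving statement (3). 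Statement (4) is immediate, since $\calF$ does not involve $(V,A)$ and the gauge factors $e^{\pm i\psi}$ commute through the matrix amplitude exactly as in the scalar case; the same $\rho^*$ works because the scalar part is unchanged and the matrix dressing needs no extra smallness.

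The main obstacle I anticipate is the noncommutativity of the spin factors: unlike the scalar action, the matrices $\calF(\tau_{j+1},\tau_j;\cdot)$ need not commute, so one cannot exponentiate a single sum, and the sharp $L^2$-bound of Theorem 1.A must be used in its matrix form, with the operator norm of the matrix symbol in place of $|p|$. This is precisely where the strong hypotheses \eqref{2.16} and \eqref{1.11} are indispensable: they force the dressing matrix to stay in $S^0$ with uniform seminorms, whereas the scalar weight $W$ was permitted to grow because \eqref{2.8} lets its derivatives be absorbed by $W$ itself, a device with no matrix analogue.
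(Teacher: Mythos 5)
Your proposal is correct in substance and follows the same overall architecture as the paper: factorize $\mathcal{F}(t,0;q_\Delta)$ over the subdivision via the flow property of \eqref{2.13} (the paper's \eqref{4.4}), rewrite $K_{s\Delta}(t,0)$ as an ordered product of one-step operators $\mathcal{C}_s(\tau_{j+1},\tau_j)$ (the paper's \eqref{4.2}--\eqref{4.3}), prove stability and consistency for a single step, and then invoke the well-posedness result for \eqref{1.10} from \cite{Ichinose 2019} to identify the limit, exactly as in Theorem 2.1. Your consistency step (the generator of $\mathcal{F}$ is $-iH_s-W_s$, so the one-step operator is consistent with $-i(H_wI+H_s-iW_s)$ up to a $\sqrt{t-s}$-size remainder) is the content of the paper's Lemma 4.4 and Proposition 4.5, and your observation that the gauge statement (4) is unaffected because $\mathcal{F}$ does not involve $(V,A)$ matches the paper.

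The one place where you genuinely diverge is the stability estimate, and it is worth comparing the two routes. You propose to apply a matrix-valued version of Theorem 1.A directly to the dressed symbol, replacing $\sup|p|$ by the supremum of the operator norm of the matrix amplitude $e^{-\int W}\mathcal{F}$; this requires extending the sharp $L^2$-bound \eqref{1.13} to systems, which is true but is an extra ingredient the paper never needs and which you would have to prove or cite. The paper instead writes $\mathcal{C}_s(t,s)f=\mathcal{C}(t,s)f+P(t,s)f$ where $P$ has amplitude $\mathcal{F}(t,s;q^{t,s}_{x,y})-I$ (its \eqref{4.14}), shows in Lemma 4.2 that this amplitude and all its derivatives are $O(t-s)$ uniformly (here $W_s\ge 0$ gives $\|\mathcal{F}\|\le 1$ via \eqref{4.5}, and \eqref{1.11}, \eqref{2.16} give the derivative bounds), and then controls the perturbation by $C(t-s)\Vert f\Vert$ using only the \emph{non-sharp} $L^2$-boundedness of Proposition 3.6. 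The sharp constant $e^{K_0(t-s)}$ then comes entirely from the already-established scalar Proposition 3.5 applied componentwise, and $e^{K_0(t-s)}+C(t-s)\le e^{(K_0+C)(t-s)}$ closes the estimate. The paper's splitting is more economical (no matrix Calder\'on--Vaillancourt needed, and the same $\rho^*$ is inherited for free); your direct route is conceptually cleaner but places the burden on a matrix extension of Theorem 1.A that you should not leave implicit. Neither route has a gap, but if you keep yours, state and justify the matrix form of \eqref{1.13} explicitly.
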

{\it Example 2.2.} We consider a continuous quantum measurement of the positions of all  spin components of a particle.   Let $a^{(j)}(t)\in \bR^d\ (j = 1,2,\dots,l)$ be the result for the $j$-th spin component  and $\Delta a$ the resolution of the measuring device.
Then  in Theorem 2.3 we take $W(t,x) = 0$ and the diagonal matrix $W_s(t,x)$ with
\begin{equation*}
w_{sjj}(t,x) = \Omega \left(\frac{c|x - a^{(j)}(t)|^2}{(\Delta a)^2}\right), \quad j = 1,2,\dots,l,
\end{equation*}
where  $\Omega (\theta) \in \Cspace([0,\infty))$ is an increasing function such that $\Omega (\theta) = \theta$ if $0 \leq \theta \leq 1$ and  $\Omega (x) = L$ if $\theta \geq 2$ with a sufficiently large constant $L > 0$.  These $W(t,x)$ and $W_s(t,x)$ satisfy the assupmtions of Theorem 2.3.
\par
\vspace{0.3cm}
The $B^a$-norm  of $f = {}^t(f_1,\dots,f_l) \in (B^a)^l$ is defined by $\Vert f \Vert_a := \sqrt{\sum_{j=1}^l \Vert f_j\Vert_a^2}$. 
\begin{thm}
We suppose that either Assumption 2.A or 2.B is satisfied.  In addition, we suppose  Assumptions 2.C, 2.D, \eqref{1.11} and \eqref{2.16}.  Let $\rho^* > 0$ be the constant determined in Theorem 2.2.  Then the same statements  for $\ksdelta(t,0)f$ as for $\kdelta(t,0)f$ in Theorem 2.2 hold, where we replace $B^a$ with $(B^a)^l$.
\end{thm}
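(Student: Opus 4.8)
The plan is to run the same fundamental-operator scheme used for Theorem 2.2, but now with the matrix-valued integrand of \eqref{2.15}, importing the spin construction from the proof of Theorem 2.3 (that is, from \cite{Ichinose 2007}). The only difference between the scalar RFPI of \eqref{2.1} and the spin RFPI of \eqref{2.15} is the extra factor $\mathcal{F}(t,0;\qdelta)$ defined through \eqref{2.14} and the ODE \eqref{2.13}. Accordingly, I would build the one-step fundamental operator for \eqref{1.10} as the scalar fundamental operator $\mathcal{C}(t,s)$ of \S 3 multiplied by the matrix $\mathcal{F}(\tau_{j+1},\tau_j;\qdelta)$ evaluated on a single straight-line segment, obtaining an $l\times l$ matrix of oscillatory integral operators, and then prove its stability and consistency in the $(B^a)^l$ setting. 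By the flow property of \eqref{2.13} the full factor $\mathcal{F}(t,0;\qdelta)$ is the ordered product of these one-step matrices, so the composition of the one-step operators reassembles \eqref{2.15}.

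The essential new ingredient, compared with Theorem 2.2, is uniform control of the one-step matrix $\mathcal{F}(\tau_{j+1},\tau_j;\qdelta)$ as a function of the endpoints $x^{(j)},x^{(j+1)}$. Here I would differentiate the defining equation \eqref{2.13} repeatedly in these endpoints; since the segment of $\qdelta$ depends smoothly (piecewise linearly) on them and since, by \eqref{2.16} and \eqref{1.11}, every $x$-derivative of $H_s$ and of $W_s$ is bounded, a Gronwall argument yields $|\partial_{x^{(j)}}^{\alpha}\partial_{x^{(j+1)}}^{\gamma}\mathcal{F}(\tau_{j+1},\tau_j;\qdelta)| \leq C_{\alpha\gamma}$ uniformly, together with $\mathcal{F}(\tau_{j+1},\tau_j;\qdelta) = I + O(\tau_{j+1}-\tau_j)$ in each such derivative. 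This is exactly the point at which the restriction \eqref{1.11} is indispensable: boundedness of all derivatives of $W_s$, not merely of $W_s$ itself, is what keeps $\mathcal{F}$ a good symbol and prevents the matrix factor from spoiling the weighted estimates.

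With these symbol bounds on $\mathcal{F}$ in hand, the matrix factor behaves like a smooth, bounded multiplier. When I commute the two families of operators defining the $B^a$-norm, namely the derivatives $\partial_x^{\alpha}$ with $|\alpha| \leq 2a$ and the weight $<x>^{2a}$, through the matrix fundamental operator, the new terms in which a derivative lands on $\mathcal{F}$ are all bounded and carry a gain of $\tau_{j+1}-\tau_j$; hence they do not disturb the leading stability constant already obtained for the scalar operator in Theorem 2.2. This gives stability of the composed operator in $(B^a)^l$, uniformly for $|\Delta| \leq \rho^*$, and the consistency estimate against the generator of \eqref{1.10} follows by the same short-time expansion, now including the first-order contribution $-\{iH_s + W_s\}$ supplied by $\mathcal{F}$.

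I expect the main obstacle to be carrying out the $(B^a)^l$ stability uniformly in the subdivision while respecting the non-self-adjointness of \eqref{1.10}: the scalar weight $W$ supplies the unbounded terms permitted by Assumption 2.D and $W_s$ supplies a non-unitary matrix, so I must track carefully how the growing weight $<x>^{2a}$ interacts both with the oscillatory phase $e^{iS_w}$ (already controlled in Theorem 2.2) and with the commutators produced by $\mathcal{F}$. Once stability and consistency in $(B^a)^l$ are established, I would conclude exactly as in Theorem 2.3: apply the convergence machinery for the time-slicing approximations together with the results of \cite{Ichinose 2019} on the non-self-adjoint equation \eqref{1.10}, to obtain convergence of $\ksdelta(t,0)f$ in $(B^a)^l$ uniformly in $t \in [0,T]$, and to identify the limit $K_s(t,0)f$ as the unique solution in $C^0_t([0,T];(B^a)^l)$ of \eqref{1.10} with $u(0)=f$.
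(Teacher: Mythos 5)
Your proposal is correct and follows essentially the same route as the paper: the paper defines the one-step operator $\mathcal{C}_s(t,s)$ in \eqref{4.2} as the scalar kernel times $\mathcal{F}(t,s;q^{t,s}_{x,y})$, uses the flow property \eqref{4.4} to reassemble \eqref{2.15} as the composition \eqref{4.3}, derives the symbol bounds $\partial_x^{\alpha}\partial_y^{\beta}(\mathcal{F}-I)=O(t-s)$ from \eqref{2.13} via \eqref{1.11} and \eqref{2.16} (Lemmas 4.1--4.2), and then obtains $(B^a)^l$-stability (Proposition 4.6) and consistency (Lemma 4.4, Proposition 4.5) by treating $\mathcal{F}-I$ as a bounded perturbation of the scalar operator, exactly as you describe. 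The conclusion via the time-slicing convergence machinery and the results of \cite{Ichinose 2019} likewise matches the paper's argument.
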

	Finally, we consider a quantum spin system consisting of $N$ particles which have $l$ spin components each.  We  perform a continuous quantum measurement of the positions of all spin components of all particles in $[0,T]$.  Denoting the coordinates of the $j$-th particle by $\bx_j \in \bR^d \ (j = 1 ,2,\dots,N)$
, we write $x = (\bx_1,\bx_2,\dots,\bx_N) \in \bR^{dN}$.
	 Let $W_j(t,\bx_j) \in \bR$ and set 
	\begin{align} \label{2.17}
& {\cal L}^{\sharp}_{w}(t,x,\dot{x}) =  \sum_{j=1}^N 
 \biggl\{\frac{m}{2}|\dot{\bx}_j|^2 + \dot{\bx}_j\cdot \bA_j(t,\bx_j) -V_j(t,\bx_j) \notag \\
&\qquad + iW_j(t,\bx_j) \biggr\}
   - \sum_{j,k=1,j \not= k}^N V_{jk}(t,\bx_j-\bx_k),
   \end{align}
where $\bA_j(t,\bx_j) \in \bR^d, V_j(t,\bx_j) \in \bR$ and $V_{jk}(t,\bx_j-\bx_k) \in \bR$.   The effective Lagrangian function we consider is
\begin{align} \label{2.18}
& {\cal L}^{\sharp}_{sw}(t,x,\dot{x}) =  {\cal L}^{\sharp}_{w}(t,x,\dot{x}) + \sum_{j=1}^N I_1\otimes \cdots \otimes I_{j-1} \notag \\
 & \qquad \otimes \bigl\{-H_{sj}(t,\bx_j) + iW_{sj}(t,\bx_j) \bigr\}\otimes I_{j+1}\otimes \cdots\otimes I_N,
   \end{align}
generalizing \eqref{1.9}, where $H_{sj}(t,\bx_j)$ and $W_{sj}(t,\bx_j)$ are $l \times l$ Hermitian matrices. 
 \par
 For a continuous path $\bq_j(\theta) \in \bR^d\ (j = 1,2,\dots,N, s \leq \theta \leq t)$, we define $\mathcal{F}_j(\theta,s;\bq_j)\ (s \leq \theta \leq t)$ by the solution to \eqref{2.13} where $H_s = H_{sj}$ and $W_s = W_{sj}$.
 For the piecewise free moving path $\qdelta = \bigl(\bq_{1\Delta}(\theta;\bx_1^{(0)},\dotsc,\bx_1^{(\nu-1)},\bx_1),\dots, \\
  \bq_{N\Delta}
  (\theta;\bx_N^{(0)},\dotsc,\bx_N^{(\nu-1)},\bx_N)\bigr)\in \bR^{Nd} \ (0 \leq \theta \leq t)$,  we define the probability amplitude by
\begin{align} \label{2.19}
& \exp *iS^{\sharp}_{sw}(t,0;\qdelta) \notag 
\\ &:= \bigl(\exp iS^{\sharp}_w(t,0;\qdelta)\bigr)\mathcal{F}_1(t,0;\bq_{1\Delta})\otimes \cdots \otimes \mathcal{F}_N(t,0;\bq_{N\Delta})
 \end{align}
in terms of the tensor product of matrices,
where $S^{\sharp}_w(t,0;\qdelta)$ is the classical action defined from \eqref{2.17}.  Then we 
determine the approximation $K^{\sharp}_{s\Delta}(t,0)f$ of the RFPI by \eqref{2.15}, where $\exp *iS_{sw}(t,0;\qdelta)$ and $f \in \Czerospace^l$ are replaced with  $\exp *iS^{\sharp}_{sw}(t,0;\qdelta)$ and $f = f_1 \otimes \cdots \otimes f_N$ $\bigl(f_j \in \Czerospace^l, j = 1,2,\dots,N\bigr)$, respectively.
\par
	Writing $A(t,x) = \bigl(\bA_1(t,\bx_1),\dots, \bA_N(t,\bx_N)\bigr) \in \bR^{dN}$ and $V(t,x) = \sum_{j=1}^N 
	V_j(t,\\
	\bx_j) 
		+ \sum_{j,k=1,j \not= k} V_{jk}(t,\bx_j-\bx_k) \in \bR$, we define $E(t,x) \in \bR^{dN}$ and $B_{jk}(t,x) \in \bR\  (1 \leq j < k \leq dN)$ by \eqref{1.1}.  Then we have the following.
	\begin{thm}
	Suppose that Assumptions 2.A is satisfied.  In addition, we assume that each $W_j(t,\bx_j)\ ( j = 1,2,\dots,N)$ satisfies 2.D and that each $W_{sj}(t,\bx_j)$  and $H_{sj}(t,\bx_j)$ satisfies \eqref{1.11} and \eqref{2.16}, respectively.  Let $(L^2)^l\otimes \cdots \otimes(L^2)^l$ denote the tensor product of $N$ copies of $L^2(\bR^d)^l$.
	Then, there exits a constant $\rho'^* > 0$ such that the same statements for $K^{\sharp}_{s\Delta}(t,0)f$ as for $\kdelta(t,0)f$ in Theorem 2.1 hold, where $K^{\sharp}_{s}(t,0)f := \lim_{|\Delta| \to 0} K^{\sharp}_{s\Delta}(t,0)f \in C^0_t([0,T];(L^2)^l\otimes \cdots \otimes(L^2)^l)$ for $f \in (L^2)^l\otimes \cdots \otimes(L^2)^l$ is the unique solution in $C^0_t([0,T];(L^2)^l\otimes \cdots \otimes(L^2)^l)$ to 
\begin{align} \label{2.20}
& i \frac{\partial u}{\partial t}(t)  =\Biggl[\sum_{j=1}^N \Bigg\{\frac{1}{2m}
      \left|\frac{1}{i}\frac{\partial}{\partial \bx_j} - \bA_j(t,\bx_j)\right|^2 + V_j(t,\bx_j) -iW_{j}(t,\bx_j)\notag\\ 
& \quad   + I_1\otimes \cdots \otimes I_{j-1}\otimes \bigl\{H_{sj}(t,\bx_j) -  iW_{sj}(t,\bx_j)\bigl\}\otimes I_{j+1}\otimes \cdots\otimes I_N\Biggr\}  \notag \\
&\qquad  + \sum_{j,k=1,j \not= k}^N V_{jk}(t,\bx_j-\bx_k)
\Biggr]u(t)
\end{align}
with $u(0) = f$.
		\end{thm}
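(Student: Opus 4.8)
The plan is to treat the $N$-particle spin system as a single particle moving in $\bR^{dN}$ and carrying the tensor-product spin space $(L^2)^l\otimes\cdots\otimes(L^2)^l$, so that Theorem~2.1 (applied with $dN$ in place of $d$) governs the spatial oscillatory integral while the matrix-valued analysis behind Theorem~2.3 governs the spin factor. With the combined potentials $V,A$ and fields $E,B$ defined just before the statement, and the combined weight $W(t,x)=\sum_{j=1}^N W_j(t,\bx_j)$, I would first record that the real part of the effective action produces the oscillatory phase and its imaginary part the real damping factor $\exp\bigl(-\int_0^t W\,d\theta\bigr)$, while $\exp *iS^{\sharp}_{sw}$ additionally carries the matrix factor $\mathcal{F}_1(t,0;\bq_{1\Delta})\otimes\cdots\otimes\mathcal{F}_N(t,0;\bq_{N\Delta})$.

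The next step is to check that every hypothesis of the one-particle theory transfers to these combined objects; here the two-body interactions $V_{jk}$ enter only through $V$ and are absorbed by the generality of Theorem~2.1, whose constraints on the potentials are only the field conditions of Assumption~2.A together with continuity. Since each $W_j$ depends only on $\bx_j$, all mixed derivatives $\partial_{\bx_j}\partial_{\bx_k}W$ with $j\neq k$ vanish, so with $C(W)=\sum_j C(W_j)$ the estimates \eqref{2.7}--\eqref{2.9} for $W$ follow at once from those assumed for the individual $W_j$; in particular $\exp(-\int_0^t W\,d\theta)\le e^{C(W)T}$. For the spin factor I would note that the $N$ summands of the generator in \eqref{2.18} act on distinct tensor factors and hence commute, so the time-ordered solution of the full spin ODE factorizes as $\mathcal{F}_1\otimes\cdots\otimes\mathcal{F}_N$, which is exactly \eqref{2.19}; each $\mathcal{F}_j$ is bounded by $e^{CT}$ because $iH_{sj}$ is skew-Hermitian and $W_{sj}$ is bounded by \eqref{1.11}, and the tensor product is bounded by the product of these. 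The bounds \eqref{1.11} and \eqref{2.16} moreover keep the $\bx$-derivatives of each $\mathcal{F}_j$ of every order uniformly bounded, which is what is needed to retain the spin factor in the symbol class below.

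With these preparations I would follow \S 3 and the spin construction of \cite{Ichinose 2007}: introduce the fundamental operator $\mathcal{C}^{\sharp}(t,s)$ whose product over a subdivision $\Delta$ reproduces $K^{\sharp}_{s\Delta}$, and then prove its stability and consistency. Stability is the assertion that the operator norm of $\prod_j\mathcal{C}^{\sharp}(\tau_{j+1},\tau_j)$ on $(L^2)^l\otimes\cdots\otimes(L^2)^l$ stays bounded uniformly for $|\Delta|\le\rho'^*$; this is where Theorem~1.A enters, applied to the pseudo-differential operator whose symbol combines the short-time phase, the damping $\exp(-\int W\,d\theta)$, and the tensor-product matrix factor, with the length of the time step playing the role of the small parameter, so that the leading symbol has modulus $\le 1+O(\tau_{j+1}-\tau_j)$ and the product is controlled by $e^{CT}$. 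The main obstacle, exactly as in Theorem~2.3, is to show that repeatedly differentiating this symbol---above all the path-dependent factor $\mathcal{F}_1\otimes\cdots\otimes\mathcal{F}_N$ together with the weight factor---produces only contributions dominated by the damping, so that the symbol remains in $S^0(\bR^{3dN})$; the vanishing of the cross-derivatives of $W$ and the uniform derivative bounds on the $\mathcal{F}_j$ are precisely what make this succeed, and the tensor structure merely multiplies the individual estimates. Consistency, namely that $\mathcal{C}^{\sharp}(t,s)$ agrees with the true propagator of \eqref{2.20} up to $o(t-s)$, is then checked by a short-time expansion of the oscillatory integral.

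Finally, stability and consistency together yield, by the product-formula argument of the cited works, the $L^2$-convergence of $K^{\sharp}_{s\Delta}(t,0)f$ to a limit $K^{\sharp}_{s}(t,0)f$ uniformly in $t$. To identify this limit with the unique solution of \eqref{2.20} in $C^0_t([0,T];(L^2)^l\otimes\cdots\otimes(L^2)^l)$ and to obtain the remaining assertions, I would invoke the existence and uniqueness theory for non-self-adjoint Schr\"odinger equations in \cite{Ichinose 2019}, after verifying that \eqref{2.20} lies in its scope: the dissipative perturbations $-iW_j$ and $-iW_{sj}$ are controlled because $W_j\ge -C(W_j)$ and $W_{sj}$ is bounded, so the required a~priori estimates hold. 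The independence of the choice of $\chi$ and the gauge formula \eqref{2.11} are then proved exactly as in Theorem~2.1, the tensor spin factor being inert under both operations.
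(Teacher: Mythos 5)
Your proposal is correct and takes essentially the same route as the paper: both reduce the $N$-particle system to the one-particle spin system of Theorem 2.3 posed on $\bR^{dN}$ with $l^N$ spin components, check that the combined weight $W^{\sharp}=\sum_j W_j(t,\bx_j)$ inherits Assumption 2.D (the cross-derivatives vanishing) and that $H^{\sharp}_s$, $W^{\sharp}_s$ inherit \eqref{2.16} and \eqref{1.11}, and exploit the tensor factorization $\mathcal{F}^{\sharp}=\mathcal{F}_1\otimes\cdots\otimes\mathcal{F}_N$ of \eqref{4.19}. The only difference is presentational: the paper invokes Theorem 2.3 as a black box after this reduction, whereas you re-describe the stability/consistency machinery that constitutes its proof.
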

		We see that the $N$-fold tensor product $L^2(\bR^d)^l\otimes \cdots \otimes L^2(\bR^d)^l$ is equal to $\left(L^2(\bR^d)\otimes \cdots \otimes L^2(\bR^d)\right)^{l^N}$ because we have
\begin{equation*}
  \begin{pmatrix}
  g_1(\bx_1) \\ g_2(\bx_1)
    \end{pmatrix}
    \otimes
    \begin{pmatrix}
  h_1(\bx_2) \\ h_2(\bx_2)
    \end{pmatrix}
    = \sum_{i,j = 1}^2g_i(\bx_1)h_j(\bx_2)e_i\otimes e_j
\end{equation*}
with $e_1 = {}^t(1,0)$ and $e_2 = {}^t(0,1)$, for examle, when $N = 2$ and $l = 2$. This shows 
\begin{equation*}
 L^2(\bR^d)^l\otimes \cdots \otimes L^2(\bR^d)^l 
= L^2(\bR^{dN})^{l^N}
\end{equation*}
because of $L^2(\bR^d)\otimes \cdots \otimes L^2(\bR^d) = L^2(\bR^{Nd})$
(cf. II.10 on p.52 in \cite{Reed-Simon I}).
		 In the same way we can define a subspace $B^a(\bR^{dN})^{l^N} (a = 1,2,\dots)$ in $L^2(\bR^d)^l\otimes \cdots \otimes L^2(\bR^d)^l$.  Then we have the following.
\begin{thm}
Suppose that either Assumption 2.A or 2.B is satisfied.  In addition, we suppose  Assumption 2.C for $(V(t,x),A(t,x))$  and that each $W_j(t,\bx_j), W_{sj}(t,\bx_j)$ and $H_{sj}(t,\bx_j)$ satisfies the assumptions stated in Theorem 2.5.  Then, there exits another constant $\rho\,'^* > 0$ such that  the same statements for $K^{\sharp}_{s\Delta}(t,0)f$ as for $\kdelta(t,0)f$ in Theorem 2.2 hold, where we replace $B^a(\bR^d)$ with  $B^a(\bR^{dN})^{l^N}$.  
\end{thm}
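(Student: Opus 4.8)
The plan is to recognize that the $N$-particle spin system is nothing but a \emph{single}-particle spin system living on $\bR^{dN}$ with $l^N$ spin components, and thereby to reduce Theorem 2.6 to the already-available Theorem 2.4. Concretely, I would set the space dimension to $dN$ and the number of spin components to $l^N$, take the combined potential $(V(t,x),A(t,x))$ introduced just before the statement, the combined scalar weight $W(t,x) := \sum_{j=1}^N W_j(t,\bx_j)$, and the combined $l^N\times l^N$ Hermitian spin matrices
\begin{align*}
H_s^{\sharp}(t,x) &:= \sum_{j=1}^N I_1\otimes \cdots \otimes I_{j-1}\otimes H_{sj}(t,\bx_j)\otimes I_{j+1}\otimes \cdots\otimes I_N, \\
W_s^{\sharp}(t,x) &:= \sum_{j=1}^N I_1\otimes \cdots \otimes I_{j-1}\otimes W_{sj}(t,\bx_j)\otimes I_{j+1}\otimes \cdots\otimes I_N.
\end{align*}
Using the identification $L^2(\bR^d)^l\otimes\cdots\otimes L^2(\bR^d)^l = L^2(\bR^{dN})^{l^N}$ recorded after Theorem 2.5, the approximation $K^{\sharp}_{s\Delta}(t,0)f$ becomes exactly the operator $\ksdelta(t,0)f$ of \eqref{2.15} for this single-particle spin system on $\bR^{dN}$, the free normalization exponent $d$ being replaced by $dN$.

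The first step is to verify that the combined data satisfy the hypotheses of Theorem 2.4. The potential assumptions (2.A or 2.B, and 2.C) are imposed directly on $(V,A)$ by hypothesis, so nothing is needed there. For the weight, since each $W_j$ depends only on $\bx_j$, for any multi-index $\alpha$ with $|\alpha|\geq 1$ the derivative $\partial_x^{\alpha}W$ either vanishes (when $\alpha$ involves coordinates of more than one particle) or equals $\partial_{\bx_j}^{\alpha}W_j$ for a single $j$; from this, together with $W \geq -\sum_j C(W_j)$, one checks that $W$ satisfies \eqref{2.7}--\eqref{2.9} with $C(W) = \sum_j C(W_j)$, i.e. Assumption 2.D holds on $\bR^{dN}$. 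Likewise, tensoring with identity matrices changes neither the sup-norms of the entries nor those of their $x$-derivatives, so $H_s^{\sharp}$ and $W_s^{\sharp}$ inherit \eqref{2.16} and \eqref{1.11} from the $H_{sj}$ and $W_{sj}$.

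The second step is to match the probability amplitudes and the target equation. I would show that the tensor product $\calF_1(t,0;\bq_{1\Delta})\otimes\cdots\otimes\calF_N(t,0;\bq_{N\Delta})$ is precisely the solution $\calF(t,0;\qdelta)$ of \eqref{2.13} for the combined matrices $H_s^{\sharp},W_s^{\sharp}$: differentiating the tensor product and using that each factor solves its own ODE produces the generator $-\sum_j I_1\otimes\cdots\otimes\{iH_{sj}+W_{sj}\}\otimes\cdots\otimes I_N = -(iH_s^{\sharp}+W_s^{\sharp})$, with value $I$ at $\theta=0$. Hence \eqref{2.19} coincides with \eqref{2.14} for the combined system, and $K^{\sharp}_{s\Delta}$ agrees with the $\ksdelta$ of Theorem 2.4 on product initial data $f = f_1\otimes\cdots\otimes f_N$; since that operator is bounded and linear and such product vectors are total in $L^2(\bR^{dN})^{l^N}$, the two agree on the whole space. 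Finally, the scalar Hamiltonian built from $\mathcal{L}^{\sharp}_w$ is exactly the operator $H_w(t)$ of \eqref{1.8} on $\bR^{dN}$ with potential $(V,A)$ and weight $W$, while the spin part of \eqref{2.20} equals $H_s^{\sharp}-iW_s^{\sharp}$, so \eqref{2.20} is literally \eqref{1.10} for the combined system. Applying Theorem 2.4 with the substitutions $d\mapsto dN$ and $l\mapsto l^N$ then delivers all assertions of Theorem 2.6 — the $B^a(\bR^{dN})^{l^N}$ convergence and regularity included — together with the solvability and uniqueness for \eqref{2.20} inherited through the results of \cite{Ichinose 2019} already invoked in Theorem 2.4.

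Because this argument is essentially a faithful change of viewpoint, I expect no deep analytic obstacle; the entire content of the $B^a$ estimates and of the non-self-adjoint solvability is borrowed from Theorem 2.4. The one point demanding genuine care, which I regard as the crux, is the verification that the combined weight $W=\sum_j W_j$ still satisfies the coupled bound \eqref{2.8}: it is exactly the sub-additivity of the lower bound in \eqref{2.7} together with the vanishing of mixed-particle derivatives that lets the single-particle estimates transfer, and if one instead permitted weights $W_{jk}(t,\bx_j,\bx_k)$ coupling two particles this reduction would no longer be automatic. A secondary, routine point is to confirm that the free-propagator normalization and the oscillatory-integral construction of \eqref{2.15} transform correctly under $d\mapsto dN$, which I would state explicitly rather than leave implicit.
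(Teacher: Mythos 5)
Your proposal follows essentially the same route as the paper: the paper likewise reduces the $N$-particle system to the single-particle spin case on $\bR^{dN}$ with $l^N$ components by forming $H^{\sharp}_s$, $W^{\sharp}_s$ and $W^{\sharp}=\sum_j W_j(t,\bx_j)$, verifying that these inherit \eqref{2.16}, \eqref{1.11} and the weight estimates (the paper checks the consequence \eqref{3.18} for $W^{\sharp}$ where you check \eqref{2.8} via the sub-additivity of the lower bounds — the same point), using the tensor-product factorization \eqref{4.19} of $\mathcal{F}^{\sharp}$ to identify the amplitudes of \eqref{2.19} with \eqref{2.14}, and then invoking Theorem 2.4 together with $L^2(\bR^{dN})^{l^N}=L^2(\bR^d)^l\otimes\cdots\otimes L^2(\bR^d)^l$. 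Your write-up is correct and in fact supplies more of the verification details than the paper's own very brief argument.
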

%
\begin{rem}
We consider polynomially growing potentials 
\begin{equation} \label{2.21}
V(t,x) = |x|^{2(l_0+1)} + \sum_{|\alpha| \leq 2l_0 + 1}a_{\alpha}(t)x^{\alpha},
\end{equation}
\begin{equation} \label{2.22}
A_j(t,x) =  \sum_{|\alpha| \leq l_0}b_{j\alpha}(t)x^{\alpha}\quad (j = 1,2,\dots,d)
\end{equation}
with an integer $l_0 \geq 1$ and functions $a_{\alpha}(t) \in \bR, b_{j\alpha}(t) \in \bR$ in $C^1([0,T])$.  These potentials $V(t,x)$ and $A(t,x)$ do not satisfy either Assumption 2.A, 2.B or 2.C.  We suppose Assumption 2.D, \eqref{1.11} and \eqref{2.16} for $W(t,x), W_s(t,x)$ and $H_s(t,x)$ respectively, where we replace 
\eqref{2.9} with
\begin{equation} \label{2.23}
|\partial_x^{\alpha}W(t,x)| \leq C_{\alpha}<x>^{l_0+1},\quad |\alpha| \geq 1.
\end{equation}
We define $\kdelta(t,0)f$ and $\ksdelta(t,0)f$ by \eqref{2.1} and \eqref{2.15} respectively.  Then we have the same statements (1) - (4) as in Theorems 2.1 and 2.3.  In more general, we suppose about potentials $(V,A)$ that Assumptions 2.1 and 2.2 in \cite{Ichinose 2020} are satisfied.  Then, we can prove the same statements  (1) - (4) as in Theorems 2.1 and 2.3 by using Lemma 5.6, (5.32), (5.37), Theorem 6.5 in \cite{Ichinose 2020} and Theorem 2.1 in \cite{Ichinose 2019}.  We will not give their proofs here because we can prove them by following the proofs of Theorems 2.1 and 2.3 in the present paper, where we use the semi-norms $|\cdot|_{l}$ in $\Sspace$ in place of  $\Vert\cdot\Vert_{a}$ as in the proofs of Theorems 2.1 - 2.4 of \cite{Ichinose 2020}.
\end{rem}
  %
  \section{Stability of $\Cts(\ts)$}
  Let $S(t,s;q)$ and $S_w(t,s;q)$ be the classical actions defined by \eqref{1.3} and \eqref{1.6}, respectively.
  Let $\qts$  be the straight line defined by
\begin{equation}  \label{3.1}
  q^{t,s}_{x,y}(\theta) = y + \frac{\theta-s}{t-s}(x-y), \quad s\leq \theta \leq t
\end{equation}
and write 
\begin{equation} \label{3.2}
\gamma^{t,s}_{x,y}: \gamma^{t,s}_{x,y}(\theta) = (\theta,\qts(\theta)) \in \bR^{d+1}, \  s\leq \theta \leq t.
\end{equation}
Then we have
\begin{align}  \label{3.3}
   S(t,s;q^{t,s}_{x,y})
       & = \frac{m|x - y|^2}{2(t - s)} + \int_{\gamma^{t,s}_{x,y}} \bigl(A\cdot dx - Vdt\bigr)
        \notag \\
     & = \frac{m|x - y|^2}{2(t - s)} + (x - y)\cdot\int^1_0 A(s+ \theta\rho,
y+ \theta (x - y))d\theta \notag \\
   & \qquad -  \int^t_s  V\left(\theta, y+ \frac{\theta - s}{t-s} (x - y)\right)d\theta \notag \\
   & = \frac{m|x - y|^2}{2(t - s)} + (x - y)\cdot\int^1_0 A(t - \theta\rho,
x - \theta (x - y))d\theta \notag \\
   & \qquad - \rho \int^1_0  V(t - \theta\rho,x - \theta (x - y))d\theta, \ \rho = t - s,
\end{align}
\begin{align}  \label{3.4}
   & S_w(t,s;q^{t,s}_{x,y})
        = S(t,s;q^{t,s}_{x,y}) + i\int^t_s  W\left(\theta, y+ \frac{\theta - s}{t-s} (x - y)\right)d\theta \notag \\
       &\qquad = S(t,s;q^{t,s}_{x,y}) + i\rho \int^1_0  W(t - \theta\rho,x - \theta (x - y))d\theta.
\end{align}
Throughout the present paper we often write $\rho = t - s$.
\par
Let $M \geq 0$ be an integer and suppose that $p(x,w) \in \Cspace(\bR^{2d})$ satisfies 
\begin{equation} \label{3.5}
|\partial_{w}^{\alpha}\partial_{x}^{\beta}p(x,w)| \leq C_{\alpha\beta}<x;w>^M, \ (x,w) \in \bR^{2d}
\end{equation}
for all $\alpha$ and $\beta$, where $<x;w> = \sqrt{1 + |x|^2 + |w|^2}$. For $f \in \Czerospace$ we define
\begin{equation}  \label{3.6}
P(t,s)f =
        \begin{cases}
            \begin{split}
              & \sqrt{m/(2\pi i\rho)}^{\ d}
                  \int \bigl(\exp iS_w(t,s; q^{t,s}_{x,y})\bigr) \\
         &\hspace{2cm} \times    p(x,(x-y)/\sqrt{\rho})f(y)dy,\quad  s < t,
                      \end{split}
                  \\
\begin{split}
        & \sqrt{m/(2\pi i)}^{\ d}
    \text{Os}-\int (\exp im|w|^2/2)\\
             &\hspace{2cm}\times p(x,w)dwf(x), \quad s = t.
             \end{split}
        \end{cases}
\end{equation}
 Then the formal adjoint operator $P(t,s)^{\dag}$ of $P(t,s)$ on $\Czerospace$ is given by
\begin{equation*}
P(t,s)^{\dag}f =
        \begin{cases}
            \begin{split}
              & \sqrt{i m/(2\pi \rho)}^{\ d}
                  \int \bigl(\exp iS_w(t,s; q^{t,s}_{y,x})\bigr)^* \\
         &\hspace{2cm} \times    p(y,(y-x)/\sqrt{\rho})^*f(y)dy,\quad  s < t,
                      \end{split}
                  \\
\begin{split}
        & \sqrt{im/(2\pi )}^{\ d}
    \text{Os}-\int (\exp -im|w|^2/2)\\
             &\hspace{2cm}\times p(x,w)^{*}dwf(x),\quad  s = t.
             \end{split}
        \end{cases}
\end{equation*}
We can prove the following as in the proof of Lemma 2.1 of \cite{Ichinose 1999}.
\begin{lem}
Let $p(x,w)$ be a function satisfying \eqref{3.5}.  We assume \eqref{2.7}.  In addition, we assume that $\partial_x^{\alpha}V(t,x)$ and  $\partial_x^{\alpha}A_j(t,x)\ (j = 1,2,\dots,d)$  are continuous in $\domain$ for all $\alpha$ and that there exists a constant $M' \geq 0$ satisfying 
\begin{equation*}
|\partial_x^{\alpha}V(t,x)| + \sum_{j=1}^d|\partial_x^{\alpha}A_j(t,x)| + |\partial_x^{\alpha}W(t,x)| \leq C_{\alpha}<x>^{M'}
\end{equation*}
in $\domain$ for all $\alpha$.
 Then, for $f \in \Sspace$ $\partial_{x}^{\alpha}(P(t,s)f)$ are continuous in $0 \leq s \leq t \leq T$ and $x \in \bR^{d}$ for all $\alpha$.
\end{lem}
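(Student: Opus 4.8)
The plan is to normalize the Gaussian phase by the substitution $y = x - \sqrho\,w$ and then treat the resulting expression as a standard oscillatory integral with the non-degenerate phase $m|w|^2/2$, following the proof of Lemma 2.1 in \cite{Ichinose 1999}. Under this substitution $dy = \rho^{d/2}\,dw$, so the prefactor $\sqrt{m/(2\pi i\rho)}^{\,d}$ collapses to $\sqrt{m/(2\pi i)}^{\,d}$ independently of $\rho$, and by \eqref{3.3}--\eqref{3.4} the phase satisfies $iS_w = im|w|^2/2 + i\phi$, where
\begin{align*}
\phi(t,s,x,w) = {} & \sqrho\,w\cdot\int_0^1 A(t-\theta\rho,x-\theta\sqrho\,w)\,d\theta - \rho\int_0^1 V(t-\theta\rho,x-\theta\sqrho\,w)\,d\theta \\
& + i\rho\int_0^1 W(t-\theta\rho,x-\theta\sqrho\,w)\,d\theta .
\end{align*}
Thus for $s < t$ we may write $P(t,s)f = \sqrt{m/(2\pi i)}^{\,d}\int e^{im|w|^2/2}\,a(t,s,x,w)\,dw$ with amplitude $a(t,s,x,w) = e^{i\phi(t,s,x,w)}p(x,w)f(x-\sqrho\,w)$, and the $s = t$ case of \eqref{3.6} is exactly the formal $\rho = 0$ version of this expression.

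The first step is to check that $a$ is a legitimate Os-integral amplitude. Since $\mathrm{Re}(i\phi) = -\rho\int_0^1 W\,d\theta$, the bound $W \ge -C(W)$ from \eqref{2.7} gives $|e^{i\phi}| = \exp\bigl(-\rho\int_0^1 W\,d\theta\bigr) \le e^{TC(W)}$, bounded uniformly in the parameters; the $w$-derivatives of $e^{i\phi}$ bring down the factors $\sqrho\,A$, $\sqrho\,\partial_x A$, $\rho\,\partial_x V$, $\rho\,\partial_x W$, which are polynomially bounded in $w$ by the growth hypothesis $|\partial_x^{\alpha}V|+\sum_j|\partial_x^{\alpha}A_j|+|\partial_x^{\alpha}W| \le C_{\alpha}\langle x\rangle^{M'}$. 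Together with the symbol bound \eqref{3.5} on $p$ and the boundedness of $f \in \Sspace$ and its derivatives, these yield, for every multi-index $\beta$, an estimate $|\partial_w^{\beta}a(t,s,x,w)| \le C_{\beta,K}\langle w\rangle^{M''}$ holding locally uniformly for $(t,s,x)$ in compact sets. Hence, by the Os-integral theory (p.~45 of \cite{Kumano-go}), the integral is well defined as $\limepsilon$ of the regularizations with a cutoff $\chi(\epsilon w)$, and equals the expression in \eqref{3.6}.

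The second step is differentiation and continuity. Applying $\partial_x^{\alpha}$ under the Os-integral sign, the derivatives fall on $e^{i\phi}$, on $p(x,w)$, and on $f(x-\sqrho\,w)$; each produces an amplitude of the same type, since $\partial_x^{\alpha}p$ again satisfies \eqref{3.5}, derivatives of $f$ remain in $\Sspace$, and $\partial_x$ of $\phi$ contributes only the polynomially bounded factors listed above. Differentiation under the Os-integral sign, and the reduction to absolutely convergent integrals, are justified by the standard integration-by-parts identity $L\,e^{im|w|^2/2} = e^{im|w|^2/2}$ with $L = (1+m^2|w|^2)^{-1}\bigl(1 + (m/i)\,w\cdot\nabla_w\bigr)$, applied $k$ times with $k$ large. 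Since each $\partial_x^{\alpha}a$ is jointly continuous in $(t,s,x,w)$ with $w$-derivative bounds $C_{\beta,K}\langle w\rangle^{M''}$ uniform for $(t,s,x)$ in compacts, the parameter-continuity of Os-integrals then gives that $\partial_x^{\alpha}(P(t,s)f)$ is continuous on $0 \le s \le t \le T$, $x\in\bR^d$.

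The main obstacle is the behaviour near the diagonal $s = t$, i.e.\ $\rho \to 0$, where the change of variables $w = (x-y)/\sqrho$ degenerates. Here one must verify that the $\sqrho$- and $\rho$-dependent factors in $\phi$ and their $w$-derivatives stay bounded by $C_K\langle w\rangle^{M''}$ \emph{uniformly} as $\rho\to 0$, despite the polynomial growth of $A$, $V$ and $W$; this is exactly where the vanishing prefactors $\sqrho$, $\rho$ compensate the growth. Granting this, $a \to p(x,w)f(x)$ with uniform-in-compact derivative bounds, and the parameter-continuity of the Os-integral forces $P(t,s)f$, together with each $\partial_x^{\alpha}(P(t,s)f)$, to converge as $s\uparrow t$ to the $s=t$ value in \eqref{3.6}. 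Establishing these uniform symbol estimates up to $\rho = 0$ is the technical heart of the argument, and it is precisely the estimate handled as in Lemma 2.1 of \cite{Ichinose 1999}.
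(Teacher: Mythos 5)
Your proof is correct and takes essentially the same route as the paper, which proves this lemma simply by deferring to Lemma 2.1 of \cite{Ichinose 1999}: your substitution $y = x-\sqrt{\rho}\,w$ and the resulting phase $\phi$ coincide verbatim with the paper's own reduction \eqref{3.34}--\eqref{3.35}, and the oscillatory-integral estimates (uniform symbol bounds up to $\rho=0$, integration by parts via $L$, parameter continuity) are exactly the content of that cited lemma. You have merely written out in detail what the paper delegates to the reference.
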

 	In particular, when $p(x,w) = 1$, we write $P(t,s)f$ as $\Cts(t,s)f$.  That is, 
\begin{equation}  \label{3.7}
\mathcal{C}(t,s)f =
        \begin{cases}
            \begin{split}
              & \sqrt{m/(2\pi i\rho)}^{\ d}
                  \int \bigl(\exp iS_w(t,s; q^{t,s}_{x,y})\bigr)f(y)dy, \quad s < t, \\
                      \end{split}
                  \\
\begin{split}
        & f,\qquad s = t.
             \end{split}
        \end{cases}
\end{equation}
Then, from \eqref{2.1} we can write 
\begin{equation}  \label{3.8}
     \kdelta(t,0)f = \limepsilon{\cal C}(t,\tau_{\nu-1})\chi(\epsilon\cdot){\cal C}(\tau_{\nu-1},\tau_{\nu-2})\chi(\epsilon\cdot)\cdots \chi(\epsilon\cdot)
{\cal C}(\tau_1,0)f
\end{equation}
for $f \in \Czerospace$.
\par
	For a weight function $W(t,x)$ we set 
\begin{equation}  \label{3.9}
    c_w(t,s;x,y) = \exp \left(-\rho\int_0^1W(t - \theta\rho,x - \theta(x - y))d\theta\right), \ \rho = t - s.
    \end{equation}
\begin{lem}
Let $p(x,w)$ be a function satisfying \eqref{3.5}. We assume that $\partial_x^{\alpha}V(t,x), \partial_x^{\alpha}A_j(t,x)$ and  $\partial_x^{\alpha}\partial_tA_j(t,x)$ are continuous in $\domain$ for  $|\alpha| \leq 1$ and $j = 1,2,\dots,d$. Let $f \in \Czerospace$. Then for any $0 < \epsilon \leq 1$ and $0 \leq s < t \leq T$ we have
\begin{align}  \label{3.10}
    & P(t,s)^{\dag}\chi(\epsilon\cdot)^2P(t,s)f = \left(\frac{m}{2\pi(t-s)}\right)^d\int f(y)dy \int\chi(\epsilon z)^2
    \notag \\
    &\qquad \times  \left(\exp i(x - y)\cdot \frac{m\Phi}{t-s} \right) c_w(t,s;z,x)c_w(t,s;z,y)\notag \\
   &\qquad \times p\left(z,\frac{z - x}{\sqrt{t-s}}\right)^*p\left(z,\frac{z - y}{\sqrt{t-s}}\right)dz,
\end{align}
\begin{equation*} 
  \Phi = \Phi(t,s;x,y,z) = (\Phi_1,\dots,\Phi_d),
\end{equation*}
\begin{align}  \label{3.11}
    & \Phi_j =  z_j - \frac{x_j + y_j}{2} + \frac{t - s }{m}
              \int_0^1A_j(s,x + \theta(y - x))d\theta           \notag \\
      & -\frac{(t - s)^2 }{m}\int_0^1\int_0^1
                \sigma_1E_j(\tau(\sigma),\zeta(\sigma))
                       d\sigma_1d\sigma_2 \notag \\
                        & - \frac{t - s }{m}\sum_{k=1}^d (z_k - 
x_k)\int_0^1\int_0^1\sigma_1B_{jk}
                     (\tau(\sigma),\zeta(\sigma))d\sigma_1d\sigma_2  
\end{align}
or
\begin{align}  \label{3.12}
    & \Phi_j =  z_j - \frac{x_j + y_j}{2} + \frac{t - s }{m}
              \int_0^1A_j(s,x + \theta(y - x))d\theta           \notag \\
      & -\frac{(t - s)^2 }{m}\int_0^1\int_0^1
                \sigma_1E_j(\tau(\sigma),\zeta(\sigma))
                       d\sigma_1d\sigma_2  - \frac{(t - s)^2 }{m}\int_0^1d\theta \sum_{k=1}^d (z_k - 
x_k) \notag \\
                        &\qquad \times \int_0^1\int_0^1\sigma_1(1-\sigma_1)\frac{\partial B_{jk}}{\partial t}
                     (s+\theta(1-\sigma_1)\rho,\zeta(\sigma))d\sigma_1d\sigma_2, 
\end{align}
where
\begin{equation}  \label{3.13}
    \bigl(\tau(\sigma),\zeta(\sigma)\bigr)         
       = \bigl(t - \sigma_1(t - s), z + \sigma_1(x - z)
                  + \sigma_1\sigma_2(y - x)\bigr)\ \in \bR^{d+1}.
\end{equation}
\end{lem}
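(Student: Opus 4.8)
The plan is to reduce the asserted operator identity to a single computation of a difference of classical actions, and then to evaluate that difference by a space-time Stokes (homotopy) argument over a triangle.

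First I would write out $P(t,s)^{\dag}\chi(\epsilon\cdot)^2P(t,s)f$ directly from the definition \eqref{3.6} of $P(t,s)$ and the displayed formula for its formal adjoint, using $z$ for the variable on which the middle factor $\chi(\epsilon\cdot)^2$ acts. Two elementary simplifications then expose the structure of \eqref{3.10}. The product of the two normalizing constants $\sqrt{im/(2\pi\rho)}^{\,d}\,\sqrt{m/(2\pi i\rho)}^{\,d}$ collapses to $(m/(2\pi(t-s)))^d$, giving the prefactor. Next, splitting $S_w = S + i\rho\int_0^1 W\,d\theta$ by \eqref{3.4}, the $W$-dependent (imaginary) parts exponentiate to the real factors $c_w$ of \eqref{3.9}, since
\begin{equation*}
\bigl(\exp iS_w(t,s;q^{t,s}_{z,x})\bigr)^*\bigl(\exp iS_w(t,s;q^{t,s}_{z,y})\bigr) = c_w(t,s;z,x)\,c_w(t,s;z,y)\,\exp\bigl(i[S(t,s;q^{t,s}_{z,y}) - S(t,s;q^{t,s}_{z,x})]\bigr).
\end{equation*}
The amplitude symbols simply combine into $p(z,(z-x)/\sqrt{t-s})^*\,p(z,(z-y)/\sqrt{t-s})$ and do not affect the phase. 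This reduces the whole claim to showing that the real action difference $S(t,s;q^{t,s}_{z,y}) - S(t,s;q^{t,s}_{z,x})$ equals $\frac{m}{t-s}(x-y)\cdot\Phi$ with $\Phi$ as in \eqref{3.11}.

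For this difference I would separate the kinetic and the potential contributions using \eqref{3.3}. The kinetic part is immediate: $\tfrac{m}{2(t-s)}(|z-y|^2-|z-x|^2) = \tfrac{m}{t-s}(x-y)\cdot(z-(x+y)/2)$, which produces the leading term $z_j - (x_j+y_j)/2$ of $\Phi_j$. The potential part is the difference of the line integrals $\int_\gamma(A\cdot dx - V\,dt)$ taken along the two space-time segments $\gamma^{t,s}_{z,x}$ and $\gamma^{t,s}_{z,y}$, which share the terminal point $(t,z)$ and differ only in their initial points $(s,x)$ and $(s,y)$. I would close these two segments into a loop by the spatial segment joining $(s,x)$ to $(s,y)$ and apply Stokes' theorem to the electromagnetic $1$-form $A\cdot dx - V\,dt$: the closing segment contributes the boundary term $\tfrac{t-s}{m}\int_0^1 A_j(s,x+\theta(y-x))d\theta$, while the integral of the field $2$-form over the enclosed triangle with vertices $(t,z),(s,x),(s,y)$ contributes the $E$- and $B$-terms. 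The map \eqref{3.13} is precisely the ``fan'' parametrization of that triangle by $(\sigma_1,\sigma_2)\in[0,1]^2$, whose Jacobian supplies the weight $\sigma_1$; the mixed time-space components of the field give $E_j$ and the spatial components give $B_{jk}$ once the defining relations \eqref{1.1} are invoked. Concretely this is carried out by writing the difference as $\int_0^1 \frac{d}{d\lambda}I(x+\lambda(y-x))\,d\lambda$, where $I(p)$ is the potential part of the action along the segment from $(s,p)$ to $(t,z)$, and applying the fundamental theorem of calculus a second time in the time variable; this extracts the factor $(x-y)$ and yields \eqref{3.11}. The alternative form \eqref{3.12} then comes from re-expressing the magnetic contribution along the triangle through its time-derivative $\partial_t B_{jk}$ by an integration by parts in the time variable, which is the form tailored to Assumption 2.B, where decay is assumed on $\partial_t B_{jk}$ rather than on $B_{jk}$ itself.

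The hard part will be the precise bookkeeping of this double use of the fundamental theorem of calculus: keeping track of the weights $\sigma_1$ and $\sigma_1(1-\sigma_1)$, of the signs, and above all of the coupling between the time-expansion of $A_j$ and the spatial gradient of $V$, so that exactly the combination $\partial_t A_j + \partial_{x_j}V = -E_j$ emerges; the reorganization producing \eqref{3.12} demands the same care. Throughout, the hypotheses that $V$, $A_j$ and $\partial_t A_j$ together with their first $x$-derivatives are continuous on $\domain$ guarantee that all differentiations under the integral sign and all integral remainders are legitimate, while \eqref{3.5} and \eqref{2.7} ensure that the oscillatory integral defining $P(t,s)^{\dag}\chi(\epsilon\cdot)^2P(t,s)f$ and its reduction in \eqref{3.10} are well defined.
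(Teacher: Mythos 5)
Your reduction step coincides with the paper's entire proof: the paper simply rewrites $P(t,s)f$ in the form \eqref{3.14}, i.e.\ observes that by \eqref{3.4} and \eqref{3.9} the weight only converts the amplitude $p(x,w)$ into $p(x,w)\,c_w(t,s;x,x-\sqrt{\rho}\,w)$, and then invokes Lemma 5.2 of \cite{Ichinose 2020}, which is exactly the $W$-free composition identity that you go on to rederive from scratch. So your route is the same at the decisive point (factoring out $c_w$) and is merely more self-contained afterwards. Your derivation of \eqref{3.10} with $\Phi$ given by \eqref{3.11} is correct: the constants combine to $(m/(2\pi(t-s)))^d$, the imaginary parts of $S_w$ exponentiate to $c_w(t,s;z,x)c_w(t,s;z,y)$, the kinetic difference gives $\frac{m}{t-s}(x-y)\cdot\bigl(z-\frac{x+y}{2}\bigr)$, and the first-variation/Stokes computation over the space-time triangle with apex $(t,z)$ in the fan parametrization \eqref{3.13} (whose variation field is $\sigma_1(y-x)$, which is what produces the weight $\sigma_1$) yields the $A(s,\cdot)$ boundary term, the $E_j$ term via $\partial_tA_j+\partial_{x_j}V=-E_j$, and the $B_{jk}$ term.

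The gap is in your one-sentence account of \eqref{3.12}. Expanding $B_{jk}$ at frozen time $s$ along \eqref{3.13}, where $\tau(\sigma)-s=(1-\sigma_1)\rho$, gives
\begin{equation*}
B_{jk}(\tau(\sigma),\zeta(\sigma))=B_{jk}(s,\zeta(\sigma))+(1-\sigma_1)\rho\int_0^1\frac{\partial B_{jk}}{\partial t}\bigl(s+\theta(1-\sigma_1)\rho,\zeta(\sigma)\bigr)\,d\theta ,
\end{equation*}
so the $B$-term of \eqref{3.11} equals the $\partial_tB_{jk}$-term of \eqref{3.12} \emph{plus} the residual
\begin{equation*}
-\frac{t-s}{m}\sum_{k=1}^d(z_k-x_k)\int_0^1\!\!\int_0^1\sigma_1\,B_{jk}(s,\zeta(\sigma))\,d\sigma_1d\sigma_2 ,
\end{equation*}
whose contribution to the phase $(x-y)\cdot\frac{m\Phi}{t-s}$ is (up to sign) the flux of the $2$-form $\sum_{j<k}B_{jk}(s,\cdot)\,dx_j\wedge dx_k$ through the spatial triangle with vertices $x,y,z$. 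This is not zero in general (take $d=2$, $A=(0,x_1)$, $V=0$, so $B_{12}\equiv1$, $E=0$, $\partial_tB_{12}=0$: the residual is twice the signed area of the triangle), and it does not appear in \eqref{3.12}. Hence ``integration by parts in the time variable'' alone does not pass from \eqref{3.11} to \eqref{3.12}; disposing of this frozen-time flux term is precisely the nontrivial content of the alternative form (Lemma 3.6 of \cite{Ichinose 1997}, Lemma 5.2 of \cite{Ichinose 2020}), and your sketch as written would fail at this step.
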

\begin{proof}
From \eqref{3.4}, \eqref{3.6} and \eqref{3.9}  we can write
\begin{align}  \label{3.14}
    P(t,s)f & = \sqrt{m/(2\pi i\rho)}^{\ d}
                  \int \bigl(\exp iS(t,s; q^{t,s}_{x,y})\bigr)c_w(t,s;x,y) \notag \\
                  & \times p\left(x,\frac{x - y}{\sqrt{t-s}}\right)f(y)dy,\quad \rho = t - s > 0.
\end{align}
Hence, we can easily prove Lemma 3.2 from Lemma 5.2 in \cite{Ichinose 2020}.
\end{proof}
\begin{lem}
We assume that $ \partial_x^{\alpha}A_j(t,x)\ (j = 1,2,\dots,d)$  are continuous for all $\alpha$ and satisfy \eqref{2.5}  in $\domain$.
\\
(1) Suppose that  Assumption 2.A is satisfied.  Let $\Phi_j(t,s;x,y,z)\ (j = 1,2,\dots,d)$ be the functions defined by \eqref{3.11}.  Then, there exist a constant $\rho^* > 0$ such that for all fixed $0 \leq t - s \leq \rho^*$ and $(x,y) \in \bR^{2d}$, the map: $\bR^d \ni z \to \xi = \Phi(\ts;x,y,z) \in \bR^d$ is a homeomorphism, whose inverse will be denoted by the map: $\bR^d \ni \xi \to z = z(\ts;x,\xi,y) \in \bR^d$, and we have 
\begin{equation}   \label{3.15}
           \sum_{j=1}^d |\partial_{\xi}^{\alpha}\partial_x^{\beta}\partial_{y}^{\gamma}
            z_j(t,s;x,\xi,y)|  \leq C_{\alpha\beta\gamma},
\  |\alpha + \beta  + \gamma| \geq 1,
           \end{equation}
\begin{equation}   \label{3.16}
           \det \frac{\partial z}{\partial\xi}(t,s;x,\xi,y) = 1 + (t - s)h(t,s;x,\xi,y) > 0,
           \end{equation}
\begin{equation}   \label{3.17}
            |\partial_{\xi}^{\alpha}\partial_x^{\beta}\partial_{y}^{\gamma}
            h(t,s;x,\xi,y)|  \leq C_{\alpha\beta\gamma} < \infty
           \end{equation}
for all $\alpha, \beta$ and $\gamma$ in $0 \leq t - s \leq \rho^*$ and $(x,\xi,y) \in \bR^{3d}$.
\\
(2)  Suppose that Assumption 2.B is satisfied.  Let  $\Phi_j(t,s;x,y,z)\ (j = 1,2,\dots,d)$ be the functions defined by \eqref{3.12}.  Then we have the same statements as in  (1).
\end{lem}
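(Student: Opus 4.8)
The plan is to treat the map $z \mapsto \xi = \Phi(t,s;x,y,z)$, for each fixed $(t,s,x,y)$ with $\rho = t-s$ small, as a perturbation of the identity. I would write $\Phi_j = z_j + g_j(t,s;x,y,z)$, collecting in $g_j$ the $z$-independent pieces (the term $-(x_j+y_j)/2$ and the $A$-integral) together with the $E$-integral and the magnetic sum of \eqref{3.11}. The decisive object is the Jacobian $\partial\Phi/\partial z$. Differentiating \eqref{3.11} gives $\partial\Phi_j/\partial z_l = \delta_{jl} + \rho\, m_{jl}(t,s;x,y,z)$, where $m_{jl}$ gathers three contributions: (i) the $E$-term, which carries an extra factor $\rho$ and is bounded by \eqref{2.2}; (ii) the undifferentiated magnetic term $-m^{-1}\int_0^1\!\int_0^1 \sigma_1 B_{jl}(\tau(\sigma),\zeta(\sigma))\,d\sigma_1 d\sigma_2$; and (iii) the differentiated term carrying the factor $(z_k-x_k)\,\partial_{x_l}B_{jk}$. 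The first task is to show that $m_{jl}$, together with all its $\partial_z,\partial_x,\partial_y$ derivatives, is bounded uniformly in $0\le\rho\le\rho^*$ and $(x,y,z)\in\bR^{3d}$, so that $\partial\Phi/\partial z = I + \rho M$ with $M$ and all its derivatives controlled.

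The main obstacle lies in the two magnetic contributions, because \eqref{2.3} only bounds derivatives of $B_{jk}$ of order $\ge 1$ (not $B_{jk}$ itself), while (iii) contains the linearly growing factor $(z_k-x_k)$. For (ii) I would note that the hypothesis $|\partial_x^{\alpha}B_{jk}|\le C_{\alpha}<x>^{-(1+\delta_{\alpha})}$ with $\delta_\alpha>0$ is integrable along rays: writing $B_{jk}(t,x)=B_{jk}(t,0)+\int_0^1\nabla_xB_{jk}(t,\theta x)\cdot x\,d\theta$ and substituting $u=\theta|x|$ shows $|B_{jk}(t,x)|\le |B_{jk}(t,0)|+C\int_0^\infty(1+u^2)^{-(1+\delta)/2}du<\infty$, where $B_{jk}(t,0)$ is bounded by continuity on $[0,T]$; the same argument bounds every higher $x$-derivative. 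For (iii), writing $\zeta(\sigma)=(1-\sigma_1)z+\sigma_1 x+\sigma_1\sigma_2(y-x)$ one has $<\zeta>\gtrsim (1-\sigma_1)|z-x|$, so that the decay $|\partial_{x_l}B_{jk}(\tau,\zeta)|\le C<\zeta>^{-(1+\delta)}$ absorbs the growth of $(z_k-x_k)$ after integration in $\sigma_1$. This is precisely the estimate carried out for the corresponding quantity in \cite{Ichinose 1999, Ichinose 2020}, and from it the uniform bounds on $m_{jl}$ and its derivatives follow.

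Granted these bounds, the rest is routine. I would fix $\rho^*$ so small that $\rho^*\sup|m_{jl}|<1/2$; then $\partial\Phi/\partial z = I+\rho M$ is invertible with uniformly bounded inverse, and the equation $\Phi(z)=\xi$ is equivalent to the fixed-point problem $z = \xi - g(t,s;x,y,z)$, whose right-hand side has $z$-Lipschitz constant $\rho\sup|m_{jl}|<1$. The contraction mapping principle then yields, for each $\xi$, a unique solution $z=z(t,s;x,\xi,y)$ depending continuously on $\xi$ (and on $t,s,x,y$), i.e.\ the asserted homeomorphism. The estimates \eqref{3.15} follow by differentiating the identity $\xi\equiv\Phi(t,s;x,y,z(t,s;x,\xi,y))$: solving for $\partial_\xi z=(\partial\Phi/\partial z)^{-1}$ gives the first-order bounds, and higher-order bounds follow by induction via the chain rule, each step invoking only the already-established uniform boundedness of the order-$\ge1$ derivatives of $\Phi$. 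Finally \eqref{3.16} and \eqref{3.17} come from $\det(\partial z/\partial\xi)=\det(I+\rho M)^{-1}=1+\rho h$, expanding the determinant and using the bounds on $M$ and its derivatives, with positivity guaranteed after a further shrinking of $\rho^*$.

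For part (2) the argument is identical with \eqref{3.11} replaced by \eqref{3.12}. The only change is that the magnetic contribution now enters through $(z_k-x_k)\,\partial_tB_{jk}$ with an additional factor $\rho$, and is controlled under Assumption 2.B by the decay \eqref{2.4} exactly as (iii) was controlled by \eqref{2.3}; since $\partial_tB_{jk}$ is likewise bounded by the ray-integration argument applied to \eqref{2.4}, one again obtains $\partial\Phi/\partial z = I+O(\rho)$ with uniformly bounded derivatives, and the same inverse-function and contraction reasoning applies verbatim to yield the same conclusions.
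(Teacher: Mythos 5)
Your overall strategy --- write $\partial\Phi/\partial z = I + \rho M$ with $M$ and all its $(x,y,z)$-derivatives uniformly bounded, invert by a contraction / perturbation-of-identity argument, and read \eqref{3.15}--\eqref{3.17} off $(\partial\Phi/\partial z)^{-1}$ and its determinant --- is exactly the route of the sources the paper relies on; the paper itself offers no independent argument (part (1) is quoted from Lemma 3.2, (3.9), (3.10) of \cite{Ichinose 1999}, and part (2) is reduced to part (1) by checking that the extra term $B'$ in \eqref{3.12} satisfies \eqref{3.17-2}). So the skeleton of your proof is sound, as is your reduction of part (2), your ray-integration bound for $B_{jk}$ itself, and the determinant expansion.

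However, the one step that carries all the real difficulty is justified by a false inequality. You claim $\langle\zeta(\sigma)\rangle\gtrsim(1-\sigma_1)|z-x|$ for $\zeta(\sigma)=z+\sigma_1(x-z)+\sigma_1\sigma_2(y-x)$. This fails: take $y=x=-z$ and $\sigma_1=1/2$, so $\zeta=0$ while $(1-\sigma_1)|z-x|=|z|$ is arbitrarily large. Since $x,y,z$ range independently over $\bR^d$, no pointwise lower bound of this type holds, and the decay $\langle\zeta\rangle^{-(1+\delta_\alpha)}$ cannot absorb the factor $(z_k-x_k)$ pointwise. The correct control of $(z_k-x_k)\int\!\!\int\sigma_1(1-\sigma_1)(\partial_{x_l}B_{jk})(\tau(\sigma),\zeta(\sigma))\,d\sigma_1d\sigma_2$ (and of the analogous terms produced by higher derivatives, and by $\partial_t B_{jk}$ in case (2)) is an \emph{integrated} estimate: for fixed $\sigma_2$ the curve $\sigma_1\mapsto\zeta$ is a segment with velocity $v=(x-z)+\sigma_2(y-x)$, and projecting onto the direction of $v$ gives $\int_0^1\langle z+\sigma_1v\rangle^{-(1+\delta)}\,d\sigma_1\le C_\delta/|v|$; one must then distinguish the case $|v|\gtrsim|z-x|$, where this suffices, from the case $|v|\ll|z-x|$, where necessarily $|y-x|\gtrsim|z-x|$ and one integrates in $\sigma_2$ instead, gaining a factor $1/(\sigma_1|y-x|)$. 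This two-case argument is precisely the content of Lemma 3.6 and (3.15) of \cite{Ichinose 1997} (resp. Lemma 3.2 of \cite{Ichinose 1999}), and it is the ingredient your write-up is missing; with it supplied, the remainder of your proof goes through as stated.
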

\begin{proof}
We have already proved (1) in Lemma 3.2, (3.9) and (3.10) of \cite{Ichinose 1999} (cf.  Lemma 3.6, (3.18) and (3.19) of \cite{Ichinose 1997}). 
\par
	We will prove (2).  Let us write the $5$-th term on the right-hand side of \eqref{3.12} as $-(t-s)^2B'(t,s;x,y,z)/m$.  Then, from the assumption \eqref{2.4} we can prove
\begin{equation}   \label{3.17-2}
            |\partial_{x}^{\alpha}\partial_y^{\beta}\partial_{z}^{\gamma}
            B'_j(t,s;x,\xi,y)|  \leq C_{\alpha\beta\gamma}, \quad |\alpha + \beta + \gamma| \geq 1
  \end{equation}
  in $0 \leq s \leq t \leq T$ and $(x,y,z) \in \bR^{3d}$  as in the proof of \eqref{3.15} of \cite{Ichinose 1997}, where $B' = (B'_1,\cdots,B'_d) \in \bR^d$.  Thereby we can prove (2) as in the proof of (1).
\end{proof}
 	From now on we fix $\rho^* > 0$ determined in Lemma 3.3 throughout the present paper.  The following lemma is crucial in the present paper.
	\begin{lem}
	Assume \eqref{2.7} and \eqref{2.8} where we take $C(W) = 0$.  Let $c_w(t,s;x,y)$ be the function defined by \eqref{3.9}.  Then we have
\begin{equation}   \label{3.18}
            0 \leq c_w(t,s;x,y) \leq 1, \ 
           |\partial_x^{\alpha}\partial_y^{\beta}c_w(t,s;x,y)| \leq C_{\alpha\beta}, \ |\alpha + \beta| \geq 1
 \end{equation}
for $0 \leq s \leq t \leq T$ and $(x,y) \in \bR^{2d}$ with constants $C_{\alpha\beta} \geq 0$.
	\end{lem}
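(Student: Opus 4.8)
The plan is to write $c_w = e^{g}$ with exponent $g(t,s;x,y) = -\rho\int_0^1 W\bigl(t-\theta\rho,\,(1-\theta)x+\theta y\bigr)\,d\theta$ and $\rho = t-s \ge 0$, and to reduce everything to estimates on the derivatives of $g$. The bound $0\le c_w\le 1$ is then immediate: since $C(W)=0$, hypothesis \eqref{2.7} reads $W\ge 0$, so the integral defining $g$ is nonnegative, whence $g\le 0$ and $0<c_w=e^{g}\le 1$. Writing $\xi(\theta)=(1-\theta)x+\theta y$, I note that the time argument $t-\theta\rho$ is independent of $(x,y)$, so the chain rule gives, for $|\alpha+\beta|\ge 1$,
\[
\partial_x^{\alpha}\partial_y^{\beta}g = -\rho\int_0^1(\partial_x^{\alpha+\beta}W)\bigl(t-\theta\rho,\xi(\theta)\bigr)(1-\theta)^{|\alpha|}\theta^{|\beta|}\,d\theta,
\]
each $x$-derivative producing a factor $(1-\theta)$ and each $y$-derivative a factor $\theta$ from differentiating $\xi$.

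The main single-derivative estimate comes next. Applying \eqref{2.8} with $C(W)=0$ to the integrand, and abbreviating $p=p_{\alpha+\beta}>1$ together with $I=I(t,s;x,y)=\int_0^1 W(t-\theta\rho,\xi(\theta))\,d\theta\ge 0$, I bound $|\partial_x^{\alpha+\beta}W|\le C\,(1+W)^{1/p}$, and, since $(1-\theta)^{|\alpha|}\theta^{|\beta|}\le 1$,
\[
|\partial_x^{\alpha}\partial_y^{\beta}g|\le C\rho\int_0^1\bigl(1+W(t-\theta\rho,\xi(\theta))\bigr)^{1/p}d\theta\le C\rho(1+I)^{1/p},
\]
the last step being Jensen's inequality for the concave map $u\mapsto u^{1/p}$ against the probability measure $d\theta$ on $[0,1]$. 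I also record that $e^{g}=e^{-\rho I}$.

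Finally I would differentiate $c_w=e^{g}$ via the Fa\`a di Bruno formula: $\partial_x^{\alpha}\partial_y^{\beta}c_w$ equals $e^{g}$ times a finite sum of monomials $\prod_{i=1}^{r}\partial^{\nu_i}g$, where the $\nu_i$ are nonzero multi-indices in $(x,y)$ summing to $(\alpha,\beta)$ and $r\ge 1$. Inserting the estimate above for each factor, such a monomial is dominated by $C\,\rho^{r}(1+I)^{q}e^{-\rho I}$ with $q=\sum_{i=1}^{r}1/p_{\nu_i}$. The crux, and the step I expect to be the main obstacle, is that the decay of $e^{-\rho I}$ must beat the growth of $(1+I)^{q}$; this is precisely where the strict inequality $p_{\nu_i}>1$ enters, forcing $q<r$. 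Setting $u=\rho I\ge 0$, for $I\ge 1$ I estimate $(1+I)^{q}\le 2^{q}I^{q}$ and rewrite $\rho^{r}I^{q}=\rho^{\,r-q}u^{q}$, so the monomial is $\le C\rho^{\,r-q}u^{q}e^{-u}$, which stays bounded because $r-q>0$ keeps $\rho^{\,r-q}\le T^{\,r-q}$ under control on $0\le\rho\le T$ while $\sup_{u\ge 0}u^{q}e^{-u}=(q/e)^{q}<\infty$; for $I\le 1$ the monomial is trivially $\le C\,T^{r}2^{q}$. Summing the finitely many monomials yields the constant $C_{\alpha\beta}$, and this balancing of exponential decay against the polynomially growing derivatives of $W$ is exactly why the assumption $p_\alpha>1$ in \eqref{2.8} is indispensable.
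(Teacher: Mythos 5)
Your proposal is correct and follows essentially the same route as the paper: factor $c_w=e^{g}$, use \eqref{2.8} (via Jensen/H\"older on the probability measure $d\theta$) to bound each derivative of the exponent by a power $1/p<1$ of $1+\int_0^1 W\,d\theta$, and then absorb the resulting polynomial growth with the factor $e^{-\rho I}$, which is exactly the paper's combination of \eqref{3.19} and \eqref{3.20}. The only difference is that you spell out the higher-order case (Fa\`a di Bruno and the $I\lessgtr 1$ balancing) that the paper dispatches with ``in the same way,'' which is a welcome elaboration rather than a new idea.
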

\begin{proof}
It is clear from \eqref{2.7} and \eqref{3.9} that the first inequality of \eqref{3.18} holds.  For $a \geq 0$ we can easily see 
\begin{equation}   \label{3.19}
           \sup_{r \geq 0} e^{-r}(T + r)^a = C'_a < \infty
 \end{equation}
with constants $C'_a \geq 0$.  Let $|\alpha| \geq 1$.  Using H\"older's inequality and \eqref{2.8}, we have
\begin{align} \label{3.20}
& \rho \int_0^1|(\partial_x^{\alpha}W)(t - \theta\rho,x - \theta(x-y))|d\theta \notag \\
& \leq \rho \biggl(\int_0^1|(\partial_x^{\alpha}W)(t - \theta\rho,x - \theta(x-y))|^{p_{\alpha}}d\theta \biggr)^{1/p_{\alpha}} \notag \\
& \leq C_{\alpha}^{1/p_{\alpha}}\rho \biggl[\int_0^1\bigl\{1 + W(t - \theta\rho,x - \theta(x-y))\bigr\}d\theta \biggr]^{1/p_{\alpha}} \notag \\
& =  C_{\alpha}^{1/p_{\alpha}}\rho^{1 - 1/p_{\alpha}}\biggl\{\rho + \rho\int_0^1  W(t - \theta\rho,x - \theta(x-y))d\theta \biggr\}^{1/p_{\alpha}} \notag \\
& \leq  C_{\alpha}^{1/p_{\alpha}}\rho^{1 - 1/p_{\alpha}}\biggl\{T + \rho\int_0^1  W(t - \theta\rho,x - \theta(x-y))d\theta \biggr\}^{1/p_{\alpha}}.
\end{align}
Hence, letting $\alpha = (1,0,\dots,0) \in \bR^d$, by \eqref{2.7} and \eqref{3.19} we have
\begin{align} \label{3.21}
& |\partial_{x_1}c_w(t,s;x,y)| \leq \biggl(\exp -\rho\int_0^1W(t - \theta\rho,x - \theta(x-y))d\theta\biggl)
\notag \\
& \times C_{\alpha}^{1/p_{\alpha}}T^{1 - 1/p_{\alpha}}\biggl\{T + \rho\int_0^1  W(t - \theta\rho,x - \theta(x-y))d\theta \biggr\}^{1/p_{\alpha}}  
\notag \\
& \leq C_{\alpha}^{1/p_{\alpha}}T^{1 - 1/p_{\alpha}}C'_{1/p_{\alpha}} < \infty.
\end{align}
In the same way we can complete the proof of  the second inequality of \eqref{3.18}, using \eqref{3.19} and \eqref{3.20}.
\end{proof}
\begin{pro}
Suppose that either Assumption 2.A or 2.B is satisfied.  In addition, we suppose Assumption 2.C, \eqref{2.7} and \eqref{2.8}, where we take $C(W) = 0$ and \eqref{2.6} is replaced with
\begin{equation} \label{3.22}
|\partial_x^{\alpha}V(t,x)| \leq C_{\alpha}<x>^{M_1}, \quad |\alpha| \geq 1
\end{equation}
with an integer $M_1 \geq 1$ independent of $\alpha$.  Let $\rho^* > 0$ be the constant determined in Lemma 3.3 and $\Cts(t,s)$ the operator defined by \eqref{3.7}.  Then there exists a constant $K_0 \geq 0$ such that 
\begin{equation}   \label{3.23}
 \Vert \Cts(\ts)f\Vert \leq e^{K_0(t-s)}\Vert f \Vert, \quad 0 \leq t - s \leq \rho^*
           \end{equation}
           for all $f \in L^2$.
\end{pro}
\begin{proof}
We first suppose that Assumption 2.A is satisfied.  Then, letting $\Phi_j$ be defined by \eqref{3.11}, from \eqref{3.10} we have
\begin{align*}  
    & \Cts(t,s)^{\dag}\chi(\epsilon\cdot)^2\Cts(t,s)f = \left(\frac{m}{2\pi(t-s)}\right)^d\int f(y)dy \int\chi(\epsilon z)^2
    \notag \\
    &\qquad \times  \left(\exp i(x - y)\cdot \frac{m\Phi}{t-s} \right) c_w(t,s;z,x)c_w(t,s;z,y)dz 
  \end{align*}
  for $f \in \Sspace$.  We will use (1) in Lemma 3.3.  Letting $0 \leq t - s \leq \rho^*$ and making the change of variables: $\bR^d \ni z \to \xi = \Phi(t,s;x,y,z) \in \bR^d$ in the above equation, we have
\begin{align*}  
    & \Cts(t,s)^{\dag}\chi(\epsilon\cdot)^2\Cts(t,s)f = \left(\frac{m}{2\pi(t-s)}\right)^d\int f(y)dy \int\chi(\epsilon z)^2
    \notag \\
    & \times  \left(\exp i(x - y)\cdot \frac{m\xi}{t-s} \right) c_w(t,s;z,x)c_w(t,s;z,y)
    \det\frac{\partial z}{\partial \xi}(t,s;x,\xi,y)d\xi 
  \end{align*}
  with $z = z(t,s;x,\xi,y)$, which shows
  \begin{align} \label{3.24} 
    & \Cts(t,s)^{\dag}\chi(\epsilon\cdot)^2\Cts(t,s)f = \left(\frac{1}{2\pi}\right)^d\int e^{i(x-y)\cdot\eta}d\eta\int\chi(\epsilon z)^2c_w(t,s;z,x)
       \notag \\
    & \ \times c_w(t,s;z,y)\det\frac{\partial z}{\partial \xi}(t,s;x,(t-s)\eta/m,y) f(y)dy,\  0 \leq t - s \leq \rho^*
  \end{align}
  with $z = z(t,s;x,(t-s)\eta/m,y)$.  Hence, noting \eqref{3.15} and \eqref{3.18}, we can easily prove
  \begin{align} \label{3.25} 
    &  \limepsilon\Cts(t,s)^{\dag}\chi(\epsilon\cdot)^2\Cts(t,s)f\notag \\
    & = \left(\frac{1}{2\pi}\right)^d\int e^{i(x-y)\cdot\eta}d\eta\int c_w(t,s;z,x)
    c_w(t,s;z,y)
       \notag \\
    & \ \times \det\frac{\partial z}{\partial \xi}(t,s;x,(t-s)\eta/m,y) f(y)dy,\  0 \leq t - s \leq \rho^*
  \end{align}
 with $z = z(t,s;x,(t-s)\eta/m,y)$ in the topology of $\Sspace$, which we write as $\Cts(t,s)^{\dag}\,\Cts(t,s)f$ formally.
 \par
 	Noting \eqref{3.15} - \eqref{3.17} and \eqref{3.18}, let us apply Theorem 1.A in the introduction to \eqref{3.25}.
Then we have
\begin{align}  \label{3.26}
     \Vert \Cts(\ts)^{\dag}\,\Cts(\ts)f\Vert &\leq \bigl\{1 + 2K_0(t-s)\bigr\}\Vert f\Vert \notag \\
    & \leq e^{2K_0(t-s)} \Vert f\Vert, \quad  0 \leq t - s \leq \rho^*
\end{align}
with a constant $K_0 \geq 0$.  Consequently from \eqref{3.25} we have
\begin{align*}  
\Vert\Cts(t,s)f\Vert^2 &\leq \lim_{\epsilon \to 0+}\bigl(\chi(\epsilon\cdot)\Cts(\ts)f,\chi(\epsilon\cdot)\Cts(\ts)f\bigr) \\
    & = \lim_{\epsilon \to 0+}\bigl(\Cts(\ts)^{\dag}\chi(\epsilon\cdot)^2\Cts(\ts)f,f\bigr)  \\
    & = \bigl(\Cts(\ts)^{\dag}\,\Cts(\ts)f,f\bigr) \leq e^{2K_0(t-s)} \Vert f\Vert^2
\end{align*}
for $f \in \Sspace$ by using Fatou's lemma, which shows \eqref{3.23}.
\par
	Next we suppose that Assumption 2.B is satisfied.  Letting $\Phi_j$ be defined by \eqref{3.12}, we can also prove \eqref{3.25} and \eqref{3.26} as in the proof of the first case.  Consequently we can prove \eqref{3.23}.
\end{proof}
\begin{pro}
Let $p(x,w)$ be a function satisfying \eqref{3.5} and $P(t,s)$ the operator defined by \eqref{3.6}.  Then, under the assumptions of Proposition 3.5 we have
\begin{equation} \label{3.27} 
\Vert P(t,s)f\Vert_a \leq 
     C_a \Vert f\Vert_{M+aM_1}, \quad  0 \leq t - s \leq \rho^*
\end{equation}
for $a = 0,1,2,\dots$ and all $f \in B^{M+aM_1}$ with constants $C_a \geq 0$, where $M_1$ is the integer in \eqref{3.22}.
\end{pro}
\begin{proof}
Setting 
\begin{equation} \label{3.28} 
p'(t,s;x,w) := p(x,w)c_w(t,s;x,x - \sqrt{\rho}w),
\end{equation}
from \eqref{3.14} we have
\begin{align}  \label{3.29}
    P(t,s)f & = \sqrt{m/(2\pi i\rho)}^{\ d}
                  \int \bigl(\exp iS(t,s; q^{t,s}_{x,y})\bigr)p'\left(t,s;x,\frac{x - y}{\sqrt{t-s}}\right)f(y)dy, \notag \\
                 &\quad \rho = t - s > 0
\end{align}
and also from \eqref{3.18}
\begin{equation} \label{3.30} 
|\partial_w^{\alpha}\partial_x^{\beta}p'(t,s;x,w)| \leq C_{\alpha\beta}<x;w>^M
\end{equation}
for all $\alpha$ and $\beta$.
\par
	At first we suppose that Assumption 2.A is satisfied. Then, using \eqref{3.29} and \eqref{3.30}, we can prove \eqref{3.27}  from Theorem 4.4 of \cite{Ichinose 1999}.  We can also prove \eqref{3.27} under Assumption 2.B, noting \eqref{3.17-2} and following the proof of Theorem 4.4 in \cite{Ichinose 1999}.
\end{proof}
\section{Consistency of $\Cts(\ts)$}
\begin{lem}
Let $H_w(t)$ be the operator defined by \eqref{1.8}.  We assume that  for  all $\alpha$ $\partial_x^{\alpha}V(t,x), \partial_x^{\alpha}A_j(t,x)\ (j = 1,2,\dots,d)$ and $ \partial_x^{\alpha}\partial_tA_j(t,x)$  are continuous in $\domain$ and satisfy
\begin{align*}
&|\partial_x^{\alpha}V(t,x)| + \sum_{j=1}^d\bigl(|\partial_x^{\alpha}A_j(t,x)| + |\partial_x^{\alpha}\partial_tA_j(t,x)|  \bigr)+ |\partial_x^{\alpha}W(t,x)|
\notag \\
& \leq C_{\alpha}<x>^{M'}
\end{align*}
with constants $C_{\alpha} \geq 0$ and $M' \geq 0$, where $M'$ is independent of $\alpha$.  Then, there exists a function $r(t,s;x,w)$ satisfying \eqref{3.5} for an integer $M \geq 0$ such that $\partial_w^{\alpha}\partial_x^{\beta}r(t,s;x,w)$ are continuous in $0 \leq s \leq t \leq T$ and $(x,w) \in \bR^{2d}$ for all $\alpha, \beta$ and we have
\begin{equation}  \label{3.31}
      \left\{i\frac{\partial}{\partial t} - H_w(t)\right\}\Cts(\ts) f
                        = \sqrt{t - s} R(\ts)f
\end{equation}
for $f \in \Czerospace$.
\end{lem}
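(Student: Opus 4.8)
The claim is that applying the Schrödinger operator $i\partial_t - H_w(t)$ to the one-step propagator $\Cts(t,s)$ produces an error that vanishes like $\sqrt{t-s}$, expressed as $\sqrt{t-s}\,R(t,s)f$ where $R(t,s)$ is an operator of the form $P(t,s)$ built from a symbol $r(t,s;x,w)$ obeying the polynomial bound $\eqref{3.5}$. This is the standard "consistency" estimate that, combined with the stability estimate $\eqref{3.23}$ from Proposition 3.5, will drive the convergence proof for the time-slicing approximation. The novelty relative to the $W=0$ case lies entirely in carrying the weight factor $c_w(t,s;x,y)$ through the computation and checking that the extra terms it generates also have the required structure and order.

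Let me think about how to prove this carefully.

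The plan is to compute $\{i\partial_t - H_w(t)\}\Cts(t,s)f$ by direct differentiation, using the representation $\eqref{3.14}$ of $\Cts(t,s)f$ with $p\equiv 1$, namely $\Cts(t,s)f = \sqrt{m/(2\pi i\rho)}^{\,d}\int (\exp iS(t,s;q^{t,s}_{x,y}))\,c_w(t,s;x,y)f(y)\,dy$ with $\rho=t-s$. First I would differentiate the free-particle part. The key algebraic fact, already exploited in the $W=0$ theory of \cite{Ichinose 1999, Ichinose 1997}, is that the Gaussian kernel $\sqrt{m/(2\pi i\rho)}^{\,d}\exp(im|x-y|^2/(2\rho))$ is (up to the gauge/potential factors) an exact or approximate fundamental solution of the free Schrödinger equation, so that the leading singular contributions from $i\partial_t$ and from the kinetic term $\frac{1}{2m}\sum_j(\frac{1}{i}\partial_{x_j}-A_j)^2$ cancel. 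What survives is a collection of remainder terms: those coming from the potential $V$ and the vector potential $A$ evaluated along the straight line versus at the endpoint, which in the $W=0$ case are known to combine into $\sqrt{t-s}\,R(t,s)f$; and the genuinely new terms coming from differentiating and applying the kinetic operator to the weight factor $c_w$.

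Second, I would isolate the new $c_w$ terms. Differentiating $\eqref{3.14}$ in $t$ hits $c_w(t,s;x,y)$ to give $\partial_t c_w$, and applying the kinetic operator $\frac{1}{2m}\sum_j(\frac{1}{i}\partial_{x_j}-A_j)^2$ produces terms in $\partial_{x_j}c_w$ and $\partial_{x_j}^2 c_w$, as well as a cross term coupling $\partial_{x_j}c_w$ with the phase gradient $\partial_{x_j}S = m(x_j-y_j)/\rho + \dots$. Here the crucial input is Lemma 3.4: it gives $0\le c_w\le 1$ and, more importantly, $|\partial_x^\alpha\partial_y^\beta c_w|\le C_{\alpha\beta}$ for $|\alpha+\beta|\ge 1$, so that each derivative of $c_w$ is bounded uniformly. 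I would also need the $t$-derivative of $c_w$; from the explicit formula $\eqref{3.9}$ one differentiates under the integral, and the same Hölder-plus-$\eqref{2.8}$ argument used in the proof of Lemma 3.4 (together with $\eqref{3.19}$) shows $|\partial_t c_w|$ and the mixed $t$-$x$ derivatives are bounded as well. The only term that looks singular is the cross term, where $\partial_{x_j}c_w$ multiplies the factor $m(x_j-y_j)/\rho$ coming from the phase; but $(x_j-y_j)/\rho = (x_j-y_j)/(\sqrt{\rho}\,\sqrt{\rho}) = \sqrt{\rho}^{\,-1}\cdot (x_j-y_j)/\sqrt{\rho}$, and writing $(x-y)/\sqrt{\rho}=w$ as the symbol variable, this term is of order $\rho^{-1/2}$ times a bounded symbol in $(x,w)$. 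I would absorb one factor of $\sqrt{\rho}$ into the overall $\sqrt{t-s}$ prefactor and reorganize the rest: integrating by parts in $y$ to convert a factor $(x_j-y_j)$ into a $\partial_{y_j}$ acting on the Gaussian, or, as is cleaner, using the identity $\frac{1}{i}\partial_{x_j}(\exp iS) = (m(x_j-y_j)/\rho + A_j)(\exp iS)$ to trade the polynomial growth in $w$ for derivatives, thereby exhibiting everything in the form $P(t,s)$ with a symbol $r$ satisfying $\eqref{3.5}$.

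Third, I would assemble the pieces. The potential/vector-potential remainders are handled exactly as in \cite{Ichinose 1999} (this is precisely where the hypothesis that $\partial_x^\alpha\partial_t A_j$ is continuous, and the polynomial bound $<x>^{M'}$, enter), yielding a symbol of class $\eqref{3.5}$. The new $c_w$ contributions, after the reorganization above and using Lemma 3.4 together with the boundedness of $\partial_t c_w$, also yield symbols of class $\eqref{3.5}$; the continuity of $\partial_w^\alpha\partial_x^\beta r$ in $(t,s,x,w)$ follows from the continuity assumptions on $V,A,\partial_t A,W$ and their derivatives, since $c_w$ and its derivatives depend continuously on the data through $\eqref{3.9}$. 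Summing all contributions and factoring out $\sqrt{t-s}$ gives $\eqref{3.31}$ with $R(t,s)=P(t,s)$ for the resulting $r$.

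The main obstacle will be the cross term described above: the term in which a first $x$-derivative of $c_w$ meets the singular phase gradient $m(x-y)/\rho$. Naively this is $O(\rho^{-1})$, which is worse than the target $O(\sqrt{\rho})$. The resolution is that $\partial_{x_j}c_w$ is not merely bounded but, because it arises from $\rho\int_0^1\partial_{x_j}W\,d\theta$ times $c_w$, it carries a hidden factor of $\rho$; more precisely, from $\eqref{3.9}$ one has $\partial_{x_j}c_w = -\bigl(\rho\int_0^1(\partial_{x_j}W - \theta\,(\partial\cdot\nabla)\cdots)\,d\theta\bigr)c_w$, so $\partial_{x_j}c_w = O(\rho)$ in the regime where $W$ and its derivatives are controlled, and the bound in Lemma 3.4 is really a bound that improves to $O(\rho)$ after tracking the prefactor. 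Multiplying this hidden $\rho$ against the $\rho^{-1}$ from the phase recovers an $O(1)$ bounded symbol, from which one factor of $\sqrt{\rho}$ is then extracted for the prefactor. Verifying this cancellation quantitatively — that every apparently singular contribution of $c_w$ carries enough powers of $\rho$ — is the delicate step, and it is exactly the point where Assumption 2.D (through $\eqref{2.8}$ and $\eqref{2.9}$, giving the structural and growth control on the derivatives of $W$) is indispensable.
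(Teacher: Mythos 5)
Your overall strategy (differentiate the representation \eqref{3.14} directly, let the singular kinetic contributions cancel against $i\partial_t$ of the free Gaussian kernel, and collect the remainders as $\sqrt{t-s}$ times an operator of the form \eqref{3.6}) is indeed the computation underlying the result. The paper, however, disposes of the lemma in one line: since the weight enters the effective action exactly as an imaginary addition to the scalar potential, one replaces $V$ by $V-iW$ in the proof of Lemma 4.1 of \cite{Ichinose 2003}; the hypothesis $|\partial_x^{\alpha}W|\leq C_{\alpha}<x>^{M'}$ is precisely the polynomial bound assumed there for $V$, so no new estimate on $c_w$ is required at all.

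The genuine problem is your treatment of the ``cross term.'' First, Lemma 3.4 and Assumption 2.D are not among the hypotheses of Lemma 3.7 (only polynomial growth of the $\partial_x^{\alpha}W$ is assumed), so they cannot be invoked here. Second, even where Assumption 2.D does hold, the H\"older argument in the proof of Lemma 3.4 yields $|\partial_{x_j}c_w|\leq C\rho^{1-1/p_{\alpha}}$, not $O(\rho)$; multiplied against the factor $m(x_j-y_j)/\rho = m w_j/\sqrt{\rho}$ from the phase gradient this leaves a contribution of order $\rho^{1/2-1/p_{\alpha}}$, which for $p_{\alpha}$ close to $1$ is still singular and does not produce the required prefactor $\sqrt{t-s}$. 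So the mechanism you describe (``the bound in Lemma 3.4 improves to $O(\rho)$'') does not close the argument. The correct and elementary observation is that $\partial_{x_j}c_w=-c_w\,\rho\int_0^1(1-\theta)(\partial_{x_j}W)(t-\theta\rho,\,x-\theta\sqrt{\rho}\,w)\,d\theta$ carries an \emph{explicit} factor $\rho$, so the cross term equals $\sqrt{\rho}$ times $w_j\,c_w\int_0^1(1-\theta)\partial_{x_j}W\,d\theta$ up to constants, and this is bounded by $C<x;w>^{M'+1}$ by the hypothesis of the lemma. Since \eqref{3.5} only requires polynomial growth of the symbol $r$ --- uniform boundedness is needed only later, in the stability estimate of Proposition 3.5, which is exactly where Lemma 3.4 and Assumption 2.D enter --- no cancellation and no appeal to Assumption 2.D is needed. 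Your closing claim that Assumption 2.D is ``indispensable'' for this lemma is therefore incorrect, and as written the justification of the one genuinely new term does not go through.
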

\begin{proof}
We note \eqref{1.6} and \eqref{3.7}.  Then, replacing $V(t,x)$ with $V(t,x) - iW(t,x)$ in the proof of Lemma 4.1 of \cite{Ichinose 2003}, we can complete the proof of Lemma 4.1.
\end{proof}
\begin{pro}
Besides the assumptions of Proposition 3.5 we assume
\begin{equation*}
 |\partial_x^{\alpha}W(t,x)|
 \leq C_{\alpha}<x>^{M''}
\end{equation*}
in $\domain$ for all $\alpha$ with constants $C_{\alpha} \geq 0$ and $M'' \geq 0$, where $M''$ is independent of $\alpha$.  Then, there exists a function $r(t,s;x,w)$ satisfying the properties stated in Lemma 4.1 and we have
\begin{equation} \label{3.32} 
\Vert R(t,s)f\Vert_a \leq 
     C_a \Vert f\Vert_{M+aM_1}, \quad  0 \leq t - s \leq \rho^*
\end{equation}
for $a = 0,1,2,\dots$ and all $f \in B^{M+aM_1}$, where $M_1$ is the integer in \eqref{3.22}.
\end{pro}
\begin{proof}
From \eqref{1.1} we have $\partial_tA_j = -E_j - \partial_{x_j}V$.  Hence we see that $\partial_x^{\alpha}\partial_tA_j(t,x)$ are continuous in $\domain$ for all $\alpha$ from the assumptions.  In addition, from \eqref{2.2} and \eqref{3.22} we have
\begin{equation*}
 |\partial_x^{\alpha}\partial_tA_j(t,x)| \leq |\partial_x^{\alpha}E_j(t,x)| + |\partial_x^{\alpha}\partial_{x_j}V(t,x)|
 \leq C_{\alpha}<x>^{M_1}, \ |\alpha| \geq 1.
\end{equation*}
Hence Lemma 4.1 holds.  Applying Proposition 3.6 to $R(t,s)f$, we get \eqref{3.32}.
\end{proof}
	\vspace{0.3cm}
	Making the change of variables: $\bR^d \ni y \to w = (x-y)/\sqrt{\rho} \in \bR^d$ in \eqref{3.6}, from \eqref{3.3} and \eqref{3.4} we have
\begin{equation}  \label{3.33}
    P(t,s)f  = \sqrt{\frac{m}{2\pi i}}^{\, d}
                  \int e^{i\phi(t,s;x,w)}p(x,w)f(x-\sqrt{\rho}w)dw,\  \rho = t - s > 0
\end{equation}
for $f \in \Czerospace$ as in the proof of (2.9) of \cite{Ichinose 1999}, where	
\begin{align}  \label{3.34}
   & \phi(t,s;x,w)  = \frac{m}{2}|w|^2 + \sqrt{\rho}w\cdot\int_0^1A(t-\theta\rho,x-\theta\sqrt{\rho}w)d\theta
   \notag \\
   & -\rho\int_0^1V(t-\theta\rho,x-\theta\sqrt{\rho}w)d\theta + i\rho\int_0^1W(t-\theta\rho,x-\theta\sqrt{\rho}w)d\theta.
                  \end{align}
\begin{lem}
Suppose Assumption 2.C and \eqref{2.9}. Let $\Cts(t,s)$ be the operator defined by \eqref{3.7}.  Then, for an arbitrary multi-index $\kappa$ both of commutators $[\partial_x^{\kappa},\Cts(t,s)]f$ and $[x^{\kappa},\Cts(t,s)]f$ for $f \in \Czerospace$ are written in the form
\begin{align}  \label{3.35}
   & (t-s)\sum_{|\gamma| < |\kappa|}\widetilde{P}_{\gamma}(t,s)(\partial_x^{\gamma}f) := (t-s) \sum_{|\gamma| < |\kappa|}   \sqrt{\frac{m}{2\pi i}}^{\, d}\notag \\
   & \times \int e^{i\phi(t,s;x,w)}p_{\gamma}(t,s;x,\sqrt{\rho}w)(\partial_x^{\gamma}f)(x-\sqrt{\rho}w)dw,             
   \end{align}
   where $p_{\gamma}(t,s;x,\zeta)$ satisfy
\begin{equation}  \label{3.36}
   |\partial_{\zeta}^{\alpha}\partial_x^{\beta}p_{\gamma}(t,s;x,\zeta)| \leq C_{\alpha\beta}<x;\zeta>^{|\kappa| - |\gamma|}
\end{equation}
for all $\alpha$ and $\beta$.
\end{lem}
\begin{proof}
Replacing $V(t,x)$ with $V(t,x) - iW(t,x)$ in the proof of Lemma 3.2 of \cite{Ichinose 2003}, we can prove Lemma 4.3.
\end{proof}
\begin{pro} 
Suppose that the assumptions of Theorem 2.2 are satisfied, where we take $C(W) = 0$.  Then, for $a = 0,1,2,\dots$ there exist constants $K_a \geq 0$ such that 
\begin{equation}   \label{3.37}
 \Vert \Cts(\ts)f\Vert_a \leq e^{K_a(t-s)}\Vert f \Vert_a, \quad 0 \leq t - s \leq \rho^*
           \end{equation}
for all $f \in B^a$.
\end{pro}
 \begin{proof}
Let $|\kappa| = a$. Using Proposition 3.6 and Lemma 4.3, we have
\begin{align*}
&\Vert x^{\kappa}(\Cts(\ts) f)\Vert 
\leq \Vert\Cts(\ts)(x^{\kappa}f)\Vert +  
(t - s)\sum_{|\gamma| < a}\Vert \tilde{P}_{\gamma}(t,s)(\partial_x^{\gamma}f)\Vert \\
& \leq \Vert\Cts(\ts)(x^{\kappa}f)\Vert +  
C(t - s)\sum_{|\gamma| < a}\Vert \partial_x^{\gamma}f\Vert_{a-|\gamma|} \\
& \leq \Vert\Cts(\ts)(x^{\kappa}f)\Vert +  
C'(t - s)\Vert f\Vert_{a}.
\end{align*}
Here we used $\Vert \partial_x^{\gamma}f\Vert_{a-|\gamma|} \leq \text{Const.}\Vert f \Vert_{a}$ from (4.21) in \cite{Ichinose 1999}. Hence from Proposition 3.5 we have
\begin{equation} \label{3.38}
\Vert x^{\kappa}(\Cts(\ts) f)\Vert \leq e^{K_{0}(t -s)}\Vert x^{\kappa}f\Vert +
   C'(t - s)\Vert f\Vert_{a}.
\end{equation}
In the same way we have
\begin{equation} \label{3.39}
\Vert \partial_{x}^{\kappa}(\Cts(\ts) f)\Vert \leq e^{K_{0}(t -s)}\Vert \partial_{x}^{\kappa}f\Vert +
   C''(t - s)\Vert f\Vert_{a}.
\end{equation}
Since $\Vert f \Vert_{a}$ is defined by \eqref{2.12}, from \eqref{3.23}, \eqref{3.38} and \eqref{3.39} we obtain 
\begin{align*}
&\Vert \Cts(\ts) f\Vert_{a}
= \Vert\Cts(\ts)f\Vert +  \sum_{|\kappa| = a}\Bigl(\Vert x^{\kappa}(\Cts(\ts)f)\Vert + \Vert \partial_{x}^{\kappa}(\Cts(\ts)f)\Vert\Bigr)\\
& \leq e^{K_0(t-s)}\Vert f\Vert_{a} +K'_{0}(t-s)\Vert f\Vert_{a} = \Bigl(e^{K_0(t-s)} + K'_{0}(t-s)\Bigr)\Vert f\Vert_{a} \\
& \leq e^{(K_0+K'_{0})(t-s)}\Vert f\Vert_{a},
\end{align*}
which shows \eqref{3.37}.
 \end{proof}
\begin{thm}  Suppose Assumption 2.C and \eqref{2.9}.  Then for any $u_0 \in B^a\ (a = 0, \pm1,\pm2,\dots)$
 there exists the unique solution $u(t)$ in $C^0_t([0,T];B^a) \cap C^1_t([0,T];B^{a-2})$  with $u(0) = u_0$ to the equation \eqref{1.8}. This solution $u(t)$ satisfies
 \begin{equation} \label{3.40}
 \Vert u(t)\Vert_a\leq  C_a\Vert u_0\Vert_a,  \ 0 \leq t \leq T.
 \end{equation}
 \end{thm}
 \begin{proof}
 The results corresponding to Theorem 4.5 have been proved  in (1) of Theorem 2.1 of \cite{Ichinose 2019}, where  $\|f\|_a$ was defined by $ \|f\| + \sum_{|\alpha| =  2a}\bigl(\|x^{\alpha}f\| +\|\partial_x^{\alpha}f\|\bigr)$ differently from \eqref{2.12}.  Following the proof of (1) of Theorem 2.1 of  \cite{Ichinose 2019}, we can prove Theorem 4.5 as below.  We set $\chi_{\epsilon}(x,\xi) := \chi\bigl(\epsilon(<x> + <\xi>)\bigr)\ (0 < \epsilon \leq 1)$ and $\lambda(x,\xi) := \mu + <x> + <\xi>$, where $\mu > 0$ is the constant such that there exist a function $w(x,\xi)$   satisfying
  \begin{equation*} 
W(X,D_{x})f = (\mu + <X> + <D_{x}>)^{-1}f
  \end{equation*}
  for $f \in \Sspace$ and 
  \begin{equation*} 
 |\partial^{\alpha}_{\xi}\partial^{\beta}_{x}w(x,\xi)| \leq C_{\alpha\beta}(1 + <x> + <\xi>)^{-1}
  \end{equation*}
  for all $\alpha$ and $\beta$ (cf. Lemma 2.3 of \cite{Ichinose 1995}). 
  \par
  We set 
 \begin{align*} 
Q_{\epsilon}(X,D_{x}) = &\Bigl[\Lambda(X,D_{x}), X_{\epsilon}(X,D_{x})^{\dag}H_{w}(t) X_{\epsilon}(X,D_{x})\Bigr]\Lambda(X,D_{x})^{-1} 
   \end{align*}
 as in (4.3) of \cite{Ichinose 2019}.
 Then we can prove
  \begin{equation*} 
 |\partial^{\alpha}_{\xi}\partial^{\beta}_{x}q_{\epsilon}(x,\xi)| \leq C_{\alpha\beta} < \infty
  \end{equation*}
   for all $\alpha$ and $\beta$ with constants $C_{\alpha\beta}$ independent of $0 < \epsilon \leq 1$ as in the proof of Lemma 4.1 of \cite{Ichinose 2019} and Lemma 3.1 of \cite{Ichinose 1995}.  Therefore we obtain the results corresponding to Lemma 4.1 of \cite{Ichinose 2019}.  Then we can complete the remaining proof of Theorem 4.5, following the proof of Theorem 2.1 of \cite{Ichinose 2019}.
 \end{proof}
 \begin{pro} 
 We suppose the same assumptions as in Proposition 4.4.  Let $U(t,s)f$ be the solution to \eqref{1.8} found in Theorem 4.5.  Then there exists an integer $M \geq 0$ such that we have
 \begin{equation} \label{3.41}
 \Vert \Cts(\ts)f - U(t,s)f\Vert_a \leq  C_a\rho^{3/2}\Vert f\Vert_{M+a},\quad  0\leq  t-s \leq \rho^*
 \end{equation}
 for $a = 0,1,2,\dots.$
 \end{pro}
 \begin{proof}
 Using \eqref{3.31}, we can write
\begin{align*}
& i\bigl\{\Cts(\ts)f - f\bigr\} = i\bigl\{\Cts(s+\rho,s)f - f\bigr\} = i\rho\int_0^1\frac{\partial\Cts}{\partial t}(s+\theta\rho,s)fd\theta \\
& = \rho\int_0^1\Bigl\{H_w(s+\theta\rho)\Cts(s+\theta\rho,s)f + \sqrt{\theta\rho}R(s+\theta\rho,s)f\Bigr\}d\theta
\end{align*}
and so
\begin{align*}
& i\frac{\Cts(\ts)f - f}{\rho} - H_w(s) f = \sqrt{\rho}\int_0^1\sqrt{\theta}R(s+\theta\rho,s)fd\theta + \int_0^1H_w(s+\theta\rho)\\
& \quad \cdot\big\{\Cts(s+\theta\rho,s)f - f\bigr\}d\theta + \int_0^1\bigl\{H_w(s+\theta\rho)f-H_w(s)f\big\}d\theta.
\end{align*}
Using 
\begin{align*}
& \Cts(s+\theta\rho,s)f - f = \theta\rho\int_0^1\frac{\partial\Cts}{\partial t}(s+\theta'\theta\rho,s)fd\theta'\\
& = \frac{\theta\rho}{i}\int_0^1\bigl\{H_w(s+\theta'\theta\rho)\Cts(s+\theta'\theta\rho,s)f 
+ \sqrt{\theta'\theta\rho}R(s+\theta'\theta\rho,s)f\bigr\}d\theta',
\end{align*}
we have
\begin{align} \label{3.42}
&  i\frac{\Cts(\ts)f - f}{\rho} - H_w(s) f = \sqrt{\rho}\int_0^1\sqrt{\theta}R(s+\theta\rho,s)fd\theta + \frac{\rho}{i}\int_0^1\theta H_w(s+\theta\rho)d\theta \notag \\
& \quad \cdot\int_0^1\bigl\{H_w(s+\theta'\theta\rho)\Cts(s+\theta'\theta\rho,s)f 
+ \sqrt{\theta'\theta\rho}R(s+\theta'\theta\rho,s)f\bigr\}d\theta'\notag \\
& \quad + \int_0^1\bigl\{H_w(s+\theta\rho)f-H_w(s)f\big\}d\theta.
\end{align}
In the same way 
\begin{align} \label{3.43}
&  i\frac{U(\ts)f - f}{\rho} - H_w(s) f =  \frac{\rho}{i}\int_0^1\theta H_w(s+\theta\rho)d\theta \notag \\
& \quad \cdot\int_0^1\bigl\{H_w(s+\theta'\theta\rho)U(s+\theta'\theta\rho,s)f \bigr\}d\theta'
\notag \\
& \qquad + \int_0^1\bigl\{H_w(s+\theta\rho)f-H_w(s)f\big\}d\theta.
\end{align}
Taking difference between \eqref{3.42} and \eqref{3.43}, we have
\begin{align} \label{3.44}
&  i\left\{\Cts(\ts)f - U(t,s)f\right\} = \rho^{3/2}\int_0^1\sqrt{\theta}R(s+\theta\rho,s)fd\theta + \frac{\rho^2}{i}\int_0^1\theta H_w(s+\theta\rho)d\theta \notag \\
& \quad \cdot\int_0^1\bigl\{H_w(s+\theta'\theta\rho)\Cts(s+\theta'\theta\rho,s)f 
+ \sqrt{\theta'\theta\rho}R(s+\theta'\theta\rho,s)f\bigr\}d\theta'\notag \\
& \quad - \frac{\rho^2}{i}\int_0^1\theta H_w(s+\theta\rho)d\theta \int_0^1H_w(s+\theta'\theta\rho)U(s+\theta'\theta\rho,s)f d\theta'.
\end{align}
Consequently, noting \eqref{2.5}, \eqref{2.6} and \eqref{2.9}, and applying \eqref{3.32} with $M_1 = 1$, \eqref{3.37} and \eqref{3.40} to \eqref{3.44}, we obtain
\begin{align} \label{3.45}
&  \Vert\Cts(\ts)f - U(t,s)f \Vert_a \leq C_1 \rho^{3/2}\Vert f\Vert_{M+a} + C_2 \rho^{2}
(\Vert f\Vert_{4+a} + \Vert f\Vert_{M+2+a}) \notag\\
& \quad + C_3 \rho^{2}\Vert f\Vert_{4+a}
\end{align}
with constants $C_j\ (j=1,2,3)$, which shows \eqref{3.41}.
 \end{proof}
 %
\section{Proofs of Theorems 2.1 and 2.2}
\begin{lem}
We suppose the same assumptions as in Proposition 3.5.  Let $\kdelta(t,0)f$ and $\Cts(\ts)f$ be the operators defined by \eqref{2.1} and \eqref{3.7} respectively. Then we have
\begin{equation} \label{4'.1}
\kdelta(t,0)f = \Cts(t,\tau_{\nu -1})\Cts(\tau_{\nu-1},\tau_{\nu -2})\cdots \Cts(\tau_1,0)f
\end{equation}
for all $f \in L^2$ and all $\Delta$ such that $|\Delta| \leq \rho^*$.
\end{lem}
\begin{proof}
From \eqref{3.8} we could write
\begin{equation*} 
\kdelta(t,0)f = \lim_{\epsilon \to 0}\Cts(t,\tau_{\nu -1})\chi(\epsilon\cdot)\Cts(\tau_{\nu-1},\tau_{\nu -2})\chi(\epsilon\cdot)\cdots \chi(\epsilon\cdot)\Cts(\tau_1,0)f
\end{equation*}
for $f \in \Czerospace$.  Then from \eqref{3.23} we have
\begin{align} \label{4'.2}
& \Vert \Cts(t,\tau_{\nu -1})\chi(\epsilon\cdot)\Cts(\tau_{\nu-1},\tau_{\nu -2})\chi(\epsilon\cdot)\cdots \chi(\epsilon\cdot)\Cts(\tau_1,0)f \notag\\
& \quad  - \Cts(t,\tau_{\nu -1})\Cts(\tau_{\nu-1},\tau_{\nu -2})\cdots \Cts(\tau_1,0)f\Vert \notag\\
& = \biggl\Vert \sum_{j=1}^{\nu-1} \Cts(t,\tau_{\nu -1})\chi(\epsilon\cdot)\Cts(\tau_{\nu-1},\tau_{\nu -2})\chi(\epsilon\cdot)\cdots \chi(\epsilon\cdot)\Cts(\tau_{j+1},\tau_j)\notag \\
& \quad \cdot\{\chi(\epsilon\cdot)- 1\}\Cts(\tau_{j},\tau_{j-1})\Cts(\tau_{j-1},\tau_{j-2})\cdots \Cts(\tau_1,0)f\biggr\Vert \notag\\
& \leq C\sum_{j=1}^{\nu-1}\Vert \{\chi(\epsilon\cdot)- 1\}\Cts(\tau_{j},\tau_{j-1})\Cts(\tau_{j-1},\tau_{j-2})\cdots \Cts(\tau_1,0)f \Vert
\end{align}
for $f \in L^2$ with a constant $C \geq 0$ independent of $0 < \epsilon \leq 1$, which shows \eqref{4'.1} for $f \in L^2$.
\end{proof}
	Now we will prove Theorems 2.1 and 2.2. We can easily see that  we may assume $C(W) = 0$ in Assumption 2.D without loss of generality, because we have only to take $W(t,x) + C(W)$ in place of $W(t,x)$ in \eqref{1.8} and \eqref{2.1}.
	  Hence we assume $C(W) = 0$ hereafter in this section.\par
	We will first prove Theorem 2.2. Using \eqref{4'.1}, we write 
\begin{align} \label{4'.3}
& \kdelta(t,0)f - U(t,0)f  = \Cts(t,\tau_{\nu -1})\Cts(\tau_{\nu-1},\tau_{\nu -2})\cdots\Cts(\tau_1,0)f - U(t,\tau_{\nu -1})
\notag \\
&\quad \cdot U(\tau_{\nu-1},\tau_{\nu -2})\cdots U(\tau_1,0)f = \sum_{j=1}^{\nu} \Cts(t,\tau_{\nu -1})\Cts(\tau_{\nu-1},\tau_{\nu -2})\cdot\cdots\Cts(\tau_{j+1},\tau_j) \notag \\
& \quad \cdot\{\Cts(\tau_{j},\tau_{j-1})- U(\tau_{j},\tau_{j-1})\}U(\tau_{j-1},\tau_{j-2})\cdots U(\tau_1,0)f
= \sum_{j=1}^{\nu} \Cts(t,\tau_{\nu -1})
\notag \\
& \cdot\Cts(\tau_{\nu-1},\tau_{\nu -2})\cdots\Cts(\tau_{j+1},\tau_j)\{\Cts(\tau_{j},\tau_{j-1})- U(\tau_{j},\tau_{j-1})\}U(\tau_{j-1},0)f.
\end{align}
Hence, using \eqref{3.37}, \eqref{3.40} and \eqref{3.41}, we have
\begin{align} \label{4'.4}
& \Vert\kdelta(t,0)f - U(t,0)f\Vert_{a}  \leq  \sum_{j=1}^{\nu} C_{a}e^{K_{a}t} (\tau_{j}-\tau_{j-1})^{3/2}\Vert U(\tau_{j-1},0)f\Vert_{M+a} \notag \\
& \leq C'_{a}\sqrt{|\Delta|}e^{K_{a}T}T\Vert f\Vert_{M+a}.
\end{align}
\par
Let $f \in B^{a}$ be arbitrary.  For any $\epsilon > 0$ we take a $g \in B^{a+M}$ such that 
\begin{equation} \label{4'.5}
\Vert g - f \Vert_{a} < \epsilon.
\end{equation}
Using \eqref{3.37}, \eqref{3.40} and \eqref{4'.5}, we have
\begin{align} \label{4'.6}
& \Vert\kdelta(t,0)f - U(t,0)f\Vert_{a}  \leq  \Vert\kdelta(t,0)g - U(t,0)g\Vert_{a} \notag \\
& \quad + \Vert\kdelta(t,0)(f-g)\Vert_{a} + \Vert U(t,0)(f-g)\Vert_{a} \notag \\
& \leq C'_{a}\sqrt{|\Delta|}e^{K_{a}T}T\Vert g\Vert_{M+a} + \bigl(e^{K_{a}T} + C_{a}\bigr)\epsilon.
\end{align}
Hence 
\begin{equation*} 
\overline{\lim}_{|\Delta|\to 0}\Vert\kdelta(t,0)f - U(t,0)f\Vert_{a} \leq \bigl(e^{K_{a}T} + C_{a}\bigr)\epsilon,
\end{equation*}
which shows 
\begin{equation} \label{4'.7}
\lim_{|\Delta|\to 0}\Vert\kdelta(t,0)f - U(t,0)f\Vert_{a} = 0.
\end{equation}
\par
	Now consider the gauge transformation \eqref{2.10}.  From \eqref{1.2}, \eqref{1.5} and \eqref{1.6} we can easily see
\begin{equation} \label{4'.8}
S'_{w}(t,s;\qts) = S_{w}(t,s;\qts) + \psi(t,x) - \psi(s,y)
\end{equation}
(cf. p. 1024 in \cite{Ichinose 1999}), which shows
\begin{equation} \label{4'.9}
\Cts'(\ts)f = e^{i\psi(t,\cdot)}\Cts(\ts)e^{-i\psi(s,\cdot)}f.
\end{equation}
Consequently we can prove \eqref{2.11} from \eqref{4'.1}.  Thus we could complete the proof of Theorem 2.2.
\par
	Next we will prove Theorem 2.1 by using Theorem 2.2, where we will use only the results in $L^{2}$. We are supposing Assumption 2.A.  Consequently, using Lemma 6.1 in \cite{Ichinose 1999}, we can find a potential $(V', A')$ satisfying \eqref{2.5} and \eqref{2.6}.  From Theorem 2.2 we have \eqref{4'.7} with $a = 0$ for $\kdelta'(t,0)f$ and $U'(t,0)f$ with this potential $(V', A')$.  Let $(V, A)$ be an arbitrary potential stated in Theorem 2.1.  Then, from the proof of Theorem in \cite{Ichinose 1999} on p.1023 we can find a real-valued function $\psi(t,x)$ with continuous $\partial_{x_{j}}\partial_{x_{k}}\psi$ and $\partial_{t}\partial_{x_{j}}\psi\ (j,k = 1,2,\dots,d)$ in $\domain$ satisfying \eqref{2.10}.  Then from Theorem 2.2 we have
\begin{equation} \label{4'.10}
\kdelta(t,0)f = e^{-i\psi(t,\cdot)}\kdelta'(t,0)e^{i\psi(s,\cdot)}f,
\end{equation}
which shows 
\begin{equation} \label{4'.11}
\lim_{|\Delta|\to 0}\kdelta(t,0)f = e^{-i\psi(t,\cdot)}U'(t,0)e^{i\psi(s,\cdot)}f = U(t,0)f \quad \text{in}\ L^{2}
\end{equation}
 for $f \in L^{2}$ because of $ U(t,0)f =  e^{-i\psi(t,\cdot)}U'(t,0)e^{i\psi(s,\cdot)}f $.  
 \par
 	Next consider the gauge transformation 
\begin{equation*}  
    V'' = V -\frac{\partial\varphi}{\partial  t}, \quad  A''_j = A_j + \frac{\partial\varphi}{\partial  x_j}\quad (j= 1,2,\dots,d)
\end{equation*}
stated in Theorem 2.1.  Then we have
\begin{equation*}  
    V' = V'' -\frac{\partial (\psi -\varphi)}{\partial  t}, \quad  A'_j = A''_j + \frac{\partial(\psi -\varphi)}{\partial  x_j}
\end{equation*}
together with \eqref{2.10}. Hence from \eqref{4'.10} we have
\begin{align*} 
\kdelta''(t,0)f & = e^{-i\psi(t,\cdot)+i\varphi(t,\cdot)}\kdelta'(t,0)e^{i\psi(s,\cdot)-i\varphi(s,\cdot)}f \\
& = e^{i\varphi(t,\cdot)}\kdelta(t,0)e^{-i\varphi(s,\cdot)}f,
\end{align*}
which shows \eqref{2.11}.  Thus we could complete the proof of Theorem 2.1.
%

%
%
%
%
\section{Proofs of Theorems 2.3 - 2.6}
	Let $C(W)$ be the constant in Assumption 2.D and $W_s(t,x)$ the Hermitian matrix in Theorems 2.3 and 2.4. As in the proofs of Theorems 2.1 and 2.2 we may assume
\begin{equation}   \label{4.1}
 C(W) = 0,\quad W_s(t,x) \geq 0
 \end{equation}
 in the proofs of Theorems 2.3 and 2.4, because we are assuming \eqref{1.11}.
 \par
 	Using $\mathcal{F}(t,s;\qts)$ defined by the solution to \eqref{2.13}, we define 
\begin{equation}  \label{4.2}
\mathcal{C}_s(t,s)f =
        \begin{cases}
            \begin{split}
              & \sqrt{m/(2\pi i\rho)}^{\ d}
                  \int \bigl(\exp iS_w(t,s; q^{t,s}_{x,y})\bigr) \\
         &\quad \times    \mathcal{F}(t,s;\qts)f(y)dy,\quad  s < t,
                      \end{split}
                  \\
         f, \quad s = t
        \end{cases}
\end{equation}
for $f \in \Czerospace^l$, which is corresponding to $\mathcal{C}(t,s)$ defined by \eqref{3.7}.  Then we can write $\ksdelta(t,0)f$ defined by \eqref{2.15} as
\begin{equation}  \label{4.3}
     \ksdelta(t,0)f = \limepsilon{\cal C}_s(t,\tau_{\nu-1})\chi(\epsilon\cdot){\cal C}_s(\tau_{\nu-1},\tau_{\nu-2})\chi(\epsilon\cdot)\cdots \chi(\epsilon\cdot)
{\cal C}_s(\tau_1,0)f
\end{equation}
for $f \in \Czerospace^l$ in the same way as we did \eqref{3.8}, using 
\begin{align} \label{4.4}
& \mathcal{F}(t,0;\qdelta) \notag \\ 
& = \mathcal{F}\bigl(t,\tau_{\nu-1};q^{t,\tau_{\nu-1}}_{x,x^{(\nu-1)}}\bigr)\mathcal{F}\bigl(\tau_{\nu-1},\tau_{\nu-2};q^{\tau_{\nu-1},\tau_{\nu-2}}_{x^{(\nu-1)},x^{(\nu-2)}}\bigr) \cdots 
  \mathcal{F}\bigl(\tau_{1},0;q^{\tau_{1},0}_{x^{(1)},x^{(0)}}\bigr)
\end{align}
which has been easily proved in Lemma 2.1 of \cite{Ichinose 2007}.
\begin{lem}
(1)  Assume $W_s(t,x) \geq 0$ in $\domain$.  Let $q(\theta) \in \bR^d\ (s \leq \theta \leq t)$ be a continuous path.  Then we have
\begin{equation} \label{4.5}
0 \leq \mathcal{F}(t,s;q)^{\dag}\mathcal{F}(t,s;q) \leq 1,
\end{equation}
\begin{equation} \label{4.6}
\sum_{i=1}^l |\mathcal{F}_{ij}(t,s;q)|^2 \leq 1,\ j = 1,2,\dots,l,
\end{equation}
where $\mathcal{F}_{ij}(t,s;q)$ denotes the $(i,j)$-component of $\mathcal{F}(t,s;q)$. (2) Assume $W_s(t,x) \geq 0$ in $\domain$ and 
\begin{equation} \label{4.7}
 |\partial_{x}^{\alpha} h_{s}(t,x)| \leq C_{\alpha},\ |\alpha| \geq 1,
\end{equation}
\begin{equation} \label{4.8}
|\partial_{x}^{\alpha} w_{s}(t,x)| \leq C_{\alpha},\ |\alpha| \geq 1
\end{equation}
in $\domain$, where $|\Omega|$ denotes the Hilbert-Schmidt norm $\bigl( \sum_{i,j=1}^l|\Omega_{ij}|^2\bigr)^{1/2}$ of  a matrix $\Omega =(\Omega_{ij};i\downarrow j \rightarrow 1,2,\dots,l)$. Then we have 
\begin{equation} \label{4.9}
|\partial_{x}^{\alpha}\partial_{y}^{\beta} \mathcal{F}(t',s';\qts)| \leq C_{\alpha\beta}(t' - s'),\ |\alpha+ \beta| \geq 1
\end{equation}
for $0 \leq s \leq s' \leq t' \leq t \leq T$ and $(x,y) \in \bR^{2d}$.
\end{lem}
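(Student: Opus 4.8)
The plan is to read off both parts from the linear matrix ODE \eqref{2.13} governing $\mathcal{U}(\theta) = \mathcal{F}(\theta,s;q)$, exploiting the Hermiticity $H_s^{\dag}=H_s$, $W_s^{\dag}=W_s$ together with the sign condition $W_s\geq 0$.

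For part (1) I would differentiate the Hermitian matrix $\mathcal{U}(\theta)^{\dag}\mathcal{U}(\theta)$. Using \eqref{2.13} and the Hermiticity of $H_s,W_s$, the $iH_s$ contributions cancel and one finds
\[
\frac{d}{d\theta}\bigl(\mathcal{U}(\theta)^{\dag}\mathcal{U}(\theta)\bigr) = -2\,\mathcal{U}(\theta)^{\dag}W_s(\theta,q(\theta))\mathcal{U}(\theta) \leq 0,
\]
the right-hand side being negative semidefinite because $W_s\geq 0$. Since $\mathcal{U}(s)^{\dag}\mathcal{U}(s)=I$, the matrix $\mathcal{U}(\theta)^{\dag}\mathcal{U}(\theta)$ is non-increasing in the order of Hermitian matrices, whence $0\leq \mathcal{F}^{\dag}\mathcal{F}\leq I$; the lower bound is automatic from positive semidefiniteness, giving \eqref{4.5}. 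Then \eqref{4.6} follows at once, since $\sum_{i=1}^l|\mathcal{F}_{ij}|^2=(\mathcal{F}^{\dag}\mathcal{F})_{jj}=e_j^{\dag}(\mathcal{F}^{\dag}\mathcal{F})e_j\leq e_j^{\dag}Ie_j=1$ for the standard basis vector $e_j$.

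For part (2) the key observation is that the same computation applied on any subinterval shows the transition matrix $\Psi(\theta,\sigma)$ — the solution of \eqref{2.13} with $\mathcal{U}(\sigma)=I$ — is a contraction, $\|\Psi(\theta,\sigma)\|\leq 1$ for $s'\leq\sigma\leq\theta\leq t'$. I would then estimate the parameter derivatives of $\mathcal{U}$ by induction on $|\alpha+\beta|$. Writing $B(\theta)=-\{iH_s(\theta,\qts(\theta))+W_s(\theta,\qts(\theta))\}$ and noting that $\qts(\theta)$ is affine in $(x,y)$ with first-order coefficients bounded by $1$, the chain rule together with \eqref{4.7} and \eqref{4.8} shows that every derivative $\partial_x^{\alpha'}\partial_y^{\beta'}B(\theta)$ with $|\alpha'+\beta'|\geq 1$ is bounded uniformly in $\theta,x,y$. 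Differentiating \eqref{2.13}, using $\partial_x^{\alpha}\partial_y^{\beta}\mathcal{U}(s')=0$ for $|\alpha+\beta|\geq 1$, and applying Duhamel's formula for the inhomogeneous equation $\dot{V}=BV+G$ gives
\[
\partial_x^{\alpha}\partial_y^{\beta}\mathcal{U}(\theta) = \int_{s'}^{\theta}\Psi(\theta,\sigma)\,G(\sigma)\,d\sigma,
\]
where by the Leibniz rule the source $G$ collects the terms in which at least one derivative falls on $B$; each such term is a product of a bounded derivative of $B$ and a derivative of $\mathcal{U}$ of strictly lower order, hence bounded by $1$ when that order is $0$ and by $C_{\alpha\beta}(t'-s')$ by the induction hypothesis otherwise. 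Thus $|G(\sigma)|\leq C_{\alpha\beta}$, and since $\|\Psi(\theta,\sigma)\|\leq 1$ the integral over $[s',\theta]\subseteq[s',t']$ produces exactly the factor $(t'-s')$, yielding \eqref{4.9} (the passage between operator and Hilbert-Schmidt norms only changes the constants).

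The main obstacle is part (2). The hypotheses \eqref{4.7} and \eqref{4.8} bound only the \emph{derivatives} of $H_s$ and $W_s$, not the matrices themselves, so a naive Gronwall estimate — which would need control of $\|B(\theta)\|$ inside an exponential — is unavailable. The resolution is precisely the contraction property $\|\Psi(\theta,\sigma)\|\leq 1$ supplied by part (1): it lets me replace the fundamental solution by a contraction in the variation-of-parameters formula, so that only the bounded derivatives of $B$ and the smallness of the integration interval enter, and the factor $(t'-s')$ emerges with no exponential growth.
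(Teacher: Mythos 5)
Your proposal is correct and follows essentially the same route as the paper: part (1) is the identical computation $\frac{d}{d\theta}\bigl(\mathcal{U}^{\dag}\mathcal{U}\bigr)=-2\,\mathcal{U}^{\dag}W_s\mathcal{U}\leq 0$ with $\mathcal{U}(s)=I$, and your Duhamel formula for part (2) is exactly the paper's identity (4.11), with the contraction property of the transition matrices (the paper invokes it through \eqref{4.6}) and induction on the order of the derivative playing the same role. No gaps.
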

\begin{proof}
(1) We set $\mathcal{U}(t) = \mathcal{F}(t,s;q)$.  From \eqref{2.13} we have
\begin{align} \label{4.10}
& \frac{d}{d\theta}\, \mathcal{U}(\theta)^{\dag}\,\mathcal{U}(\theta) = \mathcal{U}(\theta)^{\dag}\bigl\{iH_s(\theta,q(\theta)) - W_s(\theta,q(\theta))\bigr\}\mathcal{U}(\theta)
\notag \\ 
&\quad - \mathcal{U}(\theta)^{\dag}\bigl\{iH_s(\theta,q(\theta)) + W_s(\theta,q(\theta))\bigr\}\mathcal{U}(\theta) 
\notag \\
& = -2 \mathcal{U}(\theta)^{\dag}W_s(\theta,q(\theta))\mathcal{U}(\theta) \leq 0.
\end{align}
Hence we have \eqref{4.5} because of $\mathcal{U}(s) = I$.  Taking $e_1 = {}^t(1,0,\dots,0) \in \bR^l$, from \eqref{4.5} we have
\begin{equation*} 
1 \geq \, <\mathcal{U}(t)^{\dag}\,\mathcal{U}(t)e_1,e_1>\, = \sum_{i=1}^l |\mathcal{F}_{i1}(t,s;q)|^2,
\end{equation*}
where $<\cdot,\cdot>$ denotes the usual inner product of $\bR^l$.  In the same way we can prove \eqref{4.6}. 
\par
(2) From \eqref{2.13} we can easily see 
\begin{align} \label{4.11}
   & \frac{\partial}{\partial x_j}\mathcal{F} (t',s';\qts)  = -\int_{s'}^{t'} \mathcal{F} (t',\theta;\qts)
   \left[\frac{\partial}{\partial x_j}\bigl\{
iH_s(\theta,\qts(\theta)) +  W_s(\theta,\qts(\theta)) \bigr\}\right]
\notag \\
& \qquad 
\times \mathcal{F} (\theta,s';\qts)d\theta
\end{align}
(cf. (3.3) in \cite{Ichinose 2007}).  Then, noting \eqref{3.1}, from \eqref{4.6} - \eqref{4.8} we can prove 
\begin{equation*} 
|\partial_{x_j} \mathcal{F}(t',s';\qts)| \leq C(t' - s')
\end{equation*}
with a constant $C \geq 0$. In the same way we can prove \eqref{4.9} from \eqref{4.11} by induction.
\end{proof}
\begin{lem}
Assume $W_s(t,x) \geq 0$, \eqref{1.11} and \eqref{2.16}.  Then we have
\begin{equation} \label{4.12}
\left|\partial_x^{\alpha}\partial_y^{\beta}\left\{ \calF 
(t,s;q^{t,s}_{x,y})- I \right\}
\right| \leq C_{\alpha,\beta} (t - s)
\end{equation}
in $0 \leq s \leq t \leq T$ and $(x,y) \in \bR^{2d}$ for all $\alpha$ and $\beta$.
\end{lem}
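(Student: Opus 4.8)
The plan is to reduce Lemma 4.2 to the two parts of Lemma 4.1 already established, splitting according to whether $\alpha=\beta=0$ or $|\alpha+\beta|\geq 1$. I would note first that the present hypotheses are stronger than those needed for Lemma 4.1\,(2): the estimates \eqref{4.7} and \eqref{4.8} are exactly the restrictions of \eqref{2.16} and \eqref{1.11} to $|\alpha|\geq 1$, while \eqref{2.16} and \eqref{1.11} at $\alpha=0$ additionally give that $H_s$ and $W_s$ are themselves bounded on $\domain$.

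For $|\alpha+\beta|\geq 1$ there is nothing new to do. Since $\partial_x^{\alpha}\partial_y^{\beta}I=0$ in this range, one has $\partial_x^{\alpha}\partial_y^{\beta}\{\calF(t,s;\qts)-I\}=\partial_x^{\alpha}\partial_y^{\beta}\calF(t,s;\qts)$, and the required bound is precisely \eqref{4.9} of Lemma 4.1\,(2) read on the full interval, i.e.\ at $s'=s,\ t'=t$.

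The only genuinely new case is $\alpha=\beta=0$, which I would treat directly from the integral form of \eqref{2.13},
\begin{equation*}
\calF(t,s;\qts)-I = -\int_s^t\bigl\{iH_s(\theta,\qts(\theta))+W_s(\theta,\qts(\theta))\bigr\}\calF(\theta,s;\qts)\,d\theta .
\end{equation*}
Taking Hilbert--Schmidt norms and using submultiplicativity, the integrand is bounded by $|iH_s+W_s|\,|\calF(\theta,s;\qts)|$; the first factor is uniformly bounded by the $\alpha=0$ cases of \eqref{2.16} and \eqref{1.11}, and the second by $\sqrt{l}$ on account of \eqref{4.6} (each column of $\calF$ has Euclidean length at most $1$, so $|\calF|\leq\sqrt{l}$). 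Integrating this constant bound over $[s,t]$ produces the factor $t-s$, giving $|\calF(t,s;\qts)-I|\leq C(t-s)$ uniformly in $(x,y)$.

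There is essentially no obstacle here, since the substantive work was already done in Lemma 4.1; Lemma 4.2 merely packages part (2) together with the new $\alpha=\beta=0$ bound. The one point needing care is to carry out all estimates in the Hilbert--Schmidt norm fixed in \S 4 and to invoke the uniform column bound \eqref{4.6} (equivalently the operator-norm bound $\calF^{\dag}\calF\leq 1$ from \eqref{4.5}) so that the constant $C_{\alpha\beta}$ is independent of $(x,y)$ and of the interval length.
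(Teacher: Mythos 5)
Your proposal is correct and follows essentially the same route as the paper: the case $|\alpha+\beta|\geq 1$ is read off from \eqref{4.9}, and the case $\alpha=\beta=0$ is obtained by integrating \eqref{2.13}, bounding the integrand via the uniform bounds \eqref{1.11}, \eqref{2.16} and the column estimate \eqref{4.6}. Your additional remarks (that \eqref{4.7}--\eqref{4.8} are implied by \eqref{2.16} and \eqref{1.11}, and that \eqref{4.6} yields $|\calF|\leq\sqrt{l}$ in the Hilbert--Schmidt norm) merely make explicit what the paper leaves implicit.
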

\begin{proof}
From \eqref{2.13} we have
\begin{equation*} 
\calF (t',s;q^{t,s}_{x,y})- I =
- \int_{s}^{t'}\bigl\{ iH_s(\theta,\qts (\theta)) + W_s(\theta,\qts (\theta))\bigr\}\calF 
(\theta,s;q^{t,s}_{x,y})d\theta.
\end{equation*}
Hence, by \eqref{1.11} , \eqref{2.16} and \eqref{4.6} we see
\begin{equation*} 
\left|\calF 
(t,s;q^{t,s}_{x,y})- I 
\right| \leq C (t - s)
\end{equation*}
in $0 \leq s \leq t \leq T$ and $(x,y) \in \bR^{2d}$ with a constant $C \geq 0$.  The inequalities \eqref{4.12} for $|\alpha + \beta| \geq 1$ follow from \eqref{4.9}.
\end{proof}
\begin{pro}
Besides the assumptions of Proposition 3.5 we suppose $W_s(t,x) \geq 0$,  \eqref{1.11} and \eqref{2.16}.  Let $\rho^* > 0$ be the constant determined in Lemma 3.3 and $\mathcal{C}_s(t,s)$ the operator defined by \eqref{4.2}.  Then there exists a constant $K'_0 \geq 0$ such that
\begin{equation}   \label{4.13}
 \Vert \Cts_s(\ts)f\Vert \leq e^{K'_0(t-s)}\Vert f \Vert, \quad 0 \leq t - s \leq \rho^*
\end{equation}
           for all $f \in (L^2)^l$.
\end{pro}
\begin{proof}
Using $\Cts(t,s)$ defined by \eqref{3.7}, we write
\begin{align} \label{4.14}
& \Cts_s(t,s) f = \Cts(t,s)f +
\sqrt{\frac{m}{2\pi i\rho}}^{\, d} \int \bigl(\exp iS_w(t,s; 
q^{t,s}_{x,y})\bigr)
  \notag \\
& \quad \times  
\left\{\calF (t,s;\qts) - I\right\}f(y)dy
\end{align}
for $f \in \Sspace^l$.   Noting \eqref{4.12}, from Proposition 3.6 we see that the $L^2$-norm of the second term on the right-hand side of \eqref{4.14} is bounded by $C(t-s)\Vert f \Vert$ from above with a constant $C \geq 0$. Proposition 3.5 is showing \eqref{3.23}. Hence we have
\begin{align*}
\Vert \Cts_s(t,s) f \Vert &
\leq \mathrm{e}^{K_0(t - s)}\Vert f \Vert + C(t - s)\Vert 
f \Vert \\
& \leq \mathrm{e}^{(K_0+C)(t - s)}\Vert f \Vert, \quad 0 \leq t - s \leq \rho^*.
\end{align*}
Consequently, we can prove \eqref{4.13} with a constant $K'_0 \geq 0$.
\end{proof}
\begin{lem}
Besides the assumptions of Lemma 4.1 we assume $W_s(t,x) \geq 0$ and \eqref{4.7} - \eqref{4.8}.  Then, there exist functions $r_{ij}(t,s;x,w)\ (i,j = 1,2,\dots,l)$ satisfying \eqref{3.5} for an integer $M \geq 0$ such that $\partial_w^{\alpha}\partial_x^{\beta}r_{ij}(t,s;x,w)$ are continuous in $0 \leq s \leq t \leq T$ and $(x,w) \in \bR^{2d}$ for all $\alpha$ and $\beta$, and we have
\begin{align}  \label{4.15}
     & \left\{i\frac{\partial}{\partial t} - H_w(t) -H_s(t,x)  + iW_s(t,x)\right\}\Cts_s(\ts) f
     \notag \\
& = \sqrt{t - s} \Bigl(R_{ij}(\ts);i\downarrow j\rightarrow 1,2,\dots,l \Bigr)f \equiv \sqrt{t - s} R(t,s)f
\end{align}
for $f \in \Czerospace^l$, where $R_{ij}(\ts)$ are the operators defined by \eqref{3.6}.
\end{lem}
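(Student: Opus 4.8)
The plan is to exploit that, by \eqref{4.2} and \eqref{3.7}, the integral kernel of $\mathcal{C}_s(t,s)$ factors as $G(t,s;x,y)\,\mathcal{F}(t,s;\qts)$, where $G(t,s;x,y)=\sqrt{m/(2\pi i\rho)}^{\,d}\exp iS_w(t,s;\qts)$ is the scalar kernel of $\Cts(t,s)$ and $\rho=t-s$. Since $H(t)-iW(t,x)=H_w(t)$, the differential operator in \eqref{4.15} is $i\partial_t-H_w(t)-H_s(t,x)+iW_s(t,x)$. First I would apply it to the product $G\mathcal{F}$ and expand by the Leibniz rule. Because $G$ is scalar while $\mathcal{F},H_s,W_s$ are matrices and $H_w(t)$ consists only of the scalar magnetic Laplacian $\frac{1}{2m}\sum_j(\frac1i\partial_{x_j}-A_j)^2$ and the multiplication by $V-iW$, the expansion splits into a group carrying all $x$-derivatives on $G$, a group $G[i\partial_t\mathcal{F}-(H_s-iW_s)\mathcal{F}]$, and two cross-groups in which at least one $x$-derivative falls on $\mathcal{F}$.

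The all-on-$G$ group, together with $(\partial_tG)\mathcal{F}$ and $-(V-iW)G\mathcal{F}$, assembles precisely into $[\{i\partial_t-H_w(t)\}G]\mathcal{F}$, which by Lemma 3.7 equals $\sqrt{t-s}$ times a scalar remainder symbol $\tilde r(t,s;x,w)$ obeying \eqref{3.5}, multiplied on the right by $\mathcal{F}$; by \eqref{4.5}--\eqref{4.6} and \eqref{4.9} the factor $\mathcal{F}$ and its $x,y$-derivatives are bounded, so $\tilde r(t,s;x,w)\mathcal{F}$ again satisfies \eqref{3.5} with the same $M$. The decisive computation is the second group. Here I would differentiate $\mathcal{F}(t,s;\qts)$ in $t$, noting that $t$ enters both as the terminal time of the ODE \eqref{2.13} and through the straight line $\qts$. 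Since $\qts(t)=x$, the terminal contribution is exactly $-\{iH_s(t,x)+W_s(t,x)\}\mathcal{F}$, so multiplying by $i$ it cancels $(H_s(t,x)-iW_s(t,x))\mathcal{F}$ identically, leaving only the path-variation term. Representing the latter by Duhamel, using $\partial_t\qts(\theta)=-\frac{\theta-s}{(t-s)^2}(x-y)$, the bounds \eqref{4.7}--\eqref{4.8} on $\partial_xH_s,\partial_xW_s$, and the contraction estimate $|\mathcal{F}|\le1$ from \eqref{4.5}--\eqref{4.6}, I would bound this term by $O(|x-y|)=O(\sqrt{\rho}\,|w|)$ with $w=(x-y)/\sqrt{\rho}$; thus $G$ times it is $\sqrt{t-s}$ times a kernel of the form \eqref{3.6} with $M=1$.

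For the two cross-groups, $-\frac{1}{m}\sum_j\frac1i\bigl((\frac1i\partial_{x_j}-A_j)G\bigr)\partial_{x_j}\mathcal{F}$ and $\frac{1}{2m}\sum_jG\,\partial_{x_j}^2\mathcal{F}$, estimate \eqref{4.9} gives $\partial_{x_j}\mathcal{F}=O(\rho)$ and $\partial_{x_j}^2\mathcal{F}=O(\rho)$, while differentiating $G$ once produces the symbol factor $\partial_{x_j}S_w-A_j$, which in the variables $(x,w)$ equals $mw_j/\sqrt{\rho}$ plus a bounded term; hence each cross-group is again $\sqrt{t-s}$ times a kernel of type \eqref{3.6}. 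Collecting the four contributions, dividing out $\sqrt{t-s}$, and rewriting every occurrence of $\mathcal{F}$ and its derivatives in the $(x,w)$ variables via $y=x-\sqrt{\rho}w$, I would read off the matrix symbol $r_{ij}(t,s;x,w)$; the chain-rule factors $\sqrt{\rho}$ together with \eqref{4.9} ensure that $\partial_w^\alpha\partial_x^\beta r_{ij}$ satisfy \eqref{3.5} for a suitable $M\ge0$ and are continuous on $0\le s\le t\le T$, $(x,w)\in\bR^{2d}$. This yields \eqref{4.15}.

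The step I expect to be the main obstacle is the $t$-differentiation of $\mathcal{F}(t,s;\qts)$: both the terminal time and the stretching of the path depend on $t$, and pointwise $\partial_t\qts(\theta)$ is only $O(|x-y|/(t-s))$, i.e.\ $O(|w|/\sqrt{\rho})$ after passing to $w$. The two features that rescue the estimate are the exact cancellation of the terminal term against $(H_s-iW_s)\mathcal{F}$ granted by $\qts(t)=x$, and the $\theta$-integration in the Duhamel formula, $\int_s^t\frac{\theta-s}{(t-s)^2}\,d\theta=\tfrac12$, which absorbs the singular factor and restores the benign bound $O(\sqrt{\rho}\,|w|)$.
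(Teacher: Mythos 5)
Your proposal is correct and follows essentially the same route as the paper, which simply invokes the proof of Proposition 3.5 of [Ichinose 2007] after replacing $V$ by $V-iW$ and $H_s$ by $H_s-iW_s$ and noting that \eqref{4.6} and \eqref{4.9} still hold; your Leibniz splitting of the kernel $G\mathcal{F}$, the use of Lemma 3.7 for the scalar block, the cancellation of the terminal-time derivative of $\mathcal{F}(t,s;q^{t,s}_{x,y})$ against $(H_s-iW_s)\mathcal{F}$, and the Duhamel bound $O(\sqrt{\rho}\,|w|)$ for the path-variation term are exactly the ingredients that cited proof supplies.
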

\begin{proof}
We note that  \eqref{4.6} and \eqref{4.9} hold under our assumptions.  Consequently, replacing $V(t,x)$ and $H_s(t,x)$ with $V(t,x) - iW(t,s)$ and $H_s(t,x) - iW_s(t,x)$, respectively in the proof of Proposition 3.5 of \cite{Ichinose 2007}, we can complete the proof of Lemma 6.4. In particular, see (3.21)  and (3.22) of \cite{Ichinose 2007}.
\end{proof}
\begin{pro}
Besides the assumptions of Proposition 4.2, we suppose $W_s(t,s) \geq 0$ and \eqref{4.7} - \eqref{4.8}.  Then, there exist functions $r_{ij}(t,s;x,w)\ (i,j = 1,2,\dots,l)$ satisfying the properties stated in Lemma 6.4 and we have
\begin{equation} \label{4.16}
\Vert R(t,s)f\Vert_a \leq C_a\Vert f\Vert_{M+aM_1}, \quad 0 \leq t - s \leq \rho^*
\end{equation}
for $a = 0, 1,2,\dots$ and all $f \in (B^{M+aM_1})^l$, where $M_1$ is the integer in \eqref{3.22}.
\end{pro}
\begin{proof}
As in the proof of Proposition 4.2, we can easily see that the assumptions of Lemma 6.4 hold.  Hence, using Lemma 6.4, from Proposition 3.6 we can prove \eqref{4.16}.
\end{proof}
\begin{pro}
Besides the assumptions of Theorem 2.2 we suppose \eqref{1.11}, \eqref{2.16} and \eqref{4.1}.  Then, for $a = 0, 1, 2, \dots$ there exist constants $K'_a \geq 0$ such that 
\begin{equation} \label{4.17}
\Vert C_s(t,s)f\Vert_a \leq e^{K'_a(t - s)}\Vert f\Vert_{a}, \quad 0 \leq t - s \leq \rho^*
\end{equation}
for all $f \in (B^{a})^l$.
\end{pro}
\begin{proof}
We note that \eqref{4.12} hold.  Then, applying Proposition 3.6 as $M_1 = 1$ to the second term on the right-hand side of \eqref{4.14}, its $B^a$-norm is bounded by $C_a(t - s)\Vert f \Vert_a$ from above  with a constant $C_a \geq 0$ for $0 \leq t - s \leq \rho^*$.  Hence, using \eqref{3.37}, from \eqref{4.14} we can prove 
\begin{align*} 
\Vert C_s(t,s)f\Vert_a &\leq e^{K_a(t - s)}\Vert f\Vert_{a} + C_a(t - s)\Vert f\Vert_{a}  
\\ & \leq e^{(K_a + C_a)(t - s)}\Vert f\Vert_{a},
\quad 0 \leq t - s \leq \rho^*,
\end{align*}
which shows \eqref{4.17}.
\end{proof}
\par
\vspace{0.5cm}
{\bf Proofs of Theorems 2.3 and 2.4.}  Since we are assuming \eqref{1.11} and \eqref{2.16}, we obtain the same results as in Theorem 4.5 for the equation \eqref{1.10} from (1) of Theorem 2.1 of \cite{Ichinose 2019}. We write the solution to \eqref{1.10} with $u(s) = f$ as $U_s(t,s)f$. Then, using Proposition 6.5, we can prove
 \begin{equation} \label{6.18}
 \Vert \Cts_s(\ts)f - U_s(t,s)f\Vert_a \leq  C_a\rho^{3/2}\Vert f\Vert_{M+a},\quad  0\leq  t-s \leq \rho^*
 \end{equation}
 for $a = 0,1,2,\dots$ as in the proof of \eqref{3.41}.  Thereby, following the proofs of Theorems 2.1 and 2.2 in Sect. 5, we can complete the proofs of Theorems 2.3 and 2.4 together with \eqref{4.13} and \eqref{4.17}.
\par
\vspace{0.5cm}
{\bf Proofs of Theorems 2.5 and 2.6.}
We may assume $W_{j}(t,\bx_{j}) \geq  0$ and $W_{sj}(t,\bx_j) \geq 0\ (j = 1,2,\dots,N)$ without loss of generality as in the proofs of Theorems 2.3 and 2.4.
For a continuous path $q(\theta) = \bigl(\bq_1(\theta),\dots,
\bq_N(\theta) \bigr) \in \bR^{dN}\ (s \leq \theta \leq t)$ we define $\mathcal{F}^{\sharp}(\theta,s;q)\ (s \leq \theta \leq t)$
by the solution to 
\begin{align} \label{4.18}
&\frac{d}{d\theta}\,\mathcal{U}^{\sharp}(\theta) = 
- \biggl[\sum_{j=1}^N  I_1\otimes \cdots \otimes I_{j-1}\otimes \bigl\{iH_{sj}(\theta,\bq_j(\theta)) +  W_{sj}(\theta,\bq_j(\theta))\bigl\} \notag\\
& \quad   \otimes I_{j+1}\otimes \cdots\otimes I_N 
\biggr]\mathcal{U}^{\sharp}(\theta), \quad \mathcal{U}^{\sharp}(s) = I_1\otimes \cdots \otimes I_N
\end{align}
in the same way as we do $\mathcal{F}(\theta,s;q)$  from \eqref{2.13}.
\par
	We consider $\mathcal{F}_j(\theta,s;\bq_{j})$ in \eqref{2.19}.  Then from the simple properties of the tensor products (cf. 4.2.1 and 4.2.10 in \S 4.2 of \cite{Horn-Johnson}, and \S VIII.10 of \cite{Reed-Simon I}) 
	 we can easily have
\begin{align} \label{6.20}
& \frac{d}{d\theta}\, \mathcal{F}_{1}(\theta,s;\mathbf{q}_1) \otimes \cdots \otimes  \mathcal{F}_{N}(\theta,s;\mathbf{q}_N) 
 = \sum_{j=1}^N \mathcal{F}_{1}(\theta,s;\mathbf{q}_1)\otimes \cdots \otimes \mathcal{F}_{j-1}(\theta,s;\mathbf{q}_{j-1}) \notag \\
&  \otimes \frac{d}{d\theta}\, \mathcal{F}_{j}(\theta,s;\mathbf{q}_j) \otimes \mathcal{F}_{j+1}(\theta,s;\mathbf{q}_{j+1})\otimes \cdots  \mathcal{F}_{N}(\theta,s;\mathbf{q}_N) = -\sum_{j=1}^N \mathcal{F}_{1}(\theta,s;\mathbf{q}_1)\otimes \notag\\
&\cdots \otimes \mathcal{F}_{j-1}(\theta,s;\mathbf{q}_{j-1})\otimes \bigl\{iH_{sj}(\theta,\bq_{j}(\theta)) + W_{sj}(\theta,\bq_{j}(\theta))\bigr\}\mathcal{F}_{j}(\theta,s;\mathbf{q}_j)
\notag\\
& \otimes\mathcal{F}_{j+1}(\theta,s;\mathbf{q}_{j+1})\otimes \cdots  \mathcal{F}_{N}(\theta,s;\mathbf{q}_N) 
=  -\sum_{j=1}^N \Bigl[I_{1}\otimes \cdots\otimes I_{j-1}\otimes \bigl\{iH_{sj}(\theta,\bq_{j}(\theta)) \notag\\
& + W_{sj}(\theta,\bq_{j}(\theta))\bigr\}\otimes I_{j+1}\otimes \cdots\otimes I_{N}\Bigr]\mathcal{F}_{1}(\theta,s;\mathbf{q}_1) \otimes \cdots \otimes  \mathcal{F}_{N}(\theta,s;\mathbf{q}_N).
\end{align}
Consequenlty we have
\begin{equation} \label{4.19}
\mathcal{F}^{\sharp}(\theta,s;q) = \mathcal{F}_1(\theta,s;\bq_{1})\otimes \cdots \otimes \mathcal{F}_N(\theta,s;\bq_{N}),
\end{equation}
which follows from uniqueness of the solutions to \eqref{4.18}.  Hence we can write \eqref{2.19} as 
\begin{equation} \label{6.22}
\exp *iS^{\sharp}_{sw}(t,0;\qdelta) = \bigl(\exp iS^{\sharp}_{w}(t,0;\qdelta)\bigr) \mathcal{F}^{\sharp}(\theta,s;\qdelta),
\end{equation}
which corresponds to \eqref{2.14} for one particle system.
We set 
\begin{align}  \label{6.23}
& H^{\sharp}_s(t,x) := \sum_{j=1}^N I_1\otimes \cdots \otimes I_{j-1} 
  \otimes H_{sj}(t,\bx_j)\otimes I_{j+1}\otimes \cdots\otimes I_N, \\
& W^{\sharp}_s(t,x) := \sum_{j=1}^N I_1\otimes \cdots \otimes I_{j-1} 
  \otimes W_{sj}(t,\bx_j)\otimes I_{j+1}\otimes \cdots\otimes I_N,   \\
  & W^{\sharp}(t,x) =  \sum_{j=1}^NW_{j}(t,\bx_{j}).
\end{align}
Then we can write \eqref{2.18} in the form of \eqref{1.9} as
\begin{equation} \label{6.26}
\mathcal{L}^{\sharp}_{sw}(t,x,\dot{x}) = \mathcal{L}^{\sharp}_{w}(t,x,\dot{x}) -  H^{\sharp}_s(t,x) +  iW^{\sharp}_s(t,x).
\end{equation}
We can easily see that both of  $H^{\sharp}_s(t,x)$ and $W^{\sharp}_s(t,x)$ are written as $l^{N}\times l^{N}$ Hermitian matrices (cf. 4.2.5 in \S 4.2 of \cite{Horn-Johnson} and \S VIII.10 of \cite{Reed-Simon I}) and satisfy \eqref{1.11},  \eqref{2.16} and $W^{\sharp}_{s}(t,x) \geq 0$.  Hence we can obtain the same results as in Lemma 6.1 for $\mathcal{F}^{\sharp}(t,s;q)$.
\par
	We consider $C_{w^{\sharp}}(t,s;x,y)$ defined by \eqref{3.9} where $W = W^{\sharp}(t,x)$.  Then we can easily see the same estimates as in (3.19) for $C_{w^{\sharp}}(t,s;x,y)$ because of $C_{w^{\sharp}}(t,s;x,y) = \prod_{j=1}^{N}C_{w_{j}}(t,s;\bx_{j},\mathbf{y}_{j})$ and $W_{j}(t,\bx_{j}) \geq 0\ (j = 1,2,\dots,N)$.  We also note that $W^{\sharp}(t,x)$ satisfies \eqref{2.7} and \eqref{2.9}.  Hence, using the results stated above for $\mathcal{F}^{\sharp}(t,s;q)$ and $C_{w^{\sharp}}(t,s;x,y)$ and following the proofs of Theorems 2.5 and 2.6, we can complete the proofs of Theorems 2.5 and 2.6 from \eqref{6.22}.
	\par
  %
{\bf  Data availability statement.} Data sharing is not applicable to this
article as no new data were created or analyzed in this study.
 %
%
 \hspace{5cm}Department of Mathematics, Shinshu University,\\
\hspace{5cm}Matsumoto 390-8621, Japan. \\
 \hspace{5cm}E-mail: ichinose@math.shinshu-u.ac.jp%

\end{document}